\providecommand{\R}{\mathbb{R}}
\providecommand{\C}{\mathbb{C}}
\renewcommand{\C}{\mathbb{C}}
\providecommand{\N}{\mathbb{N}}
\providecommand{\eps}{\varepsilon}
\providecommand{\ran}{\mathrm{ran} \, }
\providecommand{\abs}[1]{\left \lvert #1 \right \rvert}
\providecommand{\sabs}[1]{\lvert #1 \vert}
\providecommand{\babs}[1]{\bigl \lvert #1 \bigr \rvert}
\providecommand{\norm}[1]{\left \lVert #1 \right \rVert}
\providecommand{\snorm}[1]{\lVert #1 \rVert}
\providecommand{\bnorm}[1]{\bigl \lVert #1 \bigr \rVert}
\providecommand{\Bnorm}[1]{\Bigl \lVert #1 \Bigr \rVert}
\providecommand{\dd}{\mathrm{d}}
\providecommand{\id}{\mathrm{id}}
\providecommand{\order}{\mathcal{O}}
\providecommand{\Fourier}{\mathcal{F}}
\providecommand{\ie}{i.~e.~}
\providecommand{\eg}{e.~g.~}
\newtheorem{thm}{Theorem}[section]
\newtheorem{defn}[thm]{Definition}
\newtheorem{lem}[thm]{Lemma}
\newtheorem{cor}[thm]{Corrolary}
\newtheorem{prop}[thm]{Proposition}
\newtheorem{assumption}[thm]{Assumption}
\theoremstyle{remark}
\newtheorem{remark}[thm]{Remark}
\SetMathAlphabet{\mathcal}{normal}{OMS}{cmsy}{m}{n} % fixes ugly \mathcals
\SetMathAlphabet{\mathcal}{bold}{OMS}{cmsy}{m}{n} % fixes ugly \mathcals
\numberwithin{equation}{section}
\providecommand{\Hoer}[1]{S^{#1}}
\providecommand{\Hoerr}[1]{S^{#1}_{\rho}}
\providecommand{\Hoermr}[2]{S^{#1}_{#2}}
\providecommand{\SemiHoer}[1]{\Hoer{#1}_{\mathrm{sr}}}
\providecommand{\Hd}{H_{\mathrm{D}}}
\providecommand{\Weyl}{\sharp}
\providecommand{\WeylProd}{\sharp}
\providecommand{\Op}{\mathrm{Op}}
\providecommand{\heff}{h_{\mathrm{eff}}}
\providecommand{\orderc}[1]{\mathcal{O}(\nicefrac{1}{c^{#1}})}
\providecommand{\heff}{{h_{\mathrm{eff}}}}
\providecommand{\spec}{\sigma}
\providecommand{\magW}{\WeylProd^B}
\providecommand{\Qe}{\mathsf{Q}}
\providecommand{\Fourier}{\mathfrak{F}}
\providecommand{\Cont}{\mathcal{C}}
\providecommand{\Weyl}{\mathrm{W}^M_{\varepsilon}}
\providecommand{\Piref}{\Pi_{\mathrm{ref}}}
\providecommand{\Hfast}{\mathcal{H}_{\mathrm{fast}}}
\providecommand{\Hslow}{\mathcal{H}_{\mathrm{slow}}}
\providecommand{\specrel}{\sigma_{\mathrm{rel}}}
\providecommand{\piref}{\pi_{\mathrm{ref}}}
\providecommand{\Hslow}{\mathcal{H}_{\mathrm{slow}}}
\providecommand{\Hfast}{\mathcal{H}_{\mathrm{fast}}}
\providecommand{\Qe}{{Q_{\eps}}}
\providecommand{\Hoer}[1]{S^{#1}}
\providecommand{\Hoerr}[1]{S^{#1}_{\rho}}
\providecommand{\Hoermr}[2]{S^{#1}_{#2}}
\providecommand{\SemiHoer}[1]{\Hoer{#1}_{\mathrm{sr}}}
\providecommand{\order}{\mathcal{O}}
\providecommand{\Fourier}{\mathfrak{F}}
\providecommand{\Psr}{\mathsf{P}_{\mathrm{sr}}}
\providecommand{\Qsr}{\mathsf{Q}}
\providecommand{\Opsr}{\Op_{\mathrm{sr}}}
\providecommand{\magWsr}{\magW_{\mathrm{sr}}}
\providecommand{\Pnr}{\mathsf{P}_{\mathrm{nr}}}
\providecommand{\Qnr}{\mathsf{Q}}
\providecommand{\Opnr}{\Op_{\mathrm{nr}}}
\providecommand{\magWnr}{\magW_{\mathrm{nr}}}
\providecommand{\Hsr}{H_{\mathrm{sr}}}
\providecommand{\Hnr}{H_{\mathrm{nr}}}
\renewcommand{\C}{\mathbb{C}}
\begin{document}

\date{\today}

\title[Semi- and Non-relativistic Limit of the Dirac Dynamics]{Semi- and Non-relativistic Limit of the Dirac Dynamics with External Fields}

\author{Martin L. R. Fürst}
\address{%
Excellence Cluster Universe \\
Technische Universität München \\
Boltzmannstr. 2\\ 
85748 Garching \\
Germany 
\medskip
\\
\emph{and}
\medskip
\\
Technische Universität München\\ 
Department für Mathematik \\
Boltzmannstraße 3 \\
85747 Garching \\
Germany
}
\email{mfuerst@ma.tum.de}

\author{Max Lein}
\address{%
Eberhard Karls Universität Tübingen\\
Mathematisches Institut \\
Auf der Morgenstelle 10\\
72076 Tübingen\\
Germany
}
\email{lein@ma.tum.de}

\maketitle
\begin{abstract}
	We show how to approximate Dirac dynamics for electronic initial states by semi- and non-relativistic dynamics. To leading order, these are generated by the semi- and non-relativistic Pauli hamiltonian where the kinetic energy is related to $\sqrt{m^2 + \xi^2}$ and $\frac{1}{2m} \xi^2$, respectively. Higher-order corrections can in principle be computed to any order in the small parameter $\nicefrac{v}{c}$ which is the ratio of typical speeds to the speed of light. Our results imply the dynamics for electronic and positronic states decouple to any order in $\nicefrac{v}{c} \ll 1$. 
	
	To decide whether to get semi- or non-relativistic effective dynamics, one needs to choose a scaling for the kinetic momentum operator. Then the effective dynamics are derived using space-adiabatic perturbation theory by Panati et.~al with the novel input of a magnetic pseudodifferential calculus adapted to either the semi- or non-relativistic scaling. 
\end{abstract}

\tableofcontents

%!TEX root = /Users/max/Dropbox/research/non- and semi-relativistic limit of the Dirac equation (FL 2008)/non_semi_rel_Dirac.tex
\section{Introduction} % (fold)
\label{intro}
The quantum dynamics of a relativistic spin-$\nicefrac{1}{2}$ particle subjected to an electric field $\mathbf{E}$ and a magnetic field $\mathbf{B}$ is governed by the Schrödinger-Dirac equation, 
\begin{align}
	i \partial_t \Psi(t) = \widehat{H}_{\mathrm{D}} \Psi(t) 
	, 
	&&
	\Psi(0) = \Psi_0 \in L^2(\R^3,\C^4) 
	, 
	\label{intro:eqn:Dirac_equation}
\end{align}
where 
\begin{align}
	\widehat{H}_{\mathrm{D}} = c^2 \, m \beta - i c \eps \nabla_x \cdot \alpha - \mathrm{e} A(\hat{x}) \cdot \alpha + \mathrm{e} V(\hat{x}) \, \id_{\C^4} 
	\label{intro:eqn:H_D_original}
\end{align}
is the Dirac hamiltonian including potentials $A = (A_1 , A_2 , A_3)$ and $V$ for the magnetic field $\mathbf{B} = \nabla_x \wedge A$ and the electric field $\mathbf{E} = - \nabla_x V$. Besides the physical constants, the speed of light $c$, the particle mass $m$, the semiclassical parameter $\eps$ and the charge $\mathrm{e}$, four $4 \times 4$ matrices enter, namely 
\begin{align*}
	\beta = \left (
	\begin{matrix}
		\id_{\C^2} & 0 \\
		0 & - \id_{\C^2} \\
	\end{matrix}
	\right )
\end{align*}
and 
\begin{align*}
	\alpha_j = \left (
	\begin{matrix}
		0 & \sigma_j \\
		\sigma_j & 0 \\
	\end{matrix}
	\right )
\end{align*}
defined in terms of the Pauli matrices $\sigma_j$ for $j = 1 , 2 , 3$. The shorthand $\alpha = (\alpha_1 , \alpha_2 , \alpha_3)$ denotes a vector of matrices so that $\xi \cdot \alpha := \sum_{j = 1}^3 \xi_j \, \alpha_j$ yields again a $4 \times 4$ matrix. 

In the absence of electromagnetic fields, $\widehat{H}_{\mathrm{D}}^0 := \widehat{H}_{\mathrm{D}} \vert_{A = 0 , V = 0}$ can be diagonalized explicitly (cf.~equation~\eqref{scaling:eqn:h_sr_0}), and the spectrum 
\begin{align*}
	\sigma \bigl ( \widehat{H}_{\mathrm{D}}^0 \bigr ) = \sigma_{\mathrm{ac}} \bigl ( \widehat{H}_{\mathrm{D}}^0 \bigr ) 
	= (-\infty,-m] \cup [+m,+\infty)
\end{align*}
is purely absolutely continuous and consists of a positive and a negative part \cite[Chapter~1.4.4]{Thaller:Dirac_equation:1992}. States in $\mathrm{ran} \, 1_{[0,+\infty)} \bigl ( \widehat{H}_{\mathrm{D}}^0 \bigr )$ are called \emph{particle states} while those associated to the negative energies are anti-particle states; we will usually refer to them as \emph{electronic} and \emph{positronic} states. 

Our goal is to approximate the full dynamics $e^{- i t \widehat{H}_{\mathrm{D}}}$ for electronic initial states by effective hamiltonians $H_{\mathrm{eff}} = \sum_{n = 1}^k \tfrac{1}{c^k} \, H_{\mathrm{eff} \, n}$ in the following sense: for some fixed $k \in \N$ 
\begin{align}
	e^{- i t \widehat{H}_{\mathrm{D}}} \, \Pi = U^* \, e^{- i t H_{\mathrm{eff}}} \, U \, \Pi + \order_{\norm{\cdot}}(\nicefrac{1}{c^k}) 
	\label{intro:eqn:approximating_dynamics}
\end{align}
holds. Here, $\order_{\norm{\cdot}}(\nicefrac{1}{c^k})$ implies that the difference between the two time evolutions in $\order(\nicefrac{1}{c^k})$ in operator norm,  $\Pi = \Pi^2 + \order_{\norm{\cdot}}(\nicefrac{1}{c^k})$ is an almost projection, $U$ a unitary operator up to $\order_{\norm{\cdot}}(\nicefrac{1}{c^k})$. 

We identify two choices for effective hamiltonians: a semi-relativistic and a non-relativistic effective hamiltonian where the kinetic energies are related to $\sqrt{m^2 + \xi^2}$ and $\frac{1}{2m} \xi^2$, respectively. These are derived by making a \emph{choice of scaling of the kinetic momentum operator:} either one writes $\widehat{H}_{\mathrm{D}}$ in terms of the \emph{semi-relativistic} kinetic momentum operator 
\begin{align*}
	\Psr^A = - i \tfrac{\eps}{c} \nabla_x - \tfrac{\mathrm{e}}{c^2} A(\hat{x}) 
\end{align*}
or in terms of \emph{non-relativistic} kinetic momentum 
\begin{align*}
	\Pnr^A = - i \eps \nabla_x - \tfrac{\mathrm{e}}{c} A(\hat{x}) 
	= c \Psr^A
	. 
\end{align*}
Taking $\Psr^A$ as kinetic energy operator yields semi-relativistic effective dynamics (Theorem~\ref{semirel:thm:semi-relativistic_limit}) while starting with $\Pnr^A = \tfrac{1}{c} \Psr^A$, one obtains non-relativistic effective dynamics (Theorem~\ref{non_rel:thm:non_rel_limit}). The names »semi-« and »non-relativistic« stem from the observation that states whose non-relativistic momentum is $\order(1)$, their semi-relativistic momentum is small, namely $\order(\nicefrac{1}{c})$. This is consistent with the observation that for $\nicefrac{1}{c}$ small enough, the relativistic kinetic energy can be approximated by the non-relativistic kinetic energy, 
\begin{align*}
	\sqrt{m^2 + \bigl ( \nicefrac{\xi}{c} \bigr )^2} = m + \frac{1}{c^2} \frac{1}{2m} \xi^2 + \order(\nicefrac{1}{c^4}) 
	. 
\end{align*}
After making this one choice of scaling, the strategy to prove semi- or non-relativistic dynamics approximate the Dirac dynamics is identical: first, we define a \emph{magnetic} pseudodifferential calculus for position $\mathsf{Q} = \hat{x}$ and semi- or non-relativistic kinetic momentum. Then employing this purpose-built calculus in space-adiabatic perturbation theory gives us recursion formulas to calculate $\Pi$, $U$ and $H_{\mathrm{eff}}$ order-by-order as magnetic $\Psi$DOs. Depending on the choice of scaling, we start with different leading-order terms for $\Pi$, $U$ and $H_{\mathrm{eff}}$. 

Note that using \emph{magnetic} pseudodifferential calculus is \emph{essential} and not optional: technical niceties such as gauge-covariance aside (magnetic $\Psi$DOs written in terms of equivalent gauges $\nabla_x \wedge A = \mathbf{B} = \nabla_x \wedge A'$ are unitarily equivalent, cf.~equation~\eqref{semirel:eqn:gauge-covariance_OpA}), the small parameter $\nicefrac{1}{c}$ appears in front of the vector potential $A$. Thus, usual pseudodifferential calculus which is oblivious to the presence of the magnetic field cannot be used to obtain the correct perturbation expansions. 

In addition to results on approximating the dynamics, we do derive simple spectral results which relate $\sigma(H_{\mathrm{sr \, eff}})$ and $\sigma(H_{\mathrm{nr \, eff}})$ to $\sigma(\widehat{H}_{\mathrm{D}})$ (cf.~Theorems~\ref{semirel:thm:semi-relativistic_limit} and \ref{non_rel:thm:non_rel_limit}). We postpone a more detailed discussion of our main results and how they compare to the ample literature on the subject to Section~\ref{discussion}.

\subsection{Assumptions} % (fold)
\label{intro:assumptions}
Since we employ pseudodifferential methods, we must place regularity assumptions on $\mathbf{B}$ and $V$: 
\begin{assumption}\label{intro:assumption:fields}
	$V$ and the components of $\mathbf{B} = (\mathbf{B}_1,\mathbf{B}_2,\mathbf{B}_3)$ are smooth, bounded functions with bounded derivatives to any order, \ie $V$ and $\mathbf{B}$ are of class $\Cont^{\infty}_{\mathrm{b}}$ for short. 
	
	The components of any vector potential $A = (A_1,A_2,A_3)$ representing $\mathbf{B} = \nabla_x \wedge A$ are smooth polynomially bounded functions with polynomially bounded derivatives to any order, or $A$ is of class $\Cont^{\infty}_{\mathrm{pol}}$ for short. 
\end{assumption}
Note that the assumption on $A$ does not place \emph{any additional} restrictions on the class of admissible magnetic fields $\mathbf{B}$ since any magnetic field of class $\Cont^{\infty}_{\mathrm{b}}$ admits a $\Cont^{\infty}_{\mathrm{pol}}$ vector potential, \eg transversal gauge. Under these assumptions, $\widehat{H}_{\mathrm{D}}$ defines an essentially selfadjoint operator on $\Cont^{\infty}_{\mathrm{c}}(\R^3,\C^4) \cong \Cont^{\infty}_{\mathrm{c}}(\R^3) \otimes \C^4$ (cf.~\cite[Theorem~4.3]{Thaller:Dirac_equation:1992}). 
% subsection Assumptions (end)

\subsection{Structure of the paper} % (fold)
\label{intro:structure}
The paper consists of 6 Sections: first, we discuss the issue of the choice of small parameter and scalings in Section~\ref{scalings}. Then we proceed to explain why the $c \rightarrow \infty$ limit can be seen as an adiabatic limit. Sections~\ref{semirel} and \ref{non_rel} contain the main results of this work, the derivation of semi- and non-relativistic limits of the dynamics, respectively. In addition, we give some spectral results. Lastly, in Section~\ref{discussion}, we compare and contrast our results to previous works. 
% subsection Structure of the paper (end)

\subsection{Acknowledgements} % (fold)
\label{intro:acknowledgements}
The authors would like to thank Harald Lesch, Radu Purice, Herbert Spohn and Stefan Teufel for useful discussions, comments, references and remarks as well as Friedrich Gesztesy for providing several useful references. M.~L. appreciates financial support from the German-Israeli Foundation. M.~F. acknowledges financial support from the DFG cluster of excellence »Origin and Structure of the Universe«.
% subsection Acknowledgements (end)
% section introduction (end)
%!TEX root = /Users/max/Dropbox/research/non- and semi-relativistic limit of the Dirac equation (FL 2008)/non_semi_rel_Dirac.tex
\section{Choice of small parameter and scalings} % (fold)
\label{scalings}
Deciding for a small parameter is crucial since it implicitly contains the physical mechanism that is responsible for the decoupling of electronic and positronic degrees of freedom. Looking at the Dirac hamiltonian, 
\begin{align*}
	\widehat{H}_{\mathrm{D}} = c^2 \, m \beta - i c \eps \nabla_x \cdot \alpha - \mathrm{e} A(\hat{x}) \cdot \alpha + \mathrm{e} V(\hat{x}) \, \id_{\C^4}
	, 
\end{align*}
we see there are four potential parameters, the particle mass $m$, the semiclassical parameter $\eps$, the particle's charge $\mathrm{e}$ and the speed of light $c$.

\subsection{Choice of small parameter} % (fold)
\label{scalings:small_parameter}
Even though we postpone a discussion of the literature to Section~\ref{discussion}, we would like to point out that all four parameters have been used when deriving scaling limits of the Dirac equation.

\subsubsection{Particle mass $m$} % (fold)
\label{scalings:small_parameter:mass}
Foldy and Wouthuysen \cite{Foldy_Wouthuysen:Dirac_theory:1950} have compared typical momenta $\sabs{\xi}$ to the particle mass $m c$, and they order their correction in powers of $\nicefrac{1}{m c}$. The observation that the relativistic kinetic energy 
\begin{align*}
	\sqrt{m^2 c^4 + \xi^2 c^2} = m c^2 \, \sqrt{1 + \bigl ( \nicefrac{\xi}{m c} \bigr )^2} 
	= m c^2 + \tfrac{1}{2m} \xi^2 + \order \bigl ( \nicefrac{\abs{\xi}^4}{m^3 c^2} \bigr ) 
\end{align*}
reduces to the \emph{non}-relativistic kinetic energy if $\abs{\xi} \ll m c$ indicates why Foldy and Wout\-huysen obtain the \emph{non}-relativistic Pauli hamiltonian including higher-order corrections. 
% subsubsection Particle mass $m$ (end)

\subsubsection{Charge $\mathrm{e}$} % (fold)
\label{scalings:small_parameter:charge}
In the absence of magnetic fields, the Douglas-Kroll-Heß block diagonalization method (cf.~\cite{Douglas_Kroll:QED_corrections_helium:1974,Hess:revision_DK_transform:1986,Hess_Jansen:revision_DK_transform:1989,Reiher_Wolf:exact_decoupling_1:2004,Siedentop_Stockmeyer:DKH_method_convergence:2006,Reiher:relativistic_DK_theory:2012} and the discussion in Section~\ref{discussion:block_diagonalization}) uses the charge $\mathrm{e}$ as small parameter and one obtains a block diagonalized hamiltonian in the limit of small coupling to the electric field. 
% subsubsection Charge $e$ (end)

\subsubsection{Semiclassical parameter $\eps$} % (fold)
\label{scalings:small_parameter:eps}
A different approach was taken by other authors (\eg \cite{Brummelhuis_Nourrigat:scattering_Dirac:1999,Spohn:semiclassics_Dirac:2000,PST:sapt:2002}): they used $\eps$ as a small parameter which quantifies a separation of spatial scales. The slow variation of the potentials $A$ and $V$ implies the corresponding fields are weak. In this sense, results obtained from considering the $\eps \rightarrow 0$ limit are not a semi-relativistic limits as the effective hamiltonian (see \eg \cite[equation~(4.9)]{Teufel:adiabatic_perturbation_theory:2003}) may indicate, but a \emph{weak field limit.} 
% subsubsection Semiclassical parameter $\eps$ (end)

\subsubsection{Speed of light $c$} % (fold)
\label{scalings:small_parameter:c}
We will use $\nicefrac{1}{c}$ as small parameter; by far this seems to be the most common choice in other contributions (see \eg \cite{Hunziker:non_rel_limit_Dirac:1975,Grigore_Nenciu_Purice:non_rel_limit_Dirac:1989,Thaller:Dirac_equation:1992}). Physically, the most natural criterion for semi- or non-relativistic behavior is a comparison of energies: if the relativistic kinetic energy, defined as the difference between total relativistic energy and rest energy in a given inertial system, 
\begin{align}
	E_{\mathrm{kin}}(v) := E_{\mathrm{tot}}(v) - E_0 
	= \bigl ( (1 - \nicefrac{v^2}{c^2})^{- \nicefrac{1}{2}} - 1 \bigr ) \, m c^2 
	= \tfrac{1}{2} m v^2 + \orderc{2} 
	, 
	\label{scalings:eqn:energy_comparison}
\end{align}
is smaller than the energy necessary for pair creation, $2 E_0 = 2 m c^2$, then we expect that the Dirac equation describes the physics accurately. In terms of velocities, $\nicefrac{E_{\mathrm{kin}}}{E_0} < 1$ implies $\nicefrac{v}{c} < \nicefrac{\sqrt{3}}{2}$; for energies close to and beyond this critical value, a quantum field theoretical model is necessary. Equation~\eqref{scalings:eqn:energy_comparison} suggests we may equally well use $\nicefrac{v}{c}$ as small parameter. It is \emph{dimensionless} and \emph{relates two inherent physical scales}. Here, those scales are typical speeds $v$ compared to the speed of light $c$. \emph{To make the notation easier on the eyes and our results more easily comparable to the literature, we shall use $\nicefrac{1}{c}$ instead of $\nicefrac{v}{c}$.} 
\medskip

\noindent
The original Dirac hamiltonian, equation~\eqref{intro:eqn:H_D_original}, suggests that typical energies are of order $\mathcal{O}(c^2)$; by rescaling $\widehat{H}_{\mathrm{D}}$ with $\nicefrac{1}{c^2}$ (which is equivalent to rescaling time), we make the leading-order term $\order(1)$: 
\begin{align*}
	\Hd := \tfrac{1}{c^2} \widehat{H}_{\mathrm{D}} 
	&= m \beta + ( - i \tfrac{\eps}{c} \nabla_x ) \cdot \alpha - \tfrac{1}{c^2} A(\hat{x}) \cdot \alpha + \tfrac{1}{c^2} V(\hat{x})
	\\
	&= \Bigl ( m \beta + \bigl ( - i \tfrac{\eps}{c} \nabla_x - \tfrac{1}{c^2} A(\hat{x}) \bigr ) \cdot \alpha \Bigr ) + \tfrac{1}{c^2} V(\hat{x})
	\\
	&= m \beta + \tfrac{1}{c} \bigl ( - i \eps \nabla_x - \tfrac{1}{c} A(\hat{x}) \bigr ) \cdot \alpha + \tfrac{1}{c^2} V(\hat{x}) 
\end{align*}
We have also absorbed the charge $\mathrm{e}$ into the electromagnetic potentials for simplicity. Then the Dirac hamiltonian $\Hd$ suggests \emph{two} natural candidates for the \emph{kinetic} momentum operator: the first choice, 
\begin{align}
	\Psr^A &= - i \tfrac{\eps}{c} \nabla_x - \tfrac{1}{c^2} A(\hat{x}) 
	, 
	\label{scaling:eqn:semi_rel_scaling}
\end{align}
suggests that $\Psr^A \cdot \alpha$ may be of the \emph{same} order of magnitude or even larger than $m \beta$. Or alternatively, we define 
\begin{align}
	\Pnr^A &= - i \eps \nabla_x - \tfrac{1}{c} A(\hat{x}) 
	= c \Psr^A 
	\label{scaling:eqn:non_rel_scaling}
\end{align}
as the kinetic momentum operator to indicate that we are interested in »smaller« momenta and thus, smaller velocities. Arguably, these two choices are the only \emph{natural} scalings suggested by the Dirac equation. 
% subsubsection Speed of light $c$ (end)
% subsection Choice of small parameter (end)

\subsection{Scalings} % (fold)
\label{scalings:scalings}
\subsubsection{Non-relativistic scaling} % (fold)
\label{scalings:scalings:non_rel}
In the non-relativistic scaling where the kinetic momentum operator is given by $\Pnr^A = - i \eps \nabla_x - A(\hat{x})$, we can write the Dirac hamiltonian 
\begin{align*}
	\Hd = \Hnr(\Pnr^A,\Qe) &:= \id_{L^2(\R^3)} \otimes H_{\mathrm{nr} \, 0} + \tfrac{1}{c} H_{\mathrm{nr} \, 1}(\Pnr^A) + \tfrac{1}{c^2} H_{\mathrm{nr} \, 2}(\Qe) 
	\\
	&:= \id_{L^2(\R^3)} \otimes m \beta + \tfrac{1}{c} \Pnr^A \cdot \alpha + \tfrac{1}{c^2} V(\Qe)
\end{align*}
as the sum of three terms. The split implies $H_{\mathrm{nr} \, 1}$ is »small« compared to $H_{\mathrm{nr} \, 0}$. Mathematically speaking however, this is not true, as $H_{\mathrm{nr} \, 1}(\Pnr^A)$ is an unbounded operator and we will have to introduce an energy cutoff later on (cf.~Section~\ref{non_rel:energy_regularization}). 

The leading-order term $H_{\mathrm{nr} \, 0} = m \beta$ is already diagonal so that 
\begin{align}
	\Pi_{\mathrm{nr} \, 0} = \id_{L^2(\R^3)} \otimes \pi_{\mathrm{nr} \, 0} 
	= \left (
	\begin{matrix}
		\id_{\C^2} & 0 \\
		0 & 0 \\
	\end{matrix}
	\right )
\end{align}
projects onto the electronic states, \ie the subspace associated to the positive eigenvalue $m \in \sigma \bigl ( \id_{L^2(\R^3)} \otimes H_{\mathrm{nr} \, 0} \bigr ) = \{ \pm m \}$. 
% subsubsection Non-relativistic scaling (end)

\subsubsection{Semi-relativistic scaling} % (fold)
\label{scalings:scalings:semirel}
Expressed in semi-relativistic units where $\Psr^A = - i \tfrac{\eps}{c} \nabla_x - \tfrac{1}{c^2} A(\hat{x})$, the Dirac operator can be written as 
\begin{align*}
	\Hd = \Hsr(\Psr^A,\Qe) &:= H_{\mathrm{sr} \, 0}(\Psr^A) + \tfrac{1}{c^2} H_{\mathrm{sr} \, 2}(\Qe) 
	\\
	&:= \bigl ( m \beta + \Psr^A \cdot \alpha \bigr ) + \tfrac{1}{c^2} V(\Qe) 
	. 
\end{align*}
Let us neglect the electromagnetic field for the moment, \ie we set $A = 0$ and $V = 0$. Then $H_{\mathrm{sr} \, 0} \bigl ( - i \tfrac{\eps}{c} \nabla_x \bigr )$ diagonalizes to 
\begin{align}
	h_{\mathrm{sr} \, 0} \bigl ( - i \tfrac{\eps}{c} \nabla_x \bigr ) :&= u_{\mathrm{sr} \, 0} \bigl ( - i \tfrac{\eps}{c} \nabla_x \bigr ) \, H_{\mathrm{sr} \, 0} \bigl ( - i \tfrac{\eps}{c} \nabla_x \bigr ) \, u_{\mathrm{sr} \, 0} \bigl ( - i \tfrac{\eps}{c} \nabla_x \bigr )^* 
	\notag \\
	&= \sqrt{m^2 + \bigl ( - i \tfrac{\eps}{c} \nabla_x \bigr )^2} \, \beta 
	=: E \bigl ( - i \tfrac{\eps}{c} \nabla_x \bigr ) \, \beta 
	\label{scaling:eqn:h_sr_0}
\end{align}
via the unitary 
\begin{align}
	u_{\mathrm{sr} \, 0}(\xi) &= \frac{1}{\sqrt{2E(E+m)}} \bigl ( (E+m) \id_{\C^4} - (\xi \cdot \alpha) \beta \bigr ) 
	\label{scaling:eqn:u0sr}
\end{align}
The eigenvalues $\pm \sqrt{m^2 + \xi^2}$ are both two-fold spin-degenerate and the positive (negative) energy eigenspace corresponds to the electronic (positronic) subspace. The projection onto the electronic states is $\pi_{\mathrm{sr} \, 0} \bigl ( - i \tfrac{\eps}{c} \nabla_x \bigr )$ with
\begin{align}
	\pi_{\mathrm{sr} \, 0}(\xi) &= \frac{1}{2} \left ( \id_{\C^4} + \frac{1}{E(\xi)} H_{\mathrm{sr} \, 0}(\xi) \right ) 
	. 
	\label{scaling:eqn:pi0sr}
\end{align}
The fact that both, $u_{\mathrm{sr} \, 0}$ and $\pi_{\mathrm{sr} \, 0}$ depend only on momentum will lead to crucial simplifications in the derivation. 
% subsubsection Semi-relativistic scaling (end)
% subsection Scalings (end)
%
\begin{remark}
	To simplify notation, we will drop the indices ${\;}_{\mathrm{sr}}$ and ${\;}_{\mathrm{nr}}$ whenever there is no risk of confusion. 
\end{remark}
% 
% section general_structure_of_adiabatic_problems_and_scalings (end)
%!TEX root = /Users/max/Dropbox/research/non- and semi-relativistic limit of the Dirac equation (FL 2008)/non_semi_rel_Dirac.tex
\section{The $\nicefrac{1}{c} \rightarrow 0$ limit as an adiabatic limit} % (fold)
\label{adiabatic_limit}
The core of this novel approach is to consider the $\nicefrac{1}{c} \rightarrow 0$ limit of the Dirac equation as an \emph{adiabatic limit.} Whether we obtain the semi- or non-relativistic limit depends solely on the choice of scaling, the rest of the derivation is essentially the same. 

The Dirac equation has three features, the so-called \emph{adiabatic trinity}, shared by all adiabatic systems: 
\begin{enumerate}[(i)]
	\item A distinction between \emph{slow and fast degrees of freedom}, \ie a decomposition of the original Hilbert space the hamiltonian acts on into $\mathcal{H} \cong \Hslow \otimes \Hfast$. Here, the fast Hilbert space is spanned by the electronic and the positronic state, $\Hfast \cong \C^2$. The slow Hilbert space is that of a \emph{non-}relativistic spin-$\nicefrac{1}{2}$ particle, $\Hslow \cong L^2(\R^3,\C^2)$. 
	\item A \emph{small}, dimensionless \emph{parameter} that quantifies the separation of scales. If $v$ is a typical velocity of the particle, we expect that no electron-positron pairs are created as long as $\nicefrac{v}{c} \ll 1$. However, for notational simplicity, we use $\nicefrac{1}{c}$ as small parameter. 
	\item There exists a \emph{relevant part of the spectrum} of the \emph{unperturbed} operator, separated by a \emph{gap} from the remainder. If we consider the field-free case, then $H_0 \bigl (- i \frac{\eps}{c} \nabla_x \bigr )$ fibers via the Fourier transform and the spectrum of each fiber hamiltonian is given by 
	\begin{align*}
		\spec \bigl ( H_0(\xi) \bigr ) = \Bigl \{ \pm \sqrt{m^2 + \xi^2} \Bigr \} 
		. 
	\end{align*}
	We are interested in the electronic subspace, \ie the states associated to $\specrel(x,\xi) := \bigl \{ \sqrt{m^2 + \xi^2} \bigr \}$ -- which is separated by a gap of size 
	\begin{align*}
		2 \sqrt{m^2 + \xi^2} \geq 2m 
	\end{align*}
	from the positronic subspace. This ensures that even in the perturbed case, transitions between electronic and positronic states are exponentially suppressed. 
\end{enumerate}
Put into the form of a commutative diagram, the unperturbed dynamics can be written in the original and »Foldy-Wouthuysen« representation (left and right column), before and after projecting onto the invariant electronic subspace (upper and lower row): 
\begin{align}
	\bfig
		\node L2C4(0,0)[L^2(\R^3,\C^4)]
		\node PiL2C4(0,-600)[\pi_0 \bigl (- i \frac{\eps}{c} \nabla_x \bigr ) \bigl ( L^2(\R^3,\C^4) \bigr )]
		\node L2C4FW(1200,0)[L^2(\R^3,\C^2) \otimes \C^2]
		\node L2C2FW(1200,-600)[L^2(\R^3,\C^2)]
		\arrow[L2C4`PiL2C4;\pi_0(- i \frac{\eps}{c} \nabla_x)]
		\arrow[L2C4`L2C4FW;u_0(- i \frac{\eps}{c} \nabla_x)]
		\arrow[L2C4FW`L2C2FW;\Piref]
		% \arrow[L2C4FW`L2C2FW;\id_{L^2(\R^3,\C^2)} \otimes \piref]
		\arrow/-->/[PiL2C4`L2C2FW;]
		\Loop(0,0){L^2(\R^3,\C^4)}(ur,ul)_{e^{-i t H_0(- i \frac{\eps}{c} \nabla_x)}} 
		\Loop(1200,0){L^2(\R^3,\C^2) \otimes \C^2}(ur,ul)_{e^{-i t E(- i \frac{\eps}{c} \nabla_x) \beta}} 
		\Loop(1200,-600){L^2(\R^3,\C^2)}(dr,dl)^{e^{-i t E(- i \frac{\eps}{c} \nabla_x)}} 
	\efig
	\label{adiabatic_limit:diagram:unperturbed}
\end{align}
In the Foldy-Wouthuysen representation, the free Dirac hamiltonian is diagonal, 
\begin{align*}
	h_0 \bigl (- i \tfrac{\eps}{c} \nabla_x \bigr ) := u_0 \bigl ( - i \tfrac{\eps}{c} \nabla_x \bigr ) \, H_0 \bigl ( - i \tfrac{\eps}{c} \nabla_x \bigr ) \, {u_0}^{\ast} \bigl ( - i \tfrac{\eps}{c} \nabla_x \bigr ) = E \bigl ( - i \tfrac{\eps}{c} \nabla_x \bigr ) \, \beta 
	. 
\end{align*}
In the free case, we are able to describe the electronic and positronic dynamics separately, because the electronic and positronic subspaces are invariant under the dynamics of $H_0(- i \frac{\eps}{c} \nabla_x)$. For the electronic subspace, the \emph{effective hamiltonian} is given by 
\begin{align}
	h_{\mathrm{eff} \, 0} \bigl ( - i \tfrac{\eps}{c} \nabla_x \bigr ) := \Piref \, h_0 \bigl (- i \tfrac{\eps}{c} \nabla_x \bigr ) \, \Piref = E \bigl ( - i \tfrac{\eps}{c} \nabla_x \bigr ) 
\end{align}
where the reference projection $\Piref := \id_{L^2(\R^3)} \otimes \piref$, 
\begin{align*}
	\piref = \left (
	\begin{matrix}
		\id_{\C^2} & 0 \\
		0 & 0 \\
	\end{matrix}
	\right )
	, 
\end{align*}
maps onto the electronic subspace in the Foldy-Wouthuysen representation. The evolution generated by the effective hamiltonian accurately describes the dynamics for electronic states, 
\begin{align}
	\Bigl ( e^{-i t H_0(- i \frac{\eps}{c} \nabla_x)} - {u_0}^{\ast} \bigl (- i \tfrac{\eps}{c} \nabla_x \bigr ) \, e^{-i t E(- i \frac{\eps}{c} \nabla_x)} \, u_0 \bigl (- i \tfrac{\eps}{c} \nabla_x \bigr ) \Bigr ) \, \pi_0 \bigl (- i \tfrac{\eps}{c} \nabla_x \bigr ) = 0 
	. 
	\label{adiabatic_limit:eqn:dynamics_unperturbed}
\end{align}
Hence, we are able to relate the dynamics in the upper-left corner of Diagram~\eqref{adiabatic_limit:diagram:unperturbed} with the reduced, effective dynamics in the lower-right corner. This reduction is made possible, because $H_0 \bigl (- i \frac{\eps}{c} \nabla_x \bigr )$ and $\pi_0 \bigl (- i \frac{\eps}{c} \nabla_x \bigr )$ commute, 
\begin{align*}
	\Bigl [ H_0 \bigl (- i \tfrac{\eps}{c} \nabla_x \bigr ) \; , \, \pi_0 \bigl (- i \tfrac{\eps}{c} \nabla_x \bigr ) \Bigr ] = 0
	, 
\end{align*}
and hence the electronic subspace is invariant under the unperturbed dynamics. 
\medskip

\noindent
If we switch on the electromagnetic perturbation, this is no longer true, the commutator of $\Hd$ and $\pi_0 \bigl (- i \frac{\eps}{c} \nabla_x \bigr )$ is of order $\orderc{3}$. The immediate question is whether we can generalize diagram \eqref{adiabatic_limit:diagram:unperturbed} and equation~\eqref{adiabatic_limit:eqn:dynamics_unperturbed} by replacing $\pi_0 \bigl ( - i \tfrac{\eps}{c} \nabla_x \bigr )$ and $u_0 \bigl ( - i \tfrac{\eps}{c} \nabla_x \bigr )$ with some generalized projection $\Pi$ and a generalized unitary $U$ such that 
\begin{align}
	\bfig
		\node L2C4(0,0)[L^2(\R^3,\C^4)]
		\node PiL2C4(0,-600)[\Pi \bigl ( L^2(\R^3,\C^4) \bigr )]
		\node L2C4FW(1200,0)[L^2(\R^3,\C^2) \otimes \C^2]
		\node L2C2FW(1200,-600)[L^2(\R^3,\C^2)]
		\arrow[L2C4`PiL2C4;\Pi]
		\arrow[L2C4`L2C4FW;U]
		\arrow[L2C4FW`L2C2FW;\Piref]
		\arrow/-->/[PiL2C4`L2C2FW;]
		\Loop(0,0){L^2(\R^3,\C^4)}(ur,ul)_{e^{-i t \Hd}} 
		\Loop(1200,0){L^2(\R^3,\C^2) \otimes \C^2}(ur,ul)_{e^{-i t \hat{h}}} 
		\Loop(1200,-600){L^2(\R^3,\C^2)}(dr,dl)^{e^{-i t \hat{h}_{\mathrm{eff}}}} 
	\efig
	\label{adiabatic_limit:diagram:perturbed}
\end{align}
holds. \emph{If} these operators $\Pi$ and $U$ exist, we require them to be an \emph{orthogonal projection} and a \emph{unitary} which commute with the full perturbed Hamiltonian $\Hd$ and intertwine $\Pi$ and $\Piref$ up to arbitrarily small error in norm in $\nicefrac{1}{c}$, \ie 
\begin{align*}
	&{\Pi}^2 = \Pi, \; {\Pi}^{\ast} = \Pi 
	&& \bigl [ \Hd , \Pi \bigr ] = 0 \\
	&{U}^{\ast} \, U = \id_{L^2(\R^3,\C^4)} , \; U \, {U}^{\ast} = \id_{L^2(\R^3,\C^2) \otimes \C^2} 
	&& U \, \Pi \, {U}^{\ast} = \Piref + \order_{\norm{\cdot}}(\nicefrac{1}{c^{\infty}}) 
\end{align*}
Because of the last property $U$, is called \emph{intertwiner}. The idea of Panati, Spohn and Teufel \cite{PST:sapt:2002} was to use (ordinary) Weyl calculus to obtain the diagonalized and effective hamiltonians $\hat{h}$ and $\hat{h}_{\mathrm{eff}} = \Piref \, \hat{h} \, \Piref$. We will adapt their technique to use a version of \emph{magnetic} Weyl calculus that is tailored to the problem. 
% section general_structure_of_adiabatic_problems_and_scalings (end)
Semi-relativistic%!TEX root = /Users/max/Dropbox/research/non- and semi-relativistic limit of the Dirac equation (FL 2008)/non_semi_rel_Dirac.tex
\section{Semi-relativistic limit} % (fold)
\label{semirel}
We will adapt space-adiabatic perturbation theory \cite{Teufel:adiabatic_perturbation_theory:2003} to derive effective dynamics for electronic initial states in the sense of Diagram~\eqref{adiabatic_limit:diagram:perturbed}. The core idea of space-adiabatic perturbation theory is to use pseudodifferential calculus to write $\Pi = \Op(\pi)$ and $U = \Op(u) + \order_{\norm{\cdot}}(\nicefrac{1}{c^{\infty}})$ as quantization of matrix-valued functions $\pi$ and $u$ which satisfy 
\begin{align}
	\pi \Weyl \pi - \pi &= 0 
	, 
	&&
	\bigl [ \Hsr , \pi \bigr ]_{\Weyl} = 0 
	,
	\label{semirel:eqn:4_defect_equations_symbols}
	\\
	u \Weyl u^* - 1 &= \order(\nicefrac{1}{c^{\infty}}) 
	, \; 
	u^* \Weyl u - 1 = \order(\nicefrac{1}{c^{\infty}}) 
	, 
	&&
	u \Weyl \pi \Weyl u^* - \piref = \order(\nicefrac{1}{c^{\infty}}) 
	. 
	\notag 
\end{align}
The symbol of the \emph{superadiabatic projection} $\pi = \pi_0 + \order(\nicefrac{1}{c})$ and \emph{intertwining unitary} $u = u_0 + \order(\nicefrac{1}{c})$ can be constructed recursively using an asymptotic expansion of the Moyal product $\Weyl$. 

The key ingredient to showing the semi- and non-relativistic limit is to replace standard pseudodifferential theory (see \eg \cite{Folland:harmonic_analysis_hase_space:1989,Hoermander:WeylCalculus:1979}) with a pseudodifferential calculus tailored to the problem.

\subsection{Semi-relativistic pseudodifferential calculus} % (fold)
\label{semirel:mag_PsiDO}
We will use a \emph{magnetic} pseudodifferential calculus for the »building block operators« position and kinetic momentum in semi-relativistic scaling (see Section~\ref{scalings:scalings:semirel}), 
\begin{align}
	\Psr^A &= - i \tfrac{\eps}{c} \nabla_x - \tfrac{1}{c^2} A(\hat{x})
	, 
	\label{semirel:eqn:building_block_ops}
	\\
	\Qsr &= \hat{x} 
	, 
\end{align}
where $A$ is some vector potential associated to the magnetic field $\mathbf{B} = \nabla_x \wedge A$. Associated to these building block operators, there is a quantization $\Opsr^A$, a magnetic Moyal product $\magWsr$ and a magnetic Wigner transform $\mathcal{W}_{\mathrm{sr}}^A$. We refer the interested reader to \cite{Mantoiu_Purice:magnetic_Weyl_calculus:2004} for the basic idea of magnetic $\Psi$DOs and to \cite{Iftimie_Mantiou_Purice:magnetic_psido:2006,Iftimie_Mantoiu_Purice:commutator_criteria:2008,Lein:two_parameter_asymptotics:2008,Lein:progress_magWQ:2010} and references therein for further results. The quantization $\Opsr^A$ depends explicitly on the vector potential $A$ while the formula for $f \magWsr g$ (cf.~\cite[equation~(2.5)]{Lein:two_parameter_asymptotics:2008} where $\eps$ needs to be replaced with $\tfrac{\eps}{c}$ and $\lambda$ by $\tfrac{1}{c^2}$) contains the magnetic \emph{field} $\mathbf{B} = \nabla_x \wedge A$. To ensure that the product of two Hörmander symbols is again a Hörmander symbol (see equation~\eqref{semirel:eqn:Hoermander_symbols} below for a definition), we will need to place Assumption~\ref{intro:assumption:fields} on $\mathbf{B}$ and $A$. Note that the assumption on $A$ does not place \emph{any additional} restrictions on the class of admissible magnetic fields $\mathbf{B}$ since any magnetic field of class $\Cont^{\infty}_{\mathrm{b}}$ admits a $\Cont^{\infty}_{\mathrm{pol}}$ vector potential, \eg transversal gauge. The restriction on $A$ is just necessary to define $\Opsr^A$ via a duality construction. 

Under these assumptions on $\mathbf{B}$ and $A$, we can proceed to define \emph{semi-relativistic pseudodifferential operators:} for a suitable matrix-valued function $h : T^* \R^3 \longrightarrow \mathrm{Mat}_{\C}(4)$, its semi-relativistic quantization is defined by 
\begin{align}
	\Opsr^A(h) := \frac{1}{(2\pi)^3} \int_{T^* \R^3} \negmedspace \negmedspace \negmedspace \dd x \, \dd \xi \, ( \Fourier h )(x,\xi) \, e^{-i (\xi \cdot \Qsr - x \cdot \Psr^A)} \otimes \id_{\C^4}
	\label{semirel:eqn:OpsrA}
\end{align}
where 
\begin{align*}
	( \Fourier h )(x,\xi) := \frac{1}{(2\pi)^3} \int_{T^* \R^3} \negmedspace \negmedspace \negmedspace \dd x' \, \dd \xi' \, e^{i (\xi \cdot x' - x \cdot \xi')} \, h(x',\xi')
\end{align*}
is the symplectic Fourier transform of $h$. Obviously these integrals need to be interpreted appropriately whenever the integrands are not $L^1$, see \eg \cite[Section~V]{Mantoiu_Purice:magnetic_Weyl_calculus:2004} for details. 

One advantage of magnetic pseudodifferential operators is \emph{gauge covariance,} \ie if $A$ and $A' = A + \eps \nabla_x \chi$ are equivalent gauges related by $\chi$, then $\Opsr^A(h)$ and 
\begin{align}
	\Opsr^{A + \eps \nabla_x \chi}(h) = e^{+ \frac{i}{c^2} \chi(\Qsr)} \, \Opsr^A(h) \, e^{- \frac{i}{c^2} \chi(\Qsr)}
	\label{semirel:eqn:gauge-covariance_OpA}
\end{align}
are unitarily equivalent operators. Note that this is \emph{false} if one uses regular Weyl calculus and minimal substitution (see \eg \cite[Section~2.2]{Lein:progress_magWQ:2010}). This implies the product $\magWsr$ implicitly defined through 
\begin{align*}
	\Opsr^A \bigl ( f \magWsr g \bigr ) = \Opsr^A(f) \, \Opsr^A(g) 
\end{align*}
depends on $\mathbf{B}$ rather than $A$. For two Hörmander symbols $f$ and $g$, their product $f \magWsr g$ defined as an oscillatory integral yields another Hörmander symbol \cite[Theorem~2.2]{Iftimie_Mantiou_Purice:magnetic_psido:2006}. This extends to matrix-valued symbols: if $m \in \R$ and $\rho \in [0,1]$, we define the topological vector space of Hörmander symbols of order $m$ and weight $\rho$ as 
\begin{align}
	\Hoermr{m}{\rho} := \Bigl \{ f \in \Cont^{\infty} \bigl ( T^* \R^3 , \mathcal{B}(\C^4) \bigr ) \; \big \vert \; \forall a , \alpha \in \N_0^3  \; \snorm{f}_{m , a \alpha} < \infty \Bigr \} 
	\label{semirel:eqn:Hoermander_symbols} 
	, 
\end{align}
which is equipped with a Fréchet structure generated by the family of seminorms 
\begin{align*}
	\snorm{f}_{m , a \alpha} := \sup_{(x,\xi) \in T^* \R^3} \Bigl ( {\sqrt{1 + \xi^2}}^{\, -m + \abs{\alpha} \rho} \, \bnorm{\partial_x^a \partial_{\xi}^{\alpha}f(x,\xi)}_{\mathcal{B}(\C^4)} \Bigr )
	, 
	&& 
	a , \alpha \in \N_0^3
	. 
\end{align*}
The space of rapidly decaying symbols $\Hoermr{-\infty}{\rho} := \bigcap_{m \in \R} \Hoermr{m}{\rho}$ is defined as the intersection of all Hörmander symbols and equipped with the projective limit topology. 

Many standard results of pseudodifferential theory are also available for \emph{magnetic} pseudodifferential operators, \eg there is a magnetic version of the Caldéron-Vaillancourt theorem \cite[Theorem~3.1]{Iftimie_Mantiou_Purice:magnetic_psido:2006}, the quantization of elliptic real-valued Hörmander symbols of positive order $m > 0$ yields selfadjoint operators on $L^2(\R^d)$ with domain $H^m_A(\R^d)$ \cite[Theorem~4.1]{Iftimie_Mantiou_Purice:magnetic_psido:2006}, and there are Beals- and Bony-type commutator criteria \cite{Iftimie_Mantoiu_Purice:commutator_criteria:2008}. These results again extend to matrix-valued symbols as outlined in \cite[Appendix~A]{Teufel:adiabatic_perturbation_theory:2003} and Appendix~\ref{appendix:mag_PsiDOs:extension}. 

Of particular importance is the two-parameter version of magnetic Weyl calculus developed in \cite{Lein:two_parameter_asymptotics:2008}: if the semiclassical parameter $\eps$ and the coupling constant $\lambda$ are replaced by $\tfrac{\eps}{c}$ and $\tfrac{1}{c^2}$, respectively, we obtain the semi-relativistic pseudodifferential calculus used here. First, we need the notion of symbols which have a »good« expansion in $\nicefrac{1}{c}$; these correspond to the »semiclassical symbols« introduced in \cite[Definition~A.12]{Teufel:adiabatic_perturbation_theory:2003}: 
\begin{defn}[Semi-relativistic symbols]\label{semirel:defn:semi-relativistic_symbol}
	A map $f : [c_0,\infty) \longrightarrow \Hoermr{m}{1}$, $c \mapsto f_c$, is called a semiclassical symbol of order $m \in \R$, that is $f \in \SemiHoer{m}$, if there exists a sequence $\{ f_n \}_{n \in \N_0}$, $f_n \in \Hoer{m - \frac{2}{3} n}$, such that for all $N \in \N_0$, one has 
	\begin{align*}
		c^{N} \biggl ( f_c - \sum_{n = 0}^{N-1} \tfrac{1}{c^n} \, f_n \biggr ) \in \Hoer{m - \frac{1}{3} N}
	\end{align*}
	uniformly in $c \in [c_0,+\infty)$ in the sense that for any $N \in \N_0$ and $a , \alpha \in \N_0^3$, there exists constants $C_{N \, a \alpha} > 0$ such that 
	\begin{align*}
		\norm{f_c - \sum_{n = 0}^{N-1} \tfrac{1}{c^n} \, f_n}_{m - \frac{1}{3} N , a \alpha} \leq \tfrac{1}{c^N} \, C_{N \, a \alpha} 
	\end{align*}
	holds for all $c \in [c_0,\infty)$. 
\end{defn}
While the appearance of the factor $\nicefrac{1}{3}$ may seem odd, the reason will become clear in the proof of Theorem~\ref{semirel:thm:semi-relativistic_limit}; in any case, this is just a technical point. 

The asymptotic expansion of $\magWsr$ in powers of $\nicefrac{1}{c}$ can be obtained using \cite[Theorem~1.1]{Lein:two_parameter_asymptotics:2008}: 
\begin{lem}\label{semirel:lem:asymp_expansion_magW}
	Let $f \in \Hoermr{m_1}{\rho}$, $g \in \Hoermr{m_2}{\rho}$, $m_1 , m_2 \in \R$, $\rho \in (0,1]$, be two Hörmander symbols and assume the components of $\mathbf{B}$ are of class $\Cont^{\infty}_{\mathrm{b}}$. 
	\begin{enumerate}[(i)]
		\item Then $f \magWsr g$ expands asymptotically in $\nicefrac{1}{c} \,$, \ie for each $N \in \N_0$, we can write 
		\begin{align*}
			f \magWsr g &= \sum_{n = 0}^N \tfrac{1}{c^n} \, \bigl ( f \magWsr g \bigr )_{(n)} + \tfrac{1}{c^{N+1}} \, \tilde{R}_N(f,g) 
		\end{align*}
		where all terms are known explicitly. The $j$th term of the expansion
		\begin{align*}
			\bigl ( f \magWsr g \bigr )_{(n)} = \sum_{3k \leq n} \eps^{n - 2k} \, \bigl ( f \magW g \bigr )_{(n-2k,k)}
		\end{align*}
		can be expressed in terms of the $(f \magW g)_{(n,k)}$ from \cite[Theorem~1.1]{Lein:two_parameter_asymptotics:2008}. Furthermore, each term of the expansion is in symbol class 
		\begin{align*}
			\bigl ( f \magWsr g \bigr )_{(n)} \in \Hoermr{m_1 + m_2 - \frac{2}{3} n \rho}{\rho}
		\end{align*}
		and the remainder $\tilde{R}_{N}$ is an element of $\Hoermr{m_1 + m_2 - \frac{2}{3} (N+1) \rho}{\rho}$.
		\item If in addition $f \asymp \sum_{n = 0}^{\infty} \tfrac{1}{c^n} \, f_n$ and $g \asymp \sum_{n = 0}^{\infty} \tfrac{1}{c^n} \, g_n$ admit asymptotic expansions with $f_n \in \Hoermr{m_1 - \frac{2}{3} n \rho}{\rho}$ and $g_n \in \Hoermr{m_2 - \frac{2}{3} n \rho}{\rho}$, $n \in \N_0$, then $f \magWsr g \asymp \sum_{n = 0}^{\infty} \tfrac{1}{c^n} \, h_n$ admits an asymptotic expansion in $\nicefrac{1}{c}$ where 
		\begin{align*}
			h_n = \sum_{k + l + j = n} \bigl ( f_k \magWsr g_l \bigr )_{(j)} \in \Hoermr{m_1 + m_2 - \frac{2}{3} n \rho}{\rho}
			. 
		\end{align*}
		\item If $f \asymp \sum_{n = 0}^{\infty} \tfrac{1}{c^n} \, f_n \in \SemiHoer{m_1}$ and $g \in \SemiHoer{m_2}$ are two relativistic symbols, then $f \magWsr g \asymp \sum_{n = 0}^{\infty} \tfrac{1}{c^n} \, g_n \in \SemiHoer{m_1 + m_2}$ is also a semi-relativistic symbol whose asymptotic resummation is given in (ii). 
	\end{enumerate}
\end{lem}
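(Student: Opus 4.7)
The plan is to deduce this lemma by specializing the two-parameter magnetic Moyal product calculus of \cite{Lein:two_parameter_asymptotics:2008} to the semi-relativistic scaling in which the semiclassical parameter and coupling constant are tied together as $\eps \mapsto \tfrac{\eps}{c}$, $\lambda \mapsto \tfrac{1}{c^2}$. Since $\magWsr$ is by definition obtained from $\magW$ by this substitution, the entire content of the lemma is extracted from \cite[Theorem~1.1]{Lein:two_parameter_asymptotics:2008} by reorganizing the double expansion in $(\eps,\lambda)$ into a single expansion in $\nicefrac{1}{c}$ and verifying that orders behave as claimed.

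For part (i), I would start from the two-parameter expansion
\begin{align*}
	f \magW g = \sum_{(j,k)} \eps^j \lambda^k \, (f \magW g)_{(j,k)} + \text{remainder}
\end{align*}
with $(f \magW g)_{(j,k)} \in \Hoermr{m_1+m_2-(j+k)\rho}{\rho}$ (each derivative in $\xi$ arising either from a Moyal prefactor or from a magnetic insertion costs one unit of $\rho$-decay). Substituting $\eps \mapsto \tfrac{\eps}{c}$ and $\lambda \mapsto \tfrac{1}{c^2}$ turns each term into $\eps^j c^{-(j+2k)} (f \magW g)_{(j,k)}$, so collecting all contributions to $\nicefrac{1}{c^n}$ amounts to summing over pairs with $j+2k = n$, i.e. $j = n-2k$ with $k \geq 0$. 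A term indexed by $(n-2k,k)$ lies in $\Hoermr{m_1+m_2-(n-k)\rho}{\rho}$; the worst (least decaying) contribution that still satisfies $m_1+m_2-(n-k)\rho \geq m_1+m_2-\tfrac{2}{3}n\rho$ is precisely the one with $k \leq \tfrac{n}{3}$, which is the range in the stated formula. Contributions with $k > \tfrac{n}{3}$ have strictly better decay and can safely be absorbed into the $\tfrac{1}{c^{N+1}} \tilde{R}_N$ remainder, whose Hörmander order follows from the remainder bound of the two-parameter theorem at truncation level $N+1$.

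For part (ii), I would use bilinearity of $\magWsr$ together with (i). Writing $f \magWsr g \sim \sum_{j,l} \tfrac{1}{c^{j+l}} f_j \magWsr g_l$ and applying (i) to each product $f_j \magWsr g_l \sim \sum_{p} \tfrac{1}{c^p}(f_j \magWsr g_l)_{(p)}$, I would regroup by the total index $n = j+l+p$ to obtain the stated formula; the rearrangement is legitimate because the symbol orders of successive terms decrease, so truncation at any level leaves a remainder in a Hörmander class of the claimed order. Part (iii) is then immediate: both $f_n \in \Hoermr{m_1 - \tfrac{2}{3}n\rho}{\rho}$ and $g_n \in \Hoermr{m_2 - \tfrac{2}{3}n\rho}{\rho}$ by hypothesis, so (ii) gives an expansion with $h_n \in \Hoermr{m_1 + m_2 - \tfrac{2}{3}n\rho}{\rho}$, and the uniform-in-$c$ remainder bounds from the two-parameter theorem translate into the seminorm estimates required by Definition~\ref{semirel:defn:semi-relativistic_symbol}. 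The $\tfrac{1}{3}$ factor appearing in that definition is now transparent: it is exactly what is produced after truncating an asymptotic series in $\nicefrac{1}{c}$ whose $n$-th term gains $\tfrac{2}{3}n\rho$ in decay, leaving the difference $\tfrac{1}{3}N\rho$ available as an explicit order improvement of the truncation error.

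The main obstacle is the index bookkeeping in part (i): one has to verify that the grouping of $(j,k)$ terms by $j+2k = n$ respects the Hörmander class structure claimed, and that the tail with $k > n/3$ really belongs to a strictly higher-order remainder rather than spoiling the order of $(f \magWsr g)_{(n)}$. The uniformity in $c \in [c_0,\infty)$ needed for (iii) reduces to the quantitative remainder estimates in \cite[Theorem~1.1]{Lein:two_parameter_asymptotics:2008}, which are uniform in $(\eps,\lambda)$ on compact sets and in particular in $\nicefrac{1}{c}$ bounded.
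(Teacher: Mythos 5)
Your overall strategy---specialize the two-parameter expansion of \cite[Theorem~1.1]{Lein:two_parameter_asymptotics:2008} via $\eps \mapsto \tfrac{\eps}{c}$, $\lambda \mapsto \tfrac{1}{c^2}$ and regroup by powers of $\nicefrac{1}{c}$---is exactly the paper's, and your treatment of (ii) and (iii) matches it. However, the step you yourself single out as the main obstacle, the index bookkeeping in (i), is handled incorrectly. You present the restriction $3k \leq n$ as a \emph{choice}, justified by claiming that the leftover terms with $j + 2k = n$ and $k > \nicefrac{n}{3}$ ``have strictly better decay and can safely be absorbed into the $\tfrac{1}{c^{N+1}} \tilde{R}_N$ remainder.'' Both halves of that sentence fail. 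First, the decay statement is backwards: such a term would lie in $\Hoermr{m_1 + m_2 - (n-k)\rho}{\rho}$, and for $k > \nicefrac{n}{3}$ one has $(n-k)\rho < \tfrac{2}{3} n \rho$, i.e.\ \emph{worse} decay than the claimed class, not better. Second, even if the decay were fine, a term carrying the prefactor $c^{-n}$ with $n \leq N$ cannot be moved into $\tfrac{1}{c^{N+1}} \tilde{R}_N$ without putting positive powers of $c$ into $\tilde{R}_N$, destroying the uniform-in-$c$ bound and hence the very meaning of the remainder estimate; if such terms existed, the stated formula for $\bigl( f \magWsr g \bigr)_{(n)}$ would simply be wrong.

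The correct reason, used in the paper, is structural: in the two-parameter expansion the $\lambda$-index never exceeds the $\eps$-index (each insertion of the magnetic field comes with at least one power of $\eps$), so the expansion runs over pairs $(j,k)$ with $k \leq j$. After substitution, grouping by $n = j + 2k$ turns the constraint $k \leq j = n - 2k$ into $3k \leq n$ automatically, so the sum in the statement already contains \emph{all} order-$c^{-n}$ contributions, each of which lies in $\Hoermr{m_1 + m_2 - (n-k)\rho}{\rho} \subseteq \Hoermr{m_1 + m_2 - \frac{2}{3} n \rho}{\rho}$ precisely because $k \leq \nicefrac{n}{3}$. The remainder $\tilde{R}_N$ then consists of the substituted two-parameter remainder together with the extraneous expansion terms of total order $n \geq N+1$ (still satisfying $3k \leq n$), and the same containment shows it belongs to $\Hoermr{m_1 + m_2 - \frac{2}{3}(N+1)\rho}{\rho}$. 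With (i) repaired this way, your derivations of (ii) and (iii) go through as in the paper; you should also note, as the paper does, that the results of \cite{Lein:two_parameter_asymptotics:2008} are stated for scalar symbols and extend to matrix-valued ones with obvious modifications.
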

The proof which can be found in Appendix~\ref{appendix:mag_PsiDOs:expansion_magWsr} is straightforward and mostly book-keeping of symbol classes. 
\begin{remark}\label{semirel:rem:product_two_momentum_functions}
	If $f$ and $g$ are functions of momentum only, then the first non-trivial term of the asymptotic expansion is a purely magnetic term; the first purely magnetic term $\bigl ( f \magW g \bigr )_{(1,1)}$ appears at third order in $\nicefrac{1}{c}$ (cf.~Appendix~\ref{appendix:mag_PsiDOs:asymptotic_expansions}), \ie we have 
	\begin{align*}
		f \magWsr g = f \, g + \orderc{3} 
		. 
	\end{align*}
	This fact will simplify calculations tremendously. 
\end{remark}
While working with asymptotic expansions, we will need two more conventions regarding the use of Landau symbols: we say that a function $f_c \asymp \sum_{n = 0}^{\infty} \tfrac{1}{c^n} \, f_n = \order(\nicefrac{1}{c^{\infty}})$ if and only if $f_c = \order(\nicefrac{1}{c^N})$ holds for all $N \in \N_0$. Secondly, an operator $A_c$ is $\order_{\norm{\cdot}}(\nicefrac{1}{c^n})$, $n \in \N_0 \cup \{ \infty \}$, if $\norm{A_c} = \order(\nicefrac{1}{c^n})$. 
% subsection Magnetic pseudodifferential operators (end)

\subsection{Approximate quantum dynamics} % (fold)
\label{semirel:quantum_dynamics}
Now we will establish the existence of effective dynamics in the sense of Diagram~\eqref{adiabatic_limit:diagram:unperturbed} and give the first few orders of the effective hamiltonian explicitly: 
\begin{thm}[Semi-relativistic limit]\label{semirel:thm:semi-relativistic_limit}
	Let Assumption~\ref{intro:assumption:fields} on $\mathbf{B}$, $V$ and $A$ be satisfied. Then electronic and positronic degrees of freedom decouple to any order in $\nicefrac{1}{c} \ll 1$ in the sense of Diagram~\eqref{adiabatic_limit:diagram:unperturbed}: \ie there exist 
	\begin{enumerate}[(i)]
		\item a projection $\Pi = \Opsr^A(\pi) = \Opsr^A(\pi_0) + \order_{\norm{\cdot}}(\nicefrac{1}{c^3})$, $\pi \in \SemiHoer{0}$, 
		\item an intertwining unitary $U = \Opsr^A(u) + \order_{\norm{\cdot}}(\nicefrac{1}{c^{\infty}}) = \Opsr^A(u_0) + \order_{\norm{\cdot}}(\nicefrac{1}{c^3})$, $u \in \SemiHoer{0}$, 
		\item and an effective hamiltonian which is the semi-relativistic quantization of 
		\begin{align}
			h_{\mathrm{eff}} &= \piref \, u \magWsr H \magWsr u^* \, \piref 
			% \asymp \sum_{j = 0}^{\infty} \tfrac{1}{c^j} \, h_{\mathrm{eff} \, j}
			\notag \\
			&= E + \tfrac{1}{c^2} V - \tfrac{1}{c^3} \tfrac{\eps}{2 E (E+m)} 
				% \Bigl ( 
					\bigl ( (E+m) \, \mathbf{B} - (\nabla_x V \wedge \xi) \bigr ) \cdot \sigma \oplus 0_{\C^2}
				% \Bigr ) 
			+ \order(\nicefrac{1}{c^4}) 
			\in \SemiHoer{1}
		\end{align}
		where $E(\xi) := \sqrt{m^2 + \xi^2}$ and $\sigma_j$, $j = 1 , 2 , 3$ are the Pauli matrices. 
	\end{enumerate}
	These operators satisfy 
	\begin{align}
		\bigl [ \Hd , \Pi \bigr ] = 0 
	\end{align}
	and for electronic initial states, namely those in $\Pi L^2(\R^3,\C^4)$, the full dynamics is approximated by the effective dynamics to any order in $\nicefrac{1}{c}$, 
	\begin{align}
		\Bnorm{\Bigl ( e^{- i t \Hd} -  U^* \, e^{- i t \Opsr^A(h_{\mathrm{eff}})} \, U \Bigr ) \Pi} = \order \bigl ( \nicefrac{(1 + \abs{t})}{c^{\infty}} \bigr )
		. 
	\end{align}
\end{thm}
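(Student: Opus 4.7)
The plan is to adapt the space-adiabatic perturbation theory of Panati, Spohn and Teufel to the semi-relativistic magnetic pseudodifferential calculus of Section~\ref{semirel:mag_PsiDO}. I first solve the defect equations~\eqref{semirel:eqn:4_defect_equations_symbols} at the level of symbols, constructing $\pi, u \in \SemiHoer{0}$ order by order in $\nicefrac{1}{c}$, then Borel-resum, quantize via $\Opsr^A$, and finally promote the approximate quantized symbols to a genuine orthogonal projection $\Pi$ commuting exactly with $\Hd$ and a true unitary $U$. A Duhamel argument then transfers the symbolic identity
\begin{align*}
u \magWsr \Hsr \magWsr u^* = \piref \, h_{\mathrm{eff}} \, \piref + \order(\nicefrac{1}{c^{\infty}})
\end{align*}
into the stated norm bound on the full propagator.

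For the recursion, the leading terms $\pi_0 = \pi_{\mathrm{sr} \, 0}$ from~\eqref{scaling:eqn:pi0sr} and $u_0 = u_{\mathrm{sr} \, 0}$ from~\eqref{scaling:eqn:u0sr} solve all four defect equations at order zero, since $\pi_0$ is the pointwise spectral projection of $H_{\mathrm{sr} \, 0}$ onto the upper band and $u_0$ is its pointwise diagonalizer. At order $n$, expanding $\pi \Weyl \pi - \pi$ and $[\Hsr, \pi]_{\Weyl}$ via Lemma~\ref{semirel:lem:asymp_expansion_magW} produces two equations for $\pi_n$: the projection condition fixes the part of $\pi_n$ diagonal with respect to $\pi_0$ in terms of $\pi_0, \ldots, \pi_{n-1}$, while the commutator condition determines the off-diagonal part by inverting the pointwise commutator with $H_{\mathrm{sr} \, 0}(\xi) = m \beta + \xi \cdot \alpha$. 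This inversion is controlled by the spectral gap $2 E(\xi) \geq 2m$, which is bounded below uniformly in $\xi$, so the inverse is a smooth bounded symbol and each $\pi_n$ lies in the expected class $\Hoer{-\frac{2}{3} n}$. The intertwiner $u$ is built analogously from $u_0$; the fractional factor $\nicefrac{1}{3}$ in Definition~\ref{semirel:defn:semi-relativistic_symbol} accounts precisely for the symbol-order drop produced at each order of $\magWsr$ by Lemma~\ref{semirel:lem:asymp_expansion_magW}.

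After Borel resummation, $\Opsr^A(\pi)$ is an approximate projection satisfying $\Opsr^A(\pi)^2 - \Opsr^A(\pi) = \order_{\snorm{\cdot}}(\nicefrac{1}{c^{\infty}})$, and a Riesz projection around the isolated eigenvalue $1$ produces the true orthogonal projection $\Pi$ commuting exactly with $\Hd$ and differing from $\Opsr^A(\pi)$ by $\order_{\snorm{\cdot}}(\nicefrac{1}{c^{\infty}})$. A polar-decomposition construction similarly turns $\Opsr^A(u)$ into a genuine unitary $U$. The effective symbol is then read off from $h_{\mathrm{eff}} = \piref \, u \magWsr \Hsr \magWsr u^* \, \piref$ order by order: at orders $\order(1)$ and $\order(\nicefrac{1}{c^2})$ only the leading pieces $u_0 H_{\mathrm{sr} \, 0} u_0^* = E \beta$ and $V$ contribute because, by Remark~\ref{semirel:rem:product_two_momentum_functions}, purely magnetic corrections first appear at order $\nicefrac{1}{c^3}$. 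At order $\nicefrac{1}{c^3}$ the magnetic term $((E+m) \mathbf{B}) \cdot \sigma$ comes directly from the magnetic Moyal expansion of $u_0 \, H_{\mathrm{sr} \, 0} \, u_0^*$, while the electric term $(\nabla_x V \wedge \xi) \cdot \sigma$ arises from the first non-trivial Poisson-bracket contribution to $u_0 \magWsr V \magWsr u_0^*$ combined with the $\xi$-dependence of $u_0$; the projection onto the electronic block yields the displayed expression.

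The propagator bound follows from a standard Duhamel argument: writing
\begin{align*}
\bigl ( e^{-it\Hd} - U^* e^{-it \Opsr^A(h_{\mathrm{eff}})} U \bigr ) \Pi = \int_0^t \dd s \, \tfrac{\dd}{\dd s} \Bigl ( e^{-i(t-s)\Hd} \, U^* e^{-is \Opsr^A(h_{\mathrm{eff}})} U \, \Pi \Bigr )
\end{align*}
and using that the $s$-derivative equals $-i e^{-i(t-s)\Hd} \bigl ( \Hd U^* - U^* \Opsr^A(h_{\mathrm{eff}}) \bigr ) e^{-is \Opsr^A(h_{\mathrm{eff}})} U \Pi$, which is $\order_{\snorm{\cdot}}(\nicefrac{1}{c^{\infty}})$ on $\Pi L^2(\R^3,\C^4)$ by construction of $\pi$ and $u$; integration in $s$ supplies the $(1 + \abs{t})$ factor. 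The hardest step is the recursion combined with the lift to honest operators: one must simultaneously control the unboundedness of $\Hsr$ (whose principal symbol sits in $\Hoermr{1}{1}$), verify that the resolvent of $\Opsr^A(h_0)$ used in the Riesz projection step gives rise to magnetic pseudodifferential operators in the correct symbol classes uniformly in $c$ — which requires the Beals-type commutator criterion of~\cite{Iftimie_Mantoiu_Purice:commutator_criteria:2008} — and track the fractional Hörmander-order bookkeeping dictated by Lemma~\ref{semirel:lem:asymp_expansion_magW} through every step.
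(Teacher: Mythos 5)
Your overall strategy — solve the defect equations \eqref{semirel:eqn:4_defect_equations_symbols} order by order in the semi-relativistic magnetic calculus, resum, quantize, and close with a Duhamel argument — is the same space-adiabatic scheme the paper follows, and your recursion, the role of the gap, and the way you extract $h_{\mathrm{eff}}$ (leading orders from $u_0 H_{\mathrm{sr}\,0} u_0^*$ and $V$, first magnetic/spin terms at order $\nicefrac{1}{c^3}$ via Remark~\ref{semirel:rem:product_two_momentum_functions}) match Lemmas~\ref{semirel:lem:existence_Pi}, \ref{semirel:lem:existence_U} and the computation referred to in \cite{Lein:two_parameter_asymptotics:2008}. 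The genuine gap is in the step where you promote $\Opsr^A(\pi)$ to the operator $\Pi$: a Riesz projection of $\Opsr^A(\pi)$ around the eigenvalue $1$ commutes with $\Opsr^A(\pi)$, not with $\Hd$. After resummation the symbol only satisfies the defect equations up to $\order(\nicefrac{1}{c^{\infty}})$, so $\bigl [ \Hd , \Opsr^A(\pi) \bigr ] = \order_{\norm{\cdot}}(\nicefrac{1}{c^{\infty}})$, and the same holds for its Riesz projection. Your construction therefore delivers only $\bigl [ \Hd , \Pi \bigr ] = \order_{\norm{\cdot}}(\nicefrac{1}{c^{\infty}})$ and $\Pi = \Opsr^A(\pi) + \order_{\norm{\cdot}}(\nicefrac{1}{c^{\infty}})$ — Teufel's original statement, which suffices for the propagator estimate but is strictly weaker than item (i) and the displayed identity $\bigl [ \Hd , \Pi \bigr ] = 0$ of the theorem, which assert that $\Pi$ \emph{is} the quantization of a semi-relativistic symbol and commutes \emph{exactly} with $\Hd$.

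The paper closes this by not building $\pi$ from the truncated recursion at all: by Proposition~\ref{appendix:mag_PsiDOs:prop:inversion} (the matrix-valued magnetic Beals/Bony criteria) the \emph{exact} Moyal resolvent $\bigl ( \Hsr - z \bigr )^{(-1)^B_{\mathrm{sr}}}$ is a Hörmander symbol of class $\Hoermr{-1}{1}$, and $\pi$ is defined directly by the Cauchy integral \eqref{semirel:eqn:pi_exact} of this exact resolvent over contours enclosing the electronic band. The Moyal resolvent identity then gives $\pi \magWsr \pi = \pi$ and $\bigl [ \Hsr , \pi \bigr ]_{\magWsr} = 0$ exactly, so $\Pi = \Opsr^A(\pi)$ is an exact projection commuting exactly with $\Hd$; the recursion — together with the \emph{growth} of the gap like $\sqrt{1 + \xi^2}$, not merely its lower bound $2m$ — only serves to identify the expansion terms $\pi_n$ and their symbol classes. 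The remark following Lemma~\ref{semirel:lem:existence_Pi} makes precisely this point: the exact Moyal resolvent removes the need for the spectral/Riesz-projection step you propose. A related, smaller discrepancy: the paper obtains the true unitary via the Nagy formula built from both $u$ and $\pi$, which intertwines $\Pi$ and $\Piref$ exactly, whereas a bare polar decomposition of $\Opsr^A(u)$ yields unitarity but not the exact intertwining used in Diagram~\eqref{adiabatic_limit:diagram:perturbed}.
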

Since only some technicalities of the proof of \cite[Theorem~3.2]{Teufel:adiabatic_perturbation_theory:2003}  need to be modified, we will content ourselves to give an outline of the proof and focus on the necessary modifications as in \cite{DeNittis_Lein:Bloch_electron:2009}. To improve readability, we will show the existence of $\Pi$, $U$ and $\heff$ separately. 
\begin{lem}\label{semirel:lem:existence_Pi}
	Under the assumptions of Theorem~\ref{semirel:thm:semi-relativistic_limit}, there exists a projection 
	\begin{align*}
		\Pi = \Pi^2 
		= \Opsr^A(\pi) 
	\end{align*}
	which commutes with $\Hd$, $\bigl [ \Hd , \Pi \bigr ] = 0$, and is the quantization of a Moyal projector 
	\begin{align*}
		\pi \asymp \sum_{n = 0}^{\infty} \tfrac{1}{c^n} \, \pi_n \in \SemiHoer{0} 
		. 
	\end{align*}
	Furthermore, the first non-trivial correction of $\pi$ is of third order, \ie $\pi = \pi_0 + \order(\nicefrac{1}{c^3})$. 
\end{lem}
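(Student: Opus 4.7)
The plan is to follow the standard space-adiabatic perturbation theory approach of \cite{PST:sapt:2002} and \cite{Teufel:adiabatic_perturbation_theory:2003}, adapted to the semi-relativistic magnetic Weyl calculus of Section~\ref{semirel:mag_PsiDO}. Concretely, I first construct a formal semi-relativistic symbol $\pi \asymp \sum_{n = 0}^{\infty} \tfrac{1}{c^n} \pi_n \in \SemiHoer{0}$ satisfying the two defect equations $\pi \magWsr \pi = \pi$ and $[\Hsr,\pi]_{\magWsr} = 0$ order by order in $\nicefrac{1}{c}$. I then asymptotically resum to an honest symbol, quantize via $\Opsr^A$ to obtain an approximate projection $\tilde\Pi$ that commutes with $\Hd$ up to $\order_{\norm{\cdot}}(\nicefrac{1}{c^\infty})$, and finally use a Riesz-projection argument exploiting the spectral gap of $\Hd$ to replace $\tilde\Pi$ by the true projection $\Pi$ with $\Pi - \tilde\Pi = \order_{\norm{\cdot}}(\nicefrac{1}{c^\infty})$ and $[\Hd,\Pi] = 0$ exactly.

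The leading term is $\pi_0 = \pi_{\mathrm{sr}\,0}$ from~\eqref{scaling:eqn:pi0sr}. For the recursion step I split each $\pi_n = \pi_n^{\mathrm{D}} + \pi_n^{\mathrm{OD}}$ into diagonal and off-diagonal parts with respect to $\pi_0$. Reading off the order-$\nicefrac{1}{c^n}$ coefficient of $\pi \magWsr \pi - \pi = 0$ yields an algebraic equation that fixes $\pi_n^{\mathrm{D}}$ in terms of $\pi_0 , \ldots , \pi_{n-1}$. Reading off the order-$\nicefrac{1}{c^n}$ coefficient of $[\Hsr,\pi]_{\magWsr} = 0$ reduces, after separating the principal part, to a fibered $4 \times 4$ equation $[H_{\mathrm{sr}\,0}(\xi), \pi_n^{\mathrm{OD}}(x,\xi)] = F_n(x,\xi)$ whose right-hand side depends only on the previously constructed $\pi_k$, $k < n$. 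The spectral gap $2 E(\xi) \geq 2m > 0$ of $H_{\mathrm{sr}\,0}(\xi)$ makes $\mathrm{ad}(H_{\mathrm{sr}\,0}(\xi))$ invertible on off-diagonal matrices with norm uniformly bounded in $\xi$, so $\pi_n^{\mathrm{OD}}$ is uniquely determined; Lemma~\ref{semirel:lem:asymp_expansion_magW} then places $\pi_n \in \Hoer{-\nicefrac{2}{3}\,n}$, and Hermiticity of each $\pi_n$ is preserved by the standard choice.

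The claim $\pi_1 = \pi_2 = 0$ is a structural consequence of the leading order depending only on momentum. Since both $H_{\mathrm{sr}\,0}$ and $\pi_0$ are functions of $\xi$ alone, Remark~\ref{semirel:rem:product_two_momentum_functions} gives $[H_{\mathrm{sr}\,0},\pi_0]_{\magWsr} = \order(\nicefrac{1}{c^3})$, while the perturbation $\tfrac{1}{c^2} V$ contributes to $[\Hsr,\pi_0]_{\magWsr}$ only through the leading Poisson-bracket term of $\magWsr$, which carries an additional prefactor $\nicefrac{\eps}{c}$, so its overall contribution is again $\order(\nicefrac{1}{c^3})$. Hence the source of the recursion vanishes at orders $\nicefrac{1}{c}$ and $\nicefrac{1}{c^2}$, forcing $\pi_1 = \pi_2 = 0$; the first nonzero correction $\pi_3$ is obtained by inverting $\mathrm{ad}(H_{\mathrm{sr}\,0}(\xi))$ against the explicit $\nicefrac{1}{c^3}$ source.

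Borel-type asymptotic resummation adapted to $\SemiHoer{0}$ as in \cite[Appendix~A]{Teufel:adiabatic_perturbation_theory:2003} produces a symbol $\pi \in \SemiHoer{0}$ realizing the formal series, and $\tilde\Pi := \Opsr^A(\pi)$ is self-adjoint with $\tilde\Pi^2 - \tilde\Pi = \order_{\norm{\cdot}}(\nicefrac{1}{c^\infty})$ and $[\Hd,\tilde\Pi] = \order_{\norm{\cdot}}(\nicefrac{1}{c^\infty})$ by the magnetic Caldéron--Vaillancourt theorem. For $c$ sufficiently large the spectral gap of $\Hd$ around $0$ persists since $\Hd - m\beta$ is controlled by $\nicefrac{1}{c}$ on the appropriate form domain, so $\Pi := \frac{i}{2\pi} \oint_\gamma (z - \Hd)^{-1} \, dz$ with $\gamma$ encircling the positive part of $\sigma(\Hd)$ is a well-defined orthogonal projection with $[\Hd,\Pi] = 0$. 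Conjugating $\tilde\Pi$ past the resolvent and using $[\Hd,\tilde\Pi] = \order_{\norm{\cdot}}(\nicefrac{1}{c^\infty})$ gives $[\Pi,\tilde\Pi] = \order_{\norm{\cdot}}(\nicefrac{1}{c^\infty})$, and a standard argument (\cf~\cite[Chapter~3]{Teufel:adiabatic_perturbation_theory:2003}) then upgrades this to $\Pi - \tilde\Pi = \order_{\norm{\cdot}}(\nicefrac{1}{c^\infty})$, so $\Pi$ has the required symbolic form. I expect the main obstacle to be not the algebra but the careful bookkeeping of semi-relativistic symbol classes through the two-parameter $(\eps,\nicefrac{1}{c})$ expansion, in particular verifying that the $\tfrac{2}{3}$-weight built into Definition~\ref{semirel:defn:semi-relativistic_symbol} is respected at each recursive step.
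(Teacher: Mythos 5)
Your construction of the formal series (diagonal part from the projection defect, off-diagonal part by inverting $\mathrm{ad}\bigl(H_{\mathrm{sr}\,0}(\xi)\bigr)$ on the gap, and the structural argument that the sources vanish at orders $\nicefrac{1}{c}$ and $\nicefrac{1}{c^2}$ so that $\pi = \pi_0 + \order(\nicefrac{1}{c^3})$) is consistent with how the terms $\pi_n$ are actually computed. The gap lies in your final "upgrade" step. First, a technical point: the positive part of $\sigma(\Hd)$ is unbounded, so the Riesz integral $\frac{i}{2\pi}\oint_\gamma (z-\Hd)^{-1}\,\dd z$ over a contour "encircling the positive part of the spectrum" is not defined as written; you would have to pass to the spectral theorem or a sign-function representation. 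More seriously, the inference from $[\Pi,\tilde\Pi] = \order_{\norm{\cdot}}(\nicefrac{1}{c^{\infty}})$ to $\Pi - \tilde\Pi = \order_{\norm{\cdot}}(\nicefrac{1}{c^{\infty}})$ is not valid: two projections can commute exactly and be far apart (e.g.\ complementary ones), so almost-commutation gives no norm-closeness at all. The "standard argument" you invoke from Teufel's Chapter~3 does something different: there the true projection is taken to be a spectral projection of the \emph{almost-projection} $\tilde\Pi$ (spectrum near $1$), which makes closeness automatic but only yields $[\Hd,\Pi] = \order_{\norm{\cdot}}(\nicefrac{1}{c^{\infty}})$ — whereas the lemma claims exact commutation. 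To close your version one would still have to show that the range of $\tilde\Pi$ agrees with the positive spectral subspace of $\Hd$ up to errors of order $\nicefrac{1}{c^{\infty}}$, and that does not follow from the properties you have established.

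The paper avoids this issue altogether by not going through resummation and an a posteriori comparison. Proposition~\ref{appendix:mag_PsiDOs:prop:inversion} (a matrix-valued Beals/Bony-type result) shows that the \emph{exact} Moyal resolvent $\bigl(\Hsr - z\bigr)^{(-1)^B_{\mathrm{sr}}}$ is a Hörmander symbol of class $\Hoermr{-1}{1}$, so one may define $\pi$ directly as a Cauchy integral of the Moyal resolvent over local contours around $\specrel$; the gap growing like $\sqrt{1+\xi^2}$ controls the contour length and puts $\pi \in \Hoermr{0}{1}$. By the Moyal resolvent identity $\pi$ is then an exact Moyal projection which Moyal-commutes with $\Hsr$, hence $\Pi = \Opsr^A(\pi)$ is an exact projection with $\bigl[\Hd,\Pi\bigr] = 0$ — no Borel resummation, no spectral projection of $\Opsr^A(\pi)$, and no comparison with $1_{[0,\infty)}(\Hd)$ is needed (this simplification over Teufel's scheme is exactly the point of the remark following the lemma). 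The $\nicefrac{1}{c}$-expansion of the Moyal resolvent via Lemma~\ref{semirel:lem:asymp_expansion_magW} then yields $\pi \asymp \sum_n \tfrac{1}{c^n}\pi_n \in \SemiHoer{0}$ with $\pi_n \in \Hoermr{-\nicefrac{n}{3}}{1}$, and the Teufel-type recursion you set up is used only to compute these terms and to verify that the first non-trivial one is of third order. If you want to keep your route, you should replace the last step by the paper's exact-resolvent construction or supply a genuine proof that $1_{[0,\infty)}(\Hd)$ and $\Opsr^A(\pi)$ differ by $\order_{\norm{\cdot}}(\nicefrac{1}{c^{\infty}})$.
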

\begin{proof}
	First of all, we note that $\Hd = \Opsr^A(\Hsr)$ is selfadjoint on $H^1_A(\R^3,\C^4)$ by Corollary~\ref{appendix:mag_PsiDOs:cor:selfadjointness_H_D}. Clearly $\Hsr \in \SemiHoer{1}$ holds and the spectral gap between $\specrel(x,\xi)$ and the remainder of $\sigma \bigl ( H_0(x,\xi) \bigr )$ is equal to $2 \sqrt{m^2 + \xi^2}$. Thus $(\mathrm{IG})_1$ (cf.~\cite[p.~74]{Teufel:adiabatic_perturbation_theory:2003}) is satisfied. 
	
	Then for any $(x_0,\xi_0) \in T^* \R^3$ and a suitably chosen contour $\Gamma(x_0,\xi_0)$, the Moyal projection
	\begin{align}
		\pi(x,\xi) := \frac{i}{2\pi} \int_{\Gamma(x_0,\xi_0)} \dd z \, \bigl ( \Hsr - z \bigr )^{(-1)^B_{\mathrm{sr}}}(x,\xi)
		\label{semirel:eqn:pi_exact}
	\end{align}
	can locally be written as a Cauchy integral of the $n$th term of the Moyal resolvent, \ie the tempered distribution which satisfies 
	\begin{align*}
		\bigl ( \Hsr - z \bigr ) \magWsr \bigl ( \Hsr - z \bigr )^{(-1)^B_{\mathrm{sr}}} = 1 
		= \bigl ( \Hsr - z \bigr )^{(-1)^B_{\mathrm{sr}}} \magWsr \bigl ( \Hsr - z \bigr )
		. 
	\end{align*}
	Proposition~\ref{appendix:mag_PsiDOs:prop:inversion} tells us the Moyal resolvent is a symbol of class $\Hoermr{-1}{1}$. From the Gap Condition, we know we can always choose a contour whose circumference does not exceed $C \sqrt{1 + \xi^2}$ for some $C > 0$ independent of $x$ and $\xi$. Hence, $\pi$ as defined in \eqref{semirel:eqn:pi_exact} is in $\Hoermr{0}{1}$. By the magnetic Caldéron--Vaillancourt Theorem \cite[Theorem~3.1]{Iftimie_Mantiou_Purice:magnetic_psido:2006} and our assumptions on $\mathbf{B}$ and $A$, $\Opsr^A(\pi)$ defines a bounded operator on $L^2(\R^3,\C^4)$. 
	
	Using the (Moyal) resolvent identity, we conclude that $\pi$ is really a Moyal \emph{projection} and thus $\Pi^2 = \Pi = \Opsr^A(\pi)$ is a projection in the operator sense. By definition, $\bigl ( \Hsr - z \bigr )^{(-1)^B_{\mathrm{sr}}}$ Moyal commutes with $\Hsr$ which in view of equation~\eqref{semirel:eqn:pi_exact} implies 
	\begin{align*}
		\bigl [ \Hsr , \pi \bigr ]_{\magWsr} = \Hsr \magWsr \pi - \pi \magWsr \Hsr = 0 
		. 
	\end{align*}
	The $\nicefrac{1}{c}$ expansion of $\magWsr$ (Lemma~\ref{semirel:lem:asymp_expansion_magW}) yields asymptotic expansion of the Moyal resolvent, 
	\begin{align*}
		\bigl ( \Hsr - z \bigr )^{(-1)^B_{\mathrm{sr}}} \asymp \sum_{n = 0}^{\infty} \tfrac{1}{c^n} \, R_n(z) 
		, 
	\end{align*}
	where each of the terms $R_n(z) \in \Hoermr{- 1 - \nicefrac{n}{3}}{1}$ is in the correct symbol class. Since the gap increases as $\sqrt{1 + \xi^2}$, we can estimate the seminorms of $\pi_n$ by seminorms of $R_n$ times the length of the contour which is at most $C \sqrt{1 + \xi^2}$, and $\pi_n$ locally defined as a contour integral over the $R_n(z)$ as in \eqref{semirel:eqn:pi_exact}, is an element of $\Hoermr{-\nicefrac{n}{3}}{1}$. 
		
	To compute the terms $\pi_n$, we make straightforward modifications to the proof of  \cite[Lemma~3.8]{Teufel:adiabatic_perturbation_theory:2003}: we note 
	\begin{align*}
		\pi_0 \magWsr \pi_0 - \pi_0 = \order(\nicefrac{1}{c^3}) \in \Hoermr{-2}{1} 
		\subset \Hoermr{-1}{1}
	\end{align*}
	and
	\begin{align*}
		\bigl [ \Hsr , \pi_0 \bigr ]_{\magWsr} = \order(\nicefrac{1}{c^3}) \in \Hoermr{-1}{1} 
		, 
	\end{align*}
	hold true by Lemma~\ref{semirel:lem:asymp_expansion_magW}, Remark~\ref{semirel:rem:product_two_momentum_functions} and 
	\begin{align*}
		\bigl [ V , \pi_0 \bigr ]_{\magWsr} = \tfrac{\eps}{c} \bigl \{ V , \pi_0 \bigr \} + \order(\nicefrac{1}{c^2}) \in \Hoermr{-1}{1} 
		. 
	\end{align*}
	Hence, $\pi_0$ satisfies the induction assumption in Teufel's proof, and the rest of the proof in \cite{Teufel:adiabatic_perturbation_theory:2003} can be transliterated with obvious modifications. The reader may check that the same recursion relations hold for each $\pi_n$, replacing the $\eps$ expansion of $\Weyl$ with the $\nicefrac{1}{c}$ expansion of $\magWsr$ and using Lemma~\ref{semirel:lem:asymp_expansion_magW}. 
\end{proof}
\begin{remark}
	Note that Proposition~\ref{appendix:mag_PsiDOs:prop:inversion} simplifies Teufel's proof: we no longer need to define $\Pi$ as spectral projection of $\Opsr^A(\pi)$. 
\end{remark}
\begin{lem}\label{semirel:lem:existence_U}
	Under the assumptions of Theorem~\ref{semirel:thm:semi-relativistic_limit}, there exists a unitary operator 
	\begin{align*}
		U = \Opsr^A(u) + \order_{\norm{\cdot}}(\nicefrac{1}{c^{\infty}}) 
	\end{align*}
	such that $U \Pi U^* = \Piref$ and $U$ is $\order_{\norm{\cdot}}(\nicefrac{1}{c^{\infty}})$-close to the quantization of an almost-Moyal unitary 
	\begin{align*}
		u \asymp \sum_{n = 0}^{\infty} \tfrac{1}{c^n} \, u_n \in \SemiHoer{0} 
		. 
	\end{align*}
	Furthermore, the first non-trivial correction of $u$ is of third order, \ie $u = u_0 + \order(\nicefrac{1}{c^3})$. 
\end{lem}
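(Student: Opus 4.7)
The construction follows the scheme of space-adiabatic perturbation theory as in the analogous part of \cite{Teufel:adiabatic_perturbation_theory:2003}, replacing the ordinary Weyl product by $\magWsr$ and invoking the $\nicefrac{1}{c}$-expansion of Lemma~\ref{semirel:lem:asymp_expansion_magW} throughout. At leading order take $u_0 = u_{\mathrm{sr}\, 0}$ from \eqref{scaling:eqn:u0sr}: this is a smooth bounded matrix-valued function of $\xi$ alone, hence an element of $\Hoermr{0}{1}$, and it satisfies $u_0 u_0^* = u_0^* u_0 = \id_{\C^4}$ together with $u_0 \, \pi_0 \, u_0^* = \piref$ pointwise.

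The higher-order corrections are determined by induction on $N$. Suppose $u^{(N-1)} := \sum_{n = 0}^{N-1} \tfrac{1}{c^n} u_n$ is a semi-relativistic symbol of order zero with $u_n \in \Hoermr{-\frac{2n}{3}}{1}$ satisfying both the almost-unitarity and intertwining defect equations from~\eqref{semirel:eqn:4_defect_equations_symbols} modulo $\order(\nicefrac{1}{c^N})$. Applying Lemma~\ref{semirel:lem:asymp_expansion_magW} to expand both defects produces explicit symbols $A_N, D_N \in \Hoermr{-\frac{2N}{3}}{1}$: the former is the order-$N$ coefficient of $\id_{\C^4} - u^{(N-1)} \magWsr u^{(N-1)*}$ (automatically hermitian), the latter that of $\piref - u^{(N-1)} \magWsr \pi \magWsr u^{(N-1)*}$. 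Following the standard algebraic ansatz, set
\begin{align*}
	u_N := \bigl ( \tfrac{1}{2} A_N + \bigl [ D_N \, , \piref \bigr ] \bigr ) \, u_0 \in \Hoermr{-\frac{2N}{3}}{1}
	,
\end{align*}
the first summand absorbing the hermitian diagonal defect and the second the off-diagonal intertwining defect. A direct computation using Lemma~\ref{semirel:lem:asymp_expansion_magW}(ii) confirms that the remaining defects now lie in $\Hoermr{-\frac{2(N+1)}{3}}{1}$ at prefactor $\nicefrac{1}{c^{N+1}}$, closing the induction. Borel resummation of $u \asymp \sum_{n = 0}^{\infty} \tfrac{1}{c^n} u_n$ within $\SemiHoer{0}$ then produces an actual semi-relativistic symbol, also denoted~$u$.

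To see that the first non-trivial correction is of order $\nicefrac{1}{c^3}$, note that $u_0$, $\pi_0$, $u_0^*$ and $\piref$ are all functions of momentum only, and by Remark~\ref{semirel:rem:product_two_momentum_functions} any $\magWsr$-product of such symbols coincides with the pointwise product up to $\order(\nicefrac{1}{c^3})$; combined with $\pi = \pi_0 + \order(\nicefrac{1}{c^3})$ from Lemma~\ref{semirel:lem:existence_Pi}, this forces $A_1 = A_2 = 0$ and $D_1 = D_2 = 0$, so $u_1 = u_2 = 0$ is the unique consistent choice. Finally, $\Opsr^A(u)$ is bounded by the magnetic Caldéron--Vaillancourt theorem and almost-unitary by construction. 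A standard polar-decomposition argument,
\begin{align*}
	U := \Opsr^A(u) \, \bigl ( \Opsr^A(u)^* \, \Opsr^A(u) \bigr )^{- \nicefrac{1}{2}}
	,
\end{align*}
then yields an honest unitary with $U = \Opsr^A(u) + \order_{\norm{\cdot}}(\nicefrac{1}{c^{\infty}})$, and $U \Pi U^* = \Piref$ follows from the symbol-level intertwining combined with these $\order_{\norm{\cdot}}(\nicefrac{1}{c^{\infty}})$ estimates.

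The main obstacle is the symbol-class bookkeeping: each recursion step contains $\magWsr$-products involving the Moyal resolvent used to construct $\pi$, so one must verify carefully that the orders gained from additional powers of $\nicefrac{1}{c}$ exactly balance the orders lost through derivatives in the expansion of $\magWsr$, matching the $\nicefrac{2}{3}$-rate encoded in Definition~\ref{semirel:defn:semi-relativistic_symbol}. This is precisely what parts~(ii) and~(iii) of Lemma~\ref{semirel:lem:asymp_expansion_magW} are designed to guarantee.
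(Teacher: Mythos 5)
Your construction of the symbol $u$ follows essentially the same route as the paper: the induction scheme of \cite[Lemma~3.15]{Teufel:adiabatic_perturbation_theory:2003} transcribed into the semi-relativistic magnetic calculus, with Lemma~\ref{semirel:lem:asymp_expansion_magW} handling the symbol-class bookkeeping, and the base case -- hence $u_1 = u_2 = 0$ and $u = u_0 + \order(\nicefrac{1}{c^3})$ -- obtained exactly as in the paper from Remark~\ref{semirel:rem:product_two_momentum_functions} together with $\pi = \pi_0 + \order(\nicefrac{1}{c^3})$ from Lemma~\ref{semirel:lem:existence_Pi}. Up to that point the argument is sound.

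The gap is in the very last step. The polar decomposition $U := \Opsr^A(u) \, \bigl ( \Opsr^A(u)^* \, \Opsr^A(u) \bigr )^{-\nicefrac{1}{2}}$ does produce a true unitary with $U = \Opsr^A(u) + \order_{\norm{\cdot}}(\nicefrac{1}{c^{\infty}})$, but it only yields $U \Pi U^* = \Piref + \order_{\norm{\cdot}}(\nicefrac{1}{c^{\infty}})$: the exact identity $U \Pi U^* = \Piref$ required by the lemma does \emph{not} follow from the symbol-level intertwining combined with these norm estimates, since an $\order_{\norm{\cdot}}(\nicefrac{1}{c^{\infty}})$-small defect is small but not zero. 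The paper closes this by invoking the Sz.-Nagy formula (cf.~\cite[pp.~87--88]{Teufel:adiabatic_perturbation_theory:2003}): because $U \Pi U^*$ and $\Piref$ are orthogonal projections whose difference has norm smaller than $1$ for $c$ large enough, there exists a unitary $W = \id + \order_{\norm{\cdot}}(\nicefrac{1}{c^{\infty}})$ with $W \, \bigl ( U \Pi U^* \bigr ) \, W^* = \Piref$, and the operator $W U$ -- still $\order_{\norm{\cdot}}(\nicefrac{1}{c^{\infty}})$-close to $\Opsr^A(u)$ -- is the unitary asserted in the lemma. Without this additional Nagy-type correction your $U$ fails the exact intertwining property, so you should append this step (or build the Nagy correction directly into the definition of $U$, as the paper does).
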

\begin{proof}
	We have to make the necessary modifications to the proof of Lemma~3.15 in \cite{Teufel:adiabatic_perturbation_theory:2003} by replacing standard pseudodifferential operators with semi-relativistic pseudodifferential operators, analogously to the proof of Lemma~\ref{semirel:lem:existence_Pi}. 
	
	Again, the magnetic Caldéron--Vaillancourt theorem \cite[Theorem~3.1]{Iftimie_Mantiou_Purice:magnetic_psido:2006} and our assumptions on $\mathbf{B}$ and $A$ imply the operator $\Opsr^A(u_0)$ is bounded. 
	
	$u_0 \in \Hoermr{0}{1}$ as given by equation~\eqref{scaling:eqn:u0sr} is in the correct symbol class and satisfies 	
	\begin{align*}
		u_0 \magWsr {u_0}^* - 1 &= \order(\nicefrac{1}{c^3}) \in \Hoermr{-2}{1} \subset \Hoermr{-1}{1}
		\\
		{u_0}^* \magWsr u_0 - 1 &= \order(\nicefrac{1}{c^3}) \in \Hoermr{-2}{1} \subset \Hoermr{-1}{1}
	\end{align*}
	as well as 
	\begin{align*}
		u_0 \magWsr \pi \magWsr {u_0}^* - \piref &= u_0 \pi_0 {u_0}^* - \piref + \order(\nicefrac{1}{c^3}) 
		= \order(\nicefrac{1}{c^3}) \in \Hoermr{-1}{1} 
		, 
	\end{align*}
	and we may proceed by induction as in \cite{Teufel:adiabatic_perturbation_theory:2003}. In particular, the above equations imply $u = u_0 + \order(\nicefrac{1}{c^3})$. 
	
	Lastly, the true unitary $U$ is obtained from $u$ and $\pi$ through the Nagy formula (cf.~\cite[pp.~87--88]{Teufel:adiabatic_perturbation_theory:2003}). 
\end{proof}
\begin{proof}[Proof of Theorem~\ref{semirel:thm:semi-relativistic_limit}]
	From the proof of Lemma~\ref{semirel:lem:existence_U}, we know that $\Hd$ defines a selfadjoint operator on $H^1_A(\R^3,\C^4)$, and $\Hsr$ and $\specrel(x,\xi) = \bigl \{ \sqrt{m^2 + \xi^2} \bigr \}$ satisfy $(\mathrm{IG})_1$. 
	
	The existence of $\Pi = \Opsr^A(\pi)$ and $U = \Opsr^A(u) + \order(\nicefrac{1}{c^{\infty}})$ has been proven in Lemmas~\ref{semirel:lem:existence_Pi} and \ref{semirel:lem:existence_U}. Hence, the effective hamiltonian 
	\begin{align*}
		\heff := \piref \, u \magWsr H \magWsr u^* \, \piref \in \SemiHoer{1}
	\end{align*}
	exists and is in the correct symbol class by Lemma~\ref{semirel:lem:asymp_expansion_magW}. From Lemma~\ref{appendix:mag_PsiDOs:lem:selfadjointness} we deduce that $\Opsr^A(\heff)$ defines a selfadjoint operator on $H^1_A(\R^3,\C^2) \oplus L^2(\R^3,\C^2)$. 
	
	A standard Duhamel argument shows that $U^* \, e^{- i t \Opsr^A(h)} \, U$ approximates the full dynamics generated by $\Hd$ for states in $\Pi L^2(\R^3,\C^4)$ up to $\order_{\norm{\cdot}}(\nicefrac{1}{c^{\infty}})$ (cf.~\cite[proof of Theorem~3.20]{Teufel:adiabatic_perturbation_theory:2003}). 
	
	Computing $\heff$ is straightforward, but tedious; the details can be found in Section~III and Appendix~E of \cite{Lein:two_parameter_asymptotics:2008}, one only needs to insert powers of $\eps$ in the appropriate places. This concludes the proof. 
\end{proof}
%
% subsection Effective quantum dynamics (end)

\subsection{Spectral results} % (fold)
\label{semirel:spectrum}
Lastly, we would like to show how to infer from the presence of spectrum of $\Hd$ in the vicinity of $E_0 > 0$ the presence of spectrum of $\Opnr^A(\heff)$ in a possibly larger neighborhood of $E_0$ and vice versa. 

For any $k \in \N_0$, we introduce the finite summations 
\begin{align*}
	\Pi^{(k)} &:= \sum_{n = 0}^k \tfrac{1}{c^n} \, \Opsr^A(\pi_n)
	\\
	U^{(k)} &:= \sum_{n = 0}^k \tfrac{1}{c^n} \, \Opsr^A(u_n)
	\\
	H_{\mathrm{eff}}^{(k)} &:= \sum_{n = 0}^k \tfrac{1}{c^n} \, \Opsr^A( h_{\mathrm{eff} \, n})
	. 
\end{align*}
From the very definition of these objects, we know that $\Pi^{(k)} = {\Pi^{(k)}}^2 + \order_{\norm{\cdot}}(\nicefrac{1}{c^{k+1}})$ is an almost-projection and $U^{(k)}$ an almost-unitary. 
\begin{thm}\label{semirel:thm:spectrum}
	For any $k \in \N_0$, the following statements hold true: 
	\begin{enumerate}[(i)]
		\item Let $E_0 \in \sigma(\Hd) \cap [0,+\infty)$. Then for any $\delta > 0$, there exists $\Psi_{\delta} \in L^2(\R^3,\C^4)$, $\snorm{\Psi_{\delta}} = 1$, such that 
		\begin{align*}
			\Bnorm{\bigl ( H_{\mathrm{eff}}^{(k)} - E_0 \bigr ) \, U^{(k)} \, \Pi^{(k)} \, \Psi_{\delta}} < C_k \, \delta + \order(\nicefrac{1}{c^{k+1}})
		\end{align*}
		holds where $C_k$ and the $\order(\nicefrac{1}{c^{k+1}})$ term are independent of $\delta$. 
		\item Similarly, if $E_0 \in \sigma \bigl ( H_{\mathrm{eff}}^{(k)} \bigr ) \cap [0,+\infty)$, then for any $\delta > 0$, there exists $\Psi_{\delta} \in L^2(\R^3,\C^4)$ with $\snorm{\Psi_{\delta}} = 1$ such that 
		\begin{align*}
			\Bnorm{\bigl ( \Hd - E_0 \bigr ) \, {U^{(k)}}^* \, \Pi^{(k)} \, \Psi_{\delta}} < C_k \, \delta + \order(\nicefrac{1}{c^{k+1}})
		\end{align*}
		holds where $C_k$ and the $\order(\nicefrac{1}{c^{k+1}})$ term are independent of $\delta$. 
	\end{enumerate}
\end{thm}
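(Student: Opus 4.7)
\medskip

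\textbf{Proof plan.} The strategy is to work with the \emph{exact} projection $\Pi$ and intertwining unitary $U$ of Lemmas~\ref{semirel:lem:existence_Pi} and \ref{semirel:lem:existence_U}, transport a Weyl sequence across the intertwiner, and only at the very end pay the $\order_{\norm{\cdot}}(\nicefrac{1}{c^{k+1}})$ price for swapping $\Pi$, $U$, $\Opsr^A(\heff)$ for their truncations. The two ingredients that make this work are that $[\Hd,\Pi]=0$ and $U\Pi U^* = \Piref$ hold \emph{on the nose}, and that the reference projection $\piref$ is a \emph{constant} $4\times 4$ matrix, so $\piref \magWsr g \magWsr \piref = \piref \, g \, \piref$ exactly and consequently $\Opsr^A(\heff) = \Piref \, U \Hd U^* \, \Piref + \order_{\norm{\cdot}}(\nicefrac{1}{c^{\infty}})$ as operators on vectors of bounded $H^1_A$-norm.

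For part (i) I would first exploit $[\Hd,\Pi]=0$ to decompose $\Hd = \Hd|_{\Pi L^2} \oplus \Hd|_{(\id-\Pi)L^2}$. The unperturbed symbol $H_{\mathrm{sr}\,0}$ has a spectral gap of size at least $2m$ and the perturbation enters as $\tfrac{1}{c^2}V(\Qsr)$ with $V$ bounded, so for $c$ large enough the negative summand has spectrum in $(-\infty, -m + \order(\nicefrac{1}{c^2})]$; therefore $E_0 \in \sigma(\Hd)\cap[0,+\infty)$ necessarily lies in $\sigma(\Hd|_{\Pi L^2})$. Weyl's criterion on that self-adjoint restriction then delivers a unit vector $\Psi_\delta \in \Pi L^2(\R^3,\C^4) \cap H^1_A(\R^3,\C^4)$ with $\snorm{(\Hd - E_0)\Psi_\delta} < \delta$ and with $H^1_A$-norm bounded by $E_0 + 1$. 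Setting $\Phi_\delta := U\Psi_\delta \in \Piref L^2(\R^3,\C^4)$ and using $U\Pi = \Piref U$ together with the core identity above, a one-line computation gives $(\Opsr^A(\heff) - E_0)\Phi_\delta = U(\Hd - E_0)\Psi_\delta + \order_{\norm{\cdot}}(\nicefrac{1}{c^{\infty}})\Psi_\delta$, whence $\bnorm{(\Opsr^A(\heff) - E_0)\Phi_\delta} \leq \delta + \order(\nicefrac{1}{c^{\infty}})$. Finally I would replace $\Pi$, $U$, $\Opsr^A(\heff)$ by $\Pi^{(k)}$, $U^{(k)}$, $H_{\mathrm{eff}}^{(k)}$: the bounded differences $\Pi^{(k)}-\Pi$ and $U^{(k)}-U$ are $\order_{\norm{\cdot}}(\nicefrac{1}{c^{k+1}})$ by the magnetic Caldéron--Vaillancourt theorem applied to the symbol-class remainder of Definition~\ref{semirel:defn:semi-relativistic_symbol}, while the symbol tail $\heff - \heff^{(k)}$ of Lemma~\ref{semirel:lem:asymp_expansion_magW} acts on the $H^1_A$-controlled state $\Phi_\delta$ to give an $\order(\nicefrac{1}{c^{k+1}})$ error in $L^2$. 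The triangle inequality then yields the bound with $C_k$ independent of $\delta$.

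Part (ii) runs symmetrically. Because each $h_{\mathrm{eff}\,n}$ is block-diagonal with vanishing positronic block, $\sigma(H_{\mathrm{eff}}^{(k)}) \cap (0,+\infty)$ is carried by the self-adjoint restriction $H_{\mathrm{eff}}^{(k)}|_{\Piref L^2}$, and Weyl's criterion there produces a unit $\tilde\Psi_\delta \in \Piref L^2 \cap H^1_A$ with $\bnorm{(H_{\mathrm{eff}}^{(k)} - E_0)\tilde\Psi_\delta} < \delta$. The natural approximate $\Hd$-eigenvector is $U^*\tilde\Psi_\delta \in \Pi L^2$, coming from the dual form $\Hd U^* = U^* \Opsr^A(\heff) + \order_{\norm{\cdot}}(\nicefrac{1}{c^{\infty}})$ of the core identity on $\Piref L^2$. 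Choosing $\Psi_\delta$ so that $(U^{(k)})^* \Pi^{(k)} \Psi_\delta$ reproduces $U^*\tilde\Psi_\delta$ up to $\order(\nicefrac{1}{c^{k+1}})$, via the near-unitarity $U^{(k)}(U^{(k)})^* = \id + \order_{\norm{\cdot}}(\nicefrac{1}{c^{k+1}})$ and the relation $\Pi U^* = U^*\Piref$, the remaining estimates copy those of part (i) verbatim.

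The main obstacle is handling the \emph{unboundedness} of $\Opsr^A(\heff)$ and $H_{\mathrm{eff}}^{(k)}$, which are quantizations of symbols in $\SemiHoer{1}$: their difference cannot be controlled in operator norm alone. The resolution is that Weyl's criterion always supplies its approximate eigenvector inside the operator domain, so the $H^1_A$-norm of $\Psi_\delta$ (resp.\ of $\tilde\Psi_\delta$) is automatically bounded by $E_0 + \delta$; the symbol-class remainder from Lemma~\ref{semirel:lem:asymp_expansion_magW} then translates into the required $L^2$-estimate. Every other step is routine bookkeeping inside the magnetic pseudodifferential calculus of Section~\ref{semirel:mag_PsiDO}.
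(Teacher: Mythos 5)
Your underlying mechanism is the same one the paper uses -- a Weyl sequence, an (almost-)intertwining relation, and a triangle inequality, with the $\order_{\norm{\cdot}}(\nicefrac{1}{c^{k+1}})$ price paid for the truncated objects -- but the paper's actual proof is much shorter because it never localizes the Weyl vector. For (i) it takes $\Psi_{\delta}$ directly from Weyl's criterion for $\Hd$ at $E_0$ (no membership in $\Pi L^2(\R^3,\C^4)$ is claimed or needed: the theorem only asserts the displayed estimate for \emph{some} unit vector), uses the defining relations of $\pi^{(k)}$, $u^{(k)}$, $h_{\mathrm{eff} \, n}$ to commute $\Hd - E_0$ past $U^{(k)} \Pi^{(k)}$ up to an $\order_{\norm{\cdot}}(\nicefrac{1}{c^{k+1}})$ defect, and concludes with $C_k = \bnorm{U^{(k)}} \, \bnorm{\Pi^{(k)}}$; part (ii) is verbatim analogous. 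Your detour through the exact $\Pi$, $U$ of Lemmas~\ref{semirel:lem:existence_Pi} and \ref{semirel:lem:existence_U} would, if completed, buy something the theorem does not claim -- namely that the transported vector $U \Psi_{\delta}$ has norm one, which makes the spectral conclusion more substantive -- but it is not needed for the statement as written.

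As written, two steps of your longer route are genuine gaps. First, part (i) hinges on the assertion that $\sigma \bigl ( \Hd \vert_{(\id - \Pi) L^2} \bigr ) \subseteq (-\infty , -m + \order(\nicefrac{1}{c^2})]$, so that every $E_0 \geq 0$ automatically lies in $\sigma \bigl ( \Hd \vert_{\Pi L^2} \bigr )$. This is plausible, but it does not follow from anything you cite: the positronic block of the diagonalized \emph{symbol} is bounded above by $-m + \order(\nicefrac{1}{c^2})$ pointwise, yet passing from a pointwise bound on a symbol to an operator bound on its magnetic quantization requires an extra tool (a sharp-Gårding-type inequality for $\Opsr^A$, or an identification of $\Pi$ with the spectral projection $1_{[0,+\infty)}(\Hd)$), neither of which is established in the paper. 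Second, when you finally swap $\Pi$, $U$ for $\Pi^{(k)}$, $U^{(k)}$, the Caldéron--Vaillancourt bounds $\norm{\Pi^{(k)} - \Pi} , \norm{U^{(k)} - U} = \order(\nicefrac{1}{c^{k+1}})$ do not suffice by themselves, because these small errors are subsequently acted on by the \emph{unbounded} operator $H_{\mathrm{eff}}^{(k)} - E_0$; you must estimate the composed symbols, e.g. $h_{\mathrm{eff}}^{(k)} \magWsr \bigl ( u - u^{(k)} \bigr )$, which are of order at most one with an $\order(\nicefrac{1}{c^{k+1}})$ prefactor and can then be controlled on the $H^1_A$-bounded vector -- the same device you correctly invoke for the tail of $\heff$, but do not apply here. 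Both points are reparable, but once repaired your argument essentially collapses back into the paper's three-line computation with the truncated objects.
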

\begin{proof}
	\begin{enumerate}[(i)]
		\item If $E_0 \in \sigma(\Hd) \cap \R^+$ lies in the spectrum, then the Weyl criterion implies that for any $\delta > 0$, we may find a $\Psi_{\delta} \in L^2(\R^3,\C^4)$ of norm $1$ such that 
		\begin{align*}
			\bnorm{\bigl ( \Hd - E_0 \bigr ) \Psi_{\delta}} < \delta 
		\end{align*}
		holds. Using the defining properties of $\Pi^{(k)}$, $U^{(k)}$ and $H_{\mathrm{eff}}^{(k)}$ we obtain 
		\begin{align*}
			\bigl ( H_{\mathrm{eff}}^{(k)} - E_0 \bigr ) U^{(k)} \Pi^{(k)} &= {U^{(k)}}^* \bigl ( \Hd - E_0 \bigr ) \Pi^{(k)} + \order_{\norm{\cdot}}(\nicefrac{1}{c^{k+1}}) 
			\\
			&
			= {U^{(k)}}^* \Pi^{(k)} \bigl ( \Hd - E_0 \bigr ) + \order_{\norm{\cdot}}(\nicefrac{1}{c^{k+1}})
		\end{align*}
		and hence 
		\begin{align*}
			&\Bnorm{\bigl ( H_{\mathrm{eff}}^{(k)} - E_0 \bigr ) U^{(k)} \Pi^{(k)} \Psi_{\delta}} 
			< \\
			&\qquad \qquad 
			< \bnorm{U^{(k)}} \, \bnorm{\Pi^{(k)}} \, \delta + \bnorm{\order_{\norm{\cdot}}(\nicefrac{1}{c^{k+1}})} \, \bnorm{\Psi_{\delta}}
			= C_k \, \delta + \order(\nicefrac{1}{c^{k+1}}) 
			. 
		\end{align*}
		\item The proof is completely analogous to (i). 
	\end{enumerate}
\end{proof}
%
% subsection Spectral results (end)
% section semi-relativistic limit (end)
%!TEX root = /Users/max/Dropbox/research/non- and semi-relativistic limit of the Dirac equation (FL 2008)/ahp_non_semi_rel_Dirac.tex
\section{Non-relativistic limit} % (fold)
\label{non_rel}
The non-relativistic limit turns out to be more technically involved than the semi-relativistic limit. Here, the sub-leading term to the effective hamiltonian 
\begin{align*}
	\widehat{h}_{\mathrm{eff}} = m \, \Piref + \frac{1}{c^2} \left ( \frac{1}{2m} \bigl ( - i \eps \nabla_x - \tfrac{1}{c} A(\hat{x}) \bigr )^2 + V(\hat{x}) \right ) \, \Piref + \order(\nicefrac{1}{c^3})
\end{align*}
is more singular than both, $\widehat{h}_{\mathrm{eff} \, 0} = m \, \Piref$ (which is bounded) and $\Hd$. It turns out that $\widehat{h}_{\mathrm{eff} \, 4}$ contains a term proportional to the quantization of $\abs{\xi}^4$ which dominates the behavior of $\heff$ for large $\xi$. To control this, we must look at states whose energy is finite. Before we proceed, let us adapt Section~\ref{adiabatic_limit} to the present context. 

\subsection{Non-relativistic limit as an adiabatic limit} % (fold)
\label{non_rel:adiabatic_limit}
The first two items of the adiabatic trinity, the splitting into slow and fast degrees of freedom, $\Hslow \otimes \Hfast = L^2(\R^3,\C^2) \otimes \C^2$, as well as the small parameter $\nicefrac{v}{c} \ll 1$, are the same as in the semi-relativistic case. It remains to consider the relevant part of the spectrum: the leading-order term of the hamiltonian symbol in the non-relativistic scaling, 
\begin{align*}
	H_{\mathrm{nr} \, 0}(x,\xi) = m \beta 
	= H_{\mathrm{sr} \, 0} \bigl ( x , \nicefrac{\xi}{c} \bigr ) + \order(\nicefrac{1}{c})
	, 
\end{align*}
approximates $H_{\mathrm{sr}}$ for small momenta. Similarly, the spectrum 
\begin{align*}
	\sigma \bigl ( H_{\mathrm{nr} \, 0}(x,\xi) \bigr ) = \{ \pm m \} 
\end{align*}
can be seen as the dominant term of the Taylor expansion of 
\begin{align*}
	\sigma \bigl ( H_{\mathrm{sr} \, 0}(x,\nicefrac{\xi}{c}) \bigr ) = \Bigl \{ \pm \sqrt{m^2 + \bigl ( \nicefrac{\xi}{c} \bigr )^2} \Bigr \} 
	= \Bigl \{ \pm m + \order(\nicefrac{1}{c^2}) \Bigr \} 
\end{align*}
for small $\nicefrac{1}{c}$. Hence, 
\begin{align*}
	\pi_{\mathrm{nr} \, 0} = \left (
	\begin{matrix}
		\id_{\C^2} & 0 \\
		0 & 0 \\
	\end{matrix}
	\right )
\end{align*}
projects onto the relevant band, namely the electronic states $\specrel(x,\xi) = \{ + m \}$. Since $H_{\mathrm{nr} \, 0} = m \beta$ is already diagonal, we may use 
\begin{align*}
	u_{\mathrm{nr} \, 0}(x,\xi) = \id_{\C^4}
\end{align*}
as Moyal unitary which diagonalizes $H_{\mathrm{eff} \, 0}$. 
\begin{remark}
	From this point forward, we shall drop the index ${\;}_{\mathrm{nr}}$ from most of the objects to simplify notation. 
\end{remark}
%
% subsection Non-relativistic limit as an adiabatic limit (end)

\subsection{Non-relativistic pseudodifferential calculus} % (fold)
\label{non_rel:psiDO}
Analogously to Section~\ref{semirel:mag_PsiDO}, we will introduce a magnetic pseudodifferential calculus associated to position and \emph{non-relativistic} kinetic momentum, 
\begin{align*}
	\Pnr^A &= - i \eps \nabla_x - \tfrac{1}{c} A(\hat{x}) 
	\\
	\Qnr &= \hat{x}
	. 
\end{align*}
We emphasize that \emph{$\eps$ need not be small} as we are interested in the weak coupling limit where $\nicefrac{1}{c} \ll 1$. The technical details (\eg assumptions on $\mathbf{B}$ and $A$ as well as the matrix-valued functions involved) are identical to those for semi-relativistic $\Psi$DO, and hence there is no need to repeat them. Now a non-relativistic pseudodifferential operator associated to a suitable $4 \times 4$-matrix-valued function $f$ on $T^* \R^3$ is defined as 
\begin{align*}
	\Opnr^A(f) := \frac{1}{(2\pi)^3} \int_{T^* \R^3} \dd x \, \dd \xi \, (\Fourier_{\sigma} f)(x,\xi) \, e^{-i (\xi \cdot \Qnr - x \cdot \Pnr^A)} \otimes \id_{\C^4} 
	. 
\end{align*}
The non-relativistic Moyal product $\magWnr$ which is implicitly defined through 
\begin{align*}
	\Opnr^A \bigl ( f \magWnr g \bigr ) = \Opnr^A(f) \, \Opnr^A(g)
\end{align*}
has an asymptotic expansion in $\nicefrac{1}{c}$: 
\begin{thm}[Theorem~2.12 of \cite{Lein:two_parameter_asymptotics:2008}]\label{non_rel:thm:properties_magWnr}
	Let $f \in \Hoerr{m_1}$, $g \in \Hoerr{m_2}$, $m_1 , m_2 \in \R$, $\rho \in (0,1]$, and assume the components of $\mathbf{B}$ are $\Cont^{\infty}_{\mathrm{b}}$ functions. Then $f \magWnr g$ has an asymptotic expansion in $\nicefrac{1}{c}$: for any $N \in \N_0$, we can write 
		\begin{align}
			f \magWnr g = \sum_{n = 0}^N \tfrac{1}{c^n} \, \bigl ( f \magWnr g \bigr )_{(n)} + \tfrac{1}{c^{N+1}} \, R'_N(f,g) 
		\end{align}
		where all terms are known explicitly (cf.~equations (2.12) and (2.13) in \cite{Lein:two_parameter_asymptotics:2008}). Furthermore, each term of the expansion is in symbol class 
		\begin{align*}
			\bigl ( f \magWsr g \bigr )_{(n)} \in \Hoerr{m_1 + m_2 - 2 n \rho}
		\end{align*}
		and the remainder $R'_N(f,g)$ is an element of $\Hoerr{m_1 + m_2 - 2 (N+1) \rho}$ whose seminorms can be estimated uniformly in $\nicefrac{1}{c} \in [0,\nicefrac{1}{c_0}]$ for some $c_0 > 0$. 
\end{thm}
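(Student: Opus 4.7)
The statement is labelled as Theorem~2.12 of \cite{Lein:two_parameter_asymptotics:2008}, so the plan is essentially to reduce it to an application of that earlier two-parameter result rather than to redo its proof. In \cite{Lein:two_parameter_asymptotics:2008} a magnetic Moyal product $\magBel$ depending on \emph{two} independent small parameters is analysed, and an asymptotic expansion is obtained in which each term is controlled in a precise symbol class. The non-relativistic magnetic Weyl calculus of Section~\ref{non_rel:psiDO} fits that framework with the identifications: keep the first parameter (the semiclassical parameter) equal to the fixed $\eps$, and set the coupling parameter $\lambda = \nicefrac{1}{c}$. In other words, $\magWnr$ is the special case $\magBel\big|_{\lambda = \nicefrac{1}{c}}$.

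With this identification, the plan is as follows. First, I would state Theorem~1.1 of \cite{Lein:two_parameter_asymptotics:2008} in the present setting, which yields a double asymptotic expansion $f \magBel g \asymp \sum_{n,k} \eps^n \lambda^k \, (f \magB g)_{(n,k)}$ with each $(f \magB g)_{(n,k)} \in \Hoerr{m_1+m_2-(n+k)\rho}$ or a similar symbol class, and with a remainder whose seminorms are uniformly controlled in both parameters on compact subsets. Second, I would freeze $\eps$ and reorganise the expansion as a power series in $\nicefrac{1}{c}$ only, defining
\begin{align*}
	\bigl ( f \magWnr g \bigr )_{(n)} := \sum_{k} \eps^{k} \, \bigl ( f \magB g \bigr )_{(k,n)}
\end{align*}
where the sum is over the finitely many index pairs contributing at total $\lambda$-order $n$. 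The symbol-class bound $\bigl ( f \magWnr g \bigr )_{(n)} \in \Hoerr{m_1+m_2-2n\rho}$ reflects the fact that each magnetic correction at order $\lambda^n$ comes from an $n$-fold flux expansion that produces $n$ additional $\xi$-derivatives on the product, which after the oscillatory-integral analysis in \cite{Lein:two_parameter_asymptotics:2008} accounts for $2n\rho$ of order reduction.

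Third, I would derive the remainder bound by truncating the two-parameter expansion at total $\lambda$-order $N$ and invoking the uniform seminorm estimate of \cite[Theorem~1.1]{Lein:two_parameter_asymptotics:2008}. Since $\eps$ is held fixed and the dependence on $\lambda$ in those bounds is polynomial with constants depending only on finitely many seminorms of $f$, $g$ and (through Assumption~\ref{intro:assumption:fields}) of $\mathbf{B}$, the estimate
\begin{align*}
	R'_N(f,g) \in \Hoerr{m_1 + m_2 - 2(N+1)\rho}
\end{align*}
holds with seminorms uniformly bounded in $\nicefrac{1}{c} \in [0,\nicefrac{1}{c_0}]$.

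The only genuine obstacle is bookkeeping: carefully matching the conventions of \cite{Lein:two_parameter_asymptotics:2008} to the non-relativistic scaling of Section~\ref{non_rel:psiDO}, and verifying that the reorganisation of the two-parameter expansion by total $\lambda$-order does not degrade the symbol class below $m_1+m_2-2n\rho$. No new analytic input is required; compared to Lemma~\ref{semirel:lem:asymp_expansion_magW}, the proof is in fact simpler because the semi-relativistic mixing of powers of $\eps$ and $\lambda$ (which produced the $\nicefrac{2}{3}$ factor there) does not occur here.
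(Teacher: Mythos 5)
Your overall plan—identify the non-relativistic calculus with the two-parameter framework of \cite{Lein:two_parameter_asymptotics:2008} via $\lambda = \nicefrac{1}{c}$ and $\eps$ fixed—starts in the right place, but the core step has a genuine gap: you cannot obtain this theorem by reorganising the \emph{two-parameter} expansion of \cite[Theorem~1.1]{Lein:two_parameter_asymptotics:2008} by total $\lambda$-order. In that expansion, written as $\sum_{n} \sum_{k \leq n} \eps^n \lambda^k (f \magW g)_{(n,k)}$, the $\lambda$-order $k$ is bounded by the $\eps$-order $n$; hence at fixed $\lambda$-order $k$ there are \emph{infinitely many} contributing pairs $(n,k)$ with $n \geq k$, not ``finitely many'' as you claim, and your proposed coefficient $\sum_{k} \eps^{k} (f \magW g)_{(k,n)}$ is an infinite series in the fixed, order-one parameter $\eps$ with no reason to converge (the expansion is only asymptotic). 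Moreover, the remainder estimate of Theorem~1.1 is of the form $\eps^{N+1}$ times a controlled symbol, which gives no smallness whatsoever when $\eps$ is held fixed; so truncating the two-parameter expansion cannot produce a remainder that is uniformly $\order(\nicefrac{1}{c^{N+1}})$ in $\nicefrac{1}{c} \in [0,\nicefrac{1}{c_0}]$. Precisely because $\eps$ is \emph{not} small in the non-relativistic scaling, the situation is the opposite of Lemma~\ref{semirel:lem:asymp_expansion_magW}, where both prefactors $\tfrac{\eps}{c}$ and $\tfrac{1}{c^2}$ are small and a regrouping of the two-parameter expansion (with finitely many terms per order) is legitimate.

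What is actually needed—and what the paper does—is to invoke the genuinely one-parameter expansion in the coupling constant alone, \cite[Theorem~2.12]{Lein:two_parameter_asymptotics:2008} (with explicit terms in equations~(2.12)–(2.13) there): one expands only the magnetic flux exponential in powers of $\lambda$ inside the oscillatory integral, keeping the full $\eps$-dependent phase unexpanded, which yields coefficients $(f \magWnr g)_{(n)} \in \Hoerr{m_1+m_2-2n\rho}$ (each power of the flux contributes \emph{two} momentum derivatives distributed over $f$ and $g$, not one as in your heuristic) and a remainder with seminorms uniform in $\lambda = \nicefrac{1}{c}$ on $[0,\nicefrac{1}{c_0}]$. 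In the paper this theorem is therefore a citation, with the only input being the parameter identification $\eps \mapsto \eps$, $\lambda \mapsto \nicefrac{1}{c}$ (cf.~Appendix~\ref{appendix:mag_PsiDOs:expansion_magWnr}); your write-up should either quote that result directly or reproduce its flux-expansion argument, rather than attempt to derive it from the two-parameter statement.
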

%
% subsection Non-relativistic pseudodifferential operators (end)

\subsection{Construction of $\pi$ and $u$ to finite order} % (fold)
\label{non_rel:existence_pi_u}
As mentioned before, there is no Moyal projection and Moyal unitary which satisfy the analog of equations~\eqref{semirel:eqn:4_defect_equations_symbols}. However, we can construct 
\begin{align*}
	\pi^{(k)} = \sum_{n = 0}^k \tfrac{1}{c^n} \, \pi_n \in \Hoermr{k}{1} 
	, 
	&&
	\pi_0 = \left (
	\begin{matrix}
		\id_{\C^2} & 0 \\
		0 & 0 \\
	\end{matrix}
	\right ) 
	, 
	\; 
	\pi_n \in \Hoermr{n}{1} 
	, 
\end{align*}
and 
\begin{align*}
	u^{(k)} = \sum_{n = 0}^k \tfrac{1}{c^n} \, u_n \in \Hoermr{k}{1} 
	, 
	&&
	u_0 = \id_{\C^4}
	, 
	\; 
	u_n \in \Hoermr{n}{1} 
	, 
\end{align*}
which satisfy these equations up to $\order(\nicefrac{1}{c^{k+1}})$. Since higher-order terms have stronger and stronger growth at infinity, these expansions are not asymptotic nor do their quantizations define bounded operators if $k \geq 1$. Nevertheless, we can regularize $\pi^{(k)}$ and $u^{(k)}$ using an energy cutoff $\chi_B$ (cf.~equation~\eqref{non_rel:eqn:regularizing_symbols_cutoff}), and the quantizations of these \emph{regularized} symbols are bounded and serve the roles of $\pi$ and $u$ in Theorem~\ref{semirel:thm:semi-relativistic_limit}. 
\begin{prop}\label{non_rel:prop:existence_pi}
	Let Assumption~\ref{intro:assumption:fields} on $V$ and $\mathbf{B}$ be satisfied. Then for any $k \in \N_0$ there exists a symbol 
	\begin{align*}
		\pi^{(k)} &= \sum_{n = 0}^k \tfrac{1}{c^n} \, \pi_n 
		\in \Hoermr{k}{1}
	\end{align*}
	with $\pi_n \in \Hoermr{n}{1}$ which satisfies 
	\begin{align}
		\pi^{(k)} \magWnr \pi^{(k)} - \pi^{(k)} &= \order(\nicefrac{1}{c^{k+1}}) \in \Hoermr{2k}{1} 
		\label{non_rel:eqn:projection_defect}
		\\
		\bigl [ \Hnr , \pi^{(k)} \bigr ]_{\magWnr} &= \order(\nicefrac{1}{c^{k+1}}) \in \Hoermr{k+1}{1} 
		. 
		\label{non_rel:eqn:commutation_defect}
	\end{align}
	The first four terms are given by 
	\begin{align}
		\pi^{(3)} &= \pi_0 + \frac{1}{c} \frac{1}{2m} \, \xi \cdot \alpha 
		- \frac{1}{c^2} \frac{1}{4 m^2} \, \xi^2 \, \beta + 
		\notag \\
		&\qquad \; \, +
		\frac{1}{c^3} \biggl ( \frac{\eps}{4 m^2} \, (\mathbf{B} \cdot \Sigma) \, \beta - \frac{1}{4 m^3} \, \xi^2 \,  (\xi \cdot \alpha) + \frac{\eps \, i}{2 m^2} \, (\nabla_x V \cdot \alpha) \beta \biggr )
		% + \order(\nicefrac{1}{c^4})
		\label{non_rel:eqn:pi_3_explicit}
	\end{align}
	where $\Sigma = (\Sigma_1,\Sigma_2,\Sigma_3)$ is the vector of $4 \times 4$ spin matrices 
	\begin{align*}
		\Sigma_j = \left (
		\begin{matrix}
			\sigma_j & 0 \\
			0 & \sigma_j \\
		\end{matrix}
		\right ) 
		, 
		&&
		j = 1 , 2 , 3 
		. 
	\end{align*}
\end{prop}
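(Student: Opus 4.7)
My plan is to construct $\pi^{(k)}$ recursively in $k$, closely following the scheme of Lemma~\ref{semirel:lem:existence_Pi} but with two crucial modifications: the Moyal-product expansion now comes from Theorem~\ref{non_rel:thm:properties_magWnr}, and the coefficients $\pi_n$ live in symbol classes $S^n_1$ of \emph{growing} order, so that no genuine asymptotic resummation is possible. At each stage I merely ensure that the truncated sum satisfies the two defect equations \eqref{non_rel:eqn:projection_defect}, \eqref{non_rel:eqn:commutation_defect} modulo terms of order $1/c^{k+1}$.

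The base case $k = 0$ is immediate: $\pi_0 = \mathrm{diag}(\id_{\C^2},0) \in S^0_1$ commutes with $H_{\mathrm{nr},0} = m\beta$ and equals its own pointwise square. For the inductive step, suppose $\pi_0,\ldots,\pi_{k-1}$ have been constructed with $\pi_n \in S^n_1$ and both defects of order $1/c^k$. Writing $P = \pi_0$ and $Q = \id - \pi_0$, I decompose the unknown $\pi_k = \pi_k^D + \pi_k^{OD}$ into diagonal and off-diagonal parts,
\begin{align*}
    \pi_k^D &:= P\pi_k P + Q\pi_k Q,
    &
    \pi_k^{OD} &:= P\pi_k Q + Q\pi_k P.
\end{align*}
Using $\beta = P - Q$, one verifies the two key algebraic identities
\begin{align*}
    [m\beta,\pi_k] &= 2m\beta\,\pi_k^{OD},
    &
    \pi_0\pi_k + \pi_k\pi_0 - \pi_k &= P\pi_k P - Q\pi_k Q.
\end{align*}
Extracting the order-$1/c^k$ coefficient of the commutator defect therefore determines $\pi_k^{OD}$ algebraically by inversion with the spectral-gap factor $2m\beta$, while the order-$1/c^k$ coefficient of the projection defect determines the blocks $P\pi_k P$ and $Q\pi_k Q$ in terms of data already computed at lower orders.

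The symbol-class bookkeeping reduces to a direct application of Theorem~\ref{non_rel:thm:properties_magWnr}, which guarantees $(f\magWnr g)_{(n)} \in S^{m_1+m_2-2n}_1$ whenever $f \in S^{m_1}_1$ and $g \in S^{m_2}_1$. Combined with $H_{\mathrm{nr},1} = \xi\cdot\alpha \in S^1_1$ and $H_{\mathrm{nr},2} = V \in S^0_1$, an induction then shows $\pi_n \in S^n_1$, that the forcing term driving $\pi_k^{OD}$ lies in $S^{k+1}_1$, and that the forcing term driving $\pi_k^D$ lies in $S^{2k}_1$; these match the orders claimed in \eqref{non_rel:eqn:projection_defect}, \eqref{non_rel:eqn:commutation_defect}. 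The uniform remainder estimates in Theorem~\ref{non_rel:thm:properties_magWnr} control the Moyal-expansion truncation errors at each stage.

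Finally, formula~\eqref{non_rel:eqn:pi_3_explicit} is verified by running the recursion explicitly through orders $1$, $2$, $3$. At order $1$ the commutator equation immediately gives $\pi_1 = \frac{1}{2m}\xi\cdot\alpha$; at order $2$, using $(\xi\cdot\alpha)^2 = \xi^2 \id_{\C^4}$ and $[\xi\cdot\alpha,\pi_1] = 0$, the projection equation yields $\pi_2 = -\frac{1}{4m^2}\xi^2\beta$ with $\pi_2^{OD} = 0$. At order $3$ one must carefully expand $\magWnr$ to the required accuracy: the three terms in \eqref{non_rel:eqn:pi_3_explicit} arise respectively from the first purely magnetic contribution to $\magWnr$ (the Stern--Gerlach-like $(\mathbf{B}\cdot\Sigma)\beta$ term), from $[\xi\cdot\alpha,\pi_2]$ feeding into the commutator equation (the relativistic-correction term $\xi^2(\xi\cdot\alpha)$), and from the Poisson-bracket coupling in $[V,\pi_1]_{\magWnr}$ (the $(\nabla_x V\cdot\alpha)\beta$ term). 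I expect the main obstacle to be precisely this third-order bookkeeping, where the two-parameter structure of $\magWnr$ mixes $\eps$-orders and magnetic corrections within each power of $1/c$, forcing one to keep careful track of diagonal and off-diagonal parts at every intermediate step.
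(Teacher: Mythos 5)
Your proposal is correct and follows essentially the same route as the paper: the standard space-adiabatic recursion in which the order-$\nicefrac{1}{c^k}$ coefficient of the projection defect fixes the diagonal block $-\pi_0 G_k \pi_0 + (\id_{\C^4}-\pi_0) G_k (\id_{\C^4}-\pi_0)$, the commutation defect fixes the off-diagonal block via inversion of the gap $2m\beta$, symbol classes are tracked with Theorem~\ref{non_rel:thm:properties_magWnr}, and the first three orders are computed explicitly exactly as in Appendix~\ref{appendix:non_rel:projection}. One small point of bookkeeping: to conclude $\pi_k \in \Hoermr{k}{1}$ you need the order-$k$ \emph{coefficients} $G_k, F_k$ of the defects to lie in $\Hoermr{k}{1}$ (which they do, since each contribution $(\pi_j \magWnr \pi_l)_{(n)}$ with $j+l+n=k$ lies in $\Hoermr{j+l-2n}{1} \subset \Hoermr{k}{1}$), not merely in the coarser classes $\Hoermr{2k}{1}$, $\Hoermr{k+1}{1}$ that you quote for the full defects.
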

\begin{proof}
	Formally, we can use the recursion relations (3.2), (3.4), (3.5) and (3.8) in \cite{Teufel:adiabatic_perturbation_theory:2003} to construct $\pi^{(k)}$ order-by-order as a symbol. Assume we have constructed the first $k$ orders of $\pi^{(k)} = \sum_{n = 0}^k \tfrac{1}{c^n} \, \pi_n$ where $\pi_n \in \Hoermr{n}{1}$, \ie $\pi^{(k)}$ satisfies equations~\eqref{non_rel:eqn:projection_defect} and \eqref{non_rel:eqn:commutation_defect} up to errors of order $\order(\nicefrac{1}{c^{k+1}})$. 
	
	It turns out that the diagonal part of $\pi_{k+1}$ is determined by the »projection defect«, defined as the $k+1$th order term of 
	\begin{align}
		\pi^{(k)} \magWnr \pi^{(k)} - \pi^{(k)} &=: \tfrac{1}{c^{k+1}} \, G_{k+1} + \order(\nicefrac{1}{c^{k+2}}) 
		\label{non_rel:eqn:projection_defect_comp}
		\\
		&= \frac{1}{c^{k+1}} \, \sum_{\substack{j + l + n = k+1 \\ j , l \leq 3}} \bigl ( \pi_j \magWnr \pi_l \bigr )_{(n)} + \order(\nicefrac{1}{c^{k+2}})
		\notag 
	\end{align}
	is the sum of symbols of order $\Hoermr{j + l - 2n}{1} \subset \Hoermr{k+1}{1}$ by Theorem~\ref{non_rel:thm:properties_magWnr}. The left-hand side is in symbol class $\Hoermr{2k}{1}$ since $\pi^{(k)} \magWnr \pi^{(k)} \in \Hoermr{2k}{1}$ by Theorem~\ref{non_rel:thm:properties_magWnr}. Then $G_{k+1} \in \Hoermr{k+1}{1}$ holds and also the diagonal part 
	\begin{align}
		\pi_{k+1}^{\mathrm{d}} &= - \pi_0 \magWnr G_{k+1} \magWnr \pi_0 + (\id_{\C^4} - \pi_0) \magWnr G_{k+1} \magWnr (\id_{\C^4} - \pi_0)
		\notag \\
		&= - \pi_0 \, G_{k+1} \, \pi_0 + (\id_{\C^4} - \pi_0) \, G_{k+1} \, (\id_{\C^4} - \pi_0)
		\in \Hoermr{k+1}{1}
		\label{non_rel:eqn:proj_diag_comp}
	\end{align}
	is in the same symbol class. 
	
	The »commutation defect« 
	\begin{align}
		\Bigl [ \, \Hnr \; , \; &\pi^{(k)} + \tfrac{1}{c^{k+1}} \pi_{k+1}^{\mathrm{d}} \Bigr ]_{\magWnr} =: \tfrac{1}{c^{k+1}} F_{k+1} + \order(\nicefrac{1}{c^{k+2}}) 
		\label{non_rel:eqn:commutation_defect_comp}
		\\
		&= \frac{1}{c^{k+1}} \, \sum_{\substack{j + l + n = k+1 \\ j \leq 2 , \, l \leq k}} \Bigl ( 
		\bigl ( H_j \magWnr \pi_l \bigr )_{(n)} - \bigl ( \pi_l \magWnr H_j \bigr )_{(n)} \Bigr ) 
	 	\, + \notag \\
		&\qquad \biggl . 
		+ \frac{1}{c^{k+1}} \, \Bigl (
		\bigl ( H_0 \magWnr \pi_{k+1}^{\mathrm{d}} \bigr )_{(0)} - \bigl ( \pi_{k+1}^{\mathrm{d}} \magWnr H_0 \bigr )_{(0)} \Bigr ) \biggr ) + \order(\nicefrac{1}{c^{k+2}})
		\notag \\
		&= \frac{1}{c^{k+1}} \, \sum_{\substack{j + l + n = k+1 \\ j \leq 2 , \, l \leq k}} \Bigl ( 
		\bigl ( H_j \magWnr \pi_l \bigr )_{(n)} - \bigl ( \pi_l \magWnr H_j \bigr )_{(n)} \Bigr ) 
		+ \order(\nicefrac{1}{c^{k+2}})
		\notag 
	\end{align}
	can be used to determine the off-diagonal part. The term 
	\begin{align*}
		\bigl [ H_0 , \pi_{k+1}^{\mathrm{d}} \bigr ] = 0 
	\end{align*}
	cancels as $H_0$ commutes with $\pi_0$. Theorem~\ref{non_rel:thm:properties_magWnr} ensures that the $\order(\nicefrac{1}{c^{k+2}})$ term is a symbol in $\Hoermr{k+2}{1}$ and that $F_{k+1} \in \Hoermr{k+1}{1}$. Then the off-diagonal part as given by \cite[equation~(3.8)]{Teufel:adiabatic_perturbation_theory:2003}, 
	\begin{align}
		\pi_{k+1}^{\mathrm{od}} &= \pi_0 \magWnr F_{k+1} \magWnr (H_0 - m)^{-1} \magWnr (\id_{\C^4} - \pi_0) 
		+ \\
		&\qquad \qquad \qquad \qquad 
		- (\id_{\C^4} - \pi_0) \magWnr (H_0 - m)^{-1} \magWnr F_{k+1} \magWnr \pi_0 
		\\
		&= \pi_0 \, F_{k+1} \, (H_0 - m)^{-1} \, (\id_{\C^4} - \pi_0) - (\id_{\C^4} - \pi_0) \, (H_0 - m)^{-1} \, F_{k+1} \, \pi_0
		, 
		\label{non_rel:eqn:proj_offdiag_comp}
	\end{align}
	is also in the correct symbol class. Thus the $k+1$ term $\pi_{k+1} = \pi_{k+1}^{\mathrm{d}} + \pi_{k+1}^{\mathrm{od}}$ is an element of $\Hoermr{k+1}{1}$. 
	
	By the construction in \cite{Teufel:adiabatic_perturbation_theory:2003}, $\pi^{(k+1)} := \pi^{(k)} + \tfrac{1}{c^{k+1}} \, \pi_{k+1}$ satisfies equations~\eqref{non_rel:eqn:projection_defect} and \eqref{non_rel:eqn:commutation_defect} up to $\order(\nicefrac{1}{c^{k+2}})$, although the remainders are not small in the sense of symbol classes. 
	
	It remains to show that 
	\begin{align*}
		\pi_0 = \left (
		\begin{matrix}
			\id_{\C^2} & 0 \\
			0 & 0 \\
		\end{matrix}
		\right )
	\end{align*}
	satisfies the induction hypothesis: indeed, the fact that $(x,\xi) \mapsto \pi_0$ is the constant function allows us to compute 
	\begin{align*}
		\pi_0 \magWnr \pi_0 - \pi_0 &= \pi_0^2 - \pi_0 = 0 \in \Hoermr{1}{1}
	\end{align*}
	and 
	\begin{align*}
		\bigl [ \Hsr , \pi_0 \bigr ]_{\magWnr} &= \bigl [ \Hsr , \pi_0 \bigr ] 
		= \tfrac{1}{c} \bigl [ H_1 , \pi_0 \bigr ] \in \Hoermr{1}{1}
	\end{align*}
	directly. Hence, we can proceed and construct $\pi^{(k)} \in \Hoermr{k}{1}$ to any order. The calculation of the first three terms has been moved to Appendix~\ref{appendix:non_rel:projection}. 
\end{proof}
Similarly, we can compute the symbol that takes the place of the Moyal unitary. 
\begin{prop}\label{non_rel:prop:existence_u}
	Let Assumption~\ref{intro:assumption:fields} on $V$ and $\mathbf{B}$ be satisfied. Then for any $k \in \N_0$ there exists a symbol 
	\begin{align}
		u^{(k)} &= \sum_{n = 0}^k \tfrac{1}{c^n} \, u_n 
		\in \Hoermr{k}{1}
		\notag \\
		&= u_0 - \frac{1}{c} \frac{1}{2m} \, (\xi \cdot \alpha) \beta 
		- \frac{1}{c^2} \frac{1}{8 m^2} \, \xi^2 \, \id_{\C^4} + 
		\notag \\
		&\qquad \; \, +
		\frac{1}{c^3} \biggl ( \frac{\eps}{4 m^2} \, (\mathbf{B} \cdot \Sigma) - \frac{3}{16 m^3} \, \xi^2 \, (\xi \cdot \alpha) \beta - \frac{\eps \, i}{2 m^2} \, (\nabla_x V \cdot \alpha) \biggr )
		+ \order(\nicefrac{1}{c^4})
		\label{non_rel:eqn:u_3_explicit}
		% &= u_0 - \tfrac{1}{c} \tfrac{1}{2m} (\xi \cdot \alpha) \beta 
		% - \tfrac{1}{c^2} \tfrac{1}{8 m^2} \xi^2 \, \id_{\C^4} + 
		% \\
		% &\qquad +
		% \tfrac{1}{c^3} \Bigl ( \tfrac{\eps}{4 m^2} (B \cdot \Sigma) - \tfrac{3}{16 m^3} \xi^2 \,  (\xi \cdot \alpha) \beta - \tfrac{\eps \, i}{2 m^2} (\nabla_x V \cdot \alpha) \Bigr )
		% + \order(\nicefrac{1}{c^4})
	\end{align}
	with $u_n \in \Hoermr{n}{1}$ which satisfies 
	\begin{align}
		u^{(k)} \magWnr {u^{(k)}}^* - 1 &= \order(\nicefrac{1}{c^{k+1}}) \in \Hoermr{2k}{1} 
		, 
		\label{non_rel:eqn:unitarity_defect}
		\\
		{u^{(k)}}^* \magWnr u^{(k)} - 1 &= \order(\nicefrac{1}{c^{k+1}}) \in \Hoermr{2k}{1} , 
		\; 
		\notag \\
		u^{(k)} \magWnr \pi^{(k)} \magWnr {u^{(k)}}^* - \piref &= \order(\nicefrac{1}{c^{k+1}}) \in \Hoermr{3k}{1} 
		. 
		\label{non_rel:eqn:intertwining_defect}
	\end{align}
\end{prop}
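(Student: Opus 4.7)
The plan is to mirror the recursive construction used in Proposition~\ref{non_rel:prop:existence_pi} for $\pi^{(k)}$ and to follow the general template of \cite[Lemma~3.15]{Teufel:adiabatic_perturbation_theory:2003}, with the semiclassical Weyl product replaced by $\magWnr$ and with careful tracking of the non-relativistic symbol classes dictated by Theorem~\ref{non_rel:thm:properties_magWnr}. I proceed by induction on $k$. For the base case $k = 0$, the symbol $u_0 = \id_{\C^4} \in \Hoermr{0}{1}$ trivially satisfies $u_0 \magWnr {u_0}^* - 1 = 0$, and $u_0 \magWnr \pi_0 \magWnr {u_0}^* - \piref = \pi_0 - \piref = 0$ since both projections coincide in the non-relativistic scaling.

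Assuming $u^{(k)}$ has already been built so that the three conditions \eqref{non_rel:eqn:unitarity_defect}--\eqref{non_rel:eqn:intertwining_defect} hold, I extract the order-$(k+1)$ coefficients of the unitarity defect
\begin{align*}
	u^{(k)} \magWnr {u^{(k)}}^* - 1 =: \tfrac{1}{c^{k+1}} A_{k+1} + \order(\nicefrac{1}{c^{k+2}})
\end{align*}
and of the intertwining defect
\begin{align*}
	u^{(k)} \magWnr \pi^{(k+1)} \magWnr {u^{(k)}}^* - \piref =: \tfrac{1}{c^{k+1}} B_{k+1} + \order(\nicefrac{1}{c^{k+2}}) ,
\end{align*}
where $\pi^{(k+1)}$ is the projection symbol from Proposition~\ref{non_rel:prop:existence_pi}. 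Expanding each product term-by-term via Theorem~\ref{non_rel:thm:properties_magWnr} and inserting the symbol-class information $u_n \in \Hoermr{n}{1}$, $\pi_n \in \Hoermr{n}{1}$, I obtain $A_{k+1}, B_{k+1} \in \Hoermr{k+1}{1}$ (after verifying that the a priori worse class $\Hoermr{2k}{1}$ or $\Hoermr{3k}{1}$ emerging from the full Moyal product is reduced to $\Hoermr{k+1}{1}$ at this particular order through the cancellations guaranteed by the induction hypothesis).

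I then take the ansatz $u_{k+1} = a_{k+1} + b_{k+1}$ with $a_{k+1}$ anti-Hermitian and $b_{k+1}$ Hermitian. The anti-Hermitian piece $a_{k+1} = -\tfrac{1}{2} A_{k+1}$ cures the unitarity defect, since unitarity forces $A_{k+1}$ to be Hermitian and so $(u^{(k)} + \tfrac{1}{c^{k+1}} a_{k+1}) \magWnr (u^{(k)} + \tfrac{1}{c^{k+1}} a_{k+1})^* - 1 = \order(\nicefrac{1}{c^{k+2}})$; the Hermitian piece $b_{k+1}$ is chosen so that its commutator with $\piref$ cancels the off-diagonal part of the intertwining defect, whereas the diagonal part is automatically eliminated by the updated $a_{k+1}$ together with the fact that $\piref$ and $\id_{\C^4} - \piref$ are constant. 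Concretely, following \cite[eqn.~(3.15)]{Teufel:adiabatic_perturbation_theory:2003} adapted to $\magWnr$, one sets
\begin{align*}
	b_{k+1} = \piref \, B_{k+1}' \, (\id_{\C^4} - \piref) - (\id_{\C^4} - \piref) \, B_{k+1}' \, \piref
\end{align*}
with $B_{k+1}'$ the off-diagonal part of $B_{k+1}$ (modified by the contribution of $a_{k+1}$). Both $a_{k+1}$ and $b_{k+1}$ inherit the class $\Hoermr{k+1}{1}$, so $u^{(k+1)} := u^{(k)} + \tfrac{1}{c^{k+1}} u_{k+1} \in \Hoermr{k+1}{1}$ satisfies \eqref{non_rel:eqn:unitarity_defect}--\eqref{non_rel:eqn:intertwining_defect} at level $k+1$.

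The main obstacle is the symbol-class bookkeeping: unlike in the semi-relativistic setting, each term of the $\magWnr$ expansion raises the symbol order by $2$ rather than lowering it, so it is not obvious that the naive candidate class $\Hoermr{2k}{1}$ or $\Hoermr{3k}{1}$ for the defects actually reduces to $\Hoermr{k+1}{1}$ at order $\nicefrac{1}{c^{k+1}}$; verifying this requires exploiting the explicit structure of the recursive formulae and the fact that $\pi_0, u_0, \piref$ are constant in $(x,\xi)$, so that the leading commutators with the remaining terms gain one order in $\xi$-decay beyond naive expectation. Once this is settled, the explicit expression \eqref{non_rel:eqn:u_3_explicit} follows from a direct but tedious evaluation of the first three recursion steps using Lemma~\ref{semirel:lem:asymp_expansion_magW}-type formulae for $\magWnr$ and the explicit form of $\pi^{(3)}$ in \eqref{non_rel:eqn:pi_3_explicit}, which I relegate to the appendix in the spirit of Proposition~\ref{non_rel:prop:existence_pi}.
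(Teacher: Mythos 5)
Your construction is essentially the paper's proof: the same induction on $k$ with the unitarity defect $A_{k+1}$, the intertwining defect $B_{k+1}$ (the paper, like you, in practice includes the $(k+1)$-st order term of the projection, cf.\ the $u_0 \magWnr \pi_3 \magWnr u_0^*$ term in its appendix), the correction $u_{k+1} = -\tfrac{1}{2}A_{k+1} + \bigl [ \piref , B_{k+1} \bigr ]$ — your $b_{k+1}$ is exactly this commutator written via the off-diagonal part — symbol-class bookkeeping through Theorem~\ref{non_rel:thm:properties_magWnr}, the base case $u_0 = \id_{\C^4}$, and the explicit third-order computation relegated to an appendix. Two cosmetic slips worth noting: the membership $A_{k+1}, B_{k+1} \in \Hoermr{k+1}{1}$ is pure grading, since each coefficient $\bigl ( u_j \magWnr u_l^* \bigr )_{(n)}$ with $j + l + n = k+1$ lies in $\Hoermr{j+l-2n}{1} \subseteq \Hoermr{k+1}{1}$ — no cancellations from the induction hypothesis are needed (the classes $\Hoermr{2k}{1}$, $\Hoermr{3k}{1}$ only bound the remainders) — and your Hermitian/anti-Hermitian labels are interchanged: $a_{k+1} = -\tfrac{1}{2}A_{k+1}$ is the Hermitian (symmetric) part, while the commutator term is the anti-Hermitian one.
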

\begin{proof}
	Similarly to the previous proof, we can modify the proof of \cite[Lemma~3.15]{Teufel:adiabatic_perturbation_theory:2003}: assume we have found $u^{(k)} \in \Hoermr{k}{1}$ which satisfies equations~\eqref{non_rel:eqn:unitarity_defect} and \eqref{non_rel:eqn:intertwining_defect} up to $\order(\nicefrac{1}{c^{k+1}})$. Then the »unitarity defect«
	\begin{align}
		u^{(k)} \magWnr {u^{(k)}}^* - 1 &=: \tfrac{1}{c^{k+1}} \, A_{k+1} + \order(\nicefrac{1}{c^{k+2}}) 
		\label{non_rel:eqn:unitarity_defect_comp}
		\\
		&= \tfrac{1}{c^{k+1}} \, \sum_{\substack{j + l + n = k+1 \\ j , l \leq k}} \bigl ( u_j \magWnr u_l^* \bigr )_{(n)} + \order(\nicefrac{1}{c^{k+2}})
		\notag
	\end{align}
	splits into an $\order(\nicefrac{1}{c^{k+2}})$ error term which is in symbol class $\Hoermr{2k}{1}$ by Theorem~\ref{non_rel:thm:properties_magWnr} and a $\tfrac{1}{c^{k+1}}$ contribution $A_{k+1} \in \Hoermr{k+1}{1}$. 
	
	The »intertwining defect« 
	\begin{align}
		\bigl ( u^{(k)} + \tfrac{1}{c^{k+1}} \, \tfrac{1}{2} A_{k+1} \bigr ) \magWnr \pi^{(k)} \magWnr \bigl ( u^{(k)} + \tfrac{1}{c^{k+1}} \, &\tfrac{1}{2} A_{k+1} \bigr )^* - \piref 
		=: 
		\notag \\
		&=: \tfrac{1}{c^{k+1}} \, B_{k+1} + \order(\nicefrac{1}{c^{k+2}})
		\label{non_rel:eqn:intertwining_defect_comp}
	\end{align}
	can also be written as the sum of a $\tfrac{1}{c^{k+1}}$ term $B_{k+1} \in \Hoermr{k+1}{1}$ and a remainder of class $\Hoermr{3k}{1}$. 
	
	Then setting 
	\begin{align}
		u_{k+1} := - \tfrac{1}{2} A_{k+1} + \bigl [ \piref , B_{k+1} \bigr ] 
		\label{non_rel:eqn:u_k_1_comp}
	\end{align}
	yields the next order term and $u^{(k+1)} := u^{(k)} + \tfrac{1}{c^{k+1}} \, u_{k+1}$ satisfies equations~\eqref{non_rel:eqn:unitarity_defect} and \eqref{non_rel:eqn:intertwining_defect} up to $\order(\nicefrac{1}{c^{k+2}})$. It remains to show $u_0 = \id_{\C^4}$ satisfies the induction assumption; that again follows from direct computation. 
	
	Lastly, the details of the computations for the first three terms have been moved to Appendix~\ref{appendix:non_rel:unitary}. 
\end{proof}
%
% subsection Existence of Moyal resolvent and energy regularization (end)

\subsection{Energy regularization} % (fold)
\label{non_rel:energy_regularization}
For the sake of our discussion, a cutoff function $\chi \in \Cont^{\infty}_{\mathrm{c}}(\R,[0,1])$ for a compact energy region $\Lambda \subset \R$ will be a smooth function with compact support which is identical to $1$ on $\Lambda$, $\chi \vert_{\Lambda} = 1$. Then if $1_{\Lambda}$ is the characteristic function for $\Lambda$, we have 
\begin{align}
	\chi(\Hd) \, 1_{\Lambda}(\Hd) = 1_{\Lambda}(\Hd) 
	\label{non_rel:eqn:smoothened_cutoff_cutoff_equal_cutoff}
\end{align}
by functional calculus. The distribution $\chi_B$ associated to the operator $\chi(\Hd) = \Opnr^A(\chi_B)$ is a smoothing symbol, \ie an element of $\Hoermr{-\infty}{1}$. To show that, we use results by Iftimie, Măntoiu and Purice \cite{Iftimie_Mantoiu_Purice:commutator_criteria:2008}: they derive Beals and Bony-type commutator criteria for \emph{scalar}-valued magnetic $\Psi$DOs. The extension to matrix-valued functions (Proposition~\ref{appendix:mag_PsiDOs:prop:inversion}) is rather simple since $\mathcal{B}(\C^4)$ is finite-dimensional and thus there is no question how to complete the algebraic tensor product of $\Cont^{\infty}_{\mathrm{b}}(T^* \R^3)$ and $\mathcal{B}(\C^4)$. Hence, the Beals- and Bony-type commutator criteria extend to matrix-valued functions and we get Proposition~\ref{appendix:mag_PsiDOs:prop:inversion}. The proof is a straightforward modification of the arguments in the proofs of Propositions~6.3 and 6.7 in \cite{Iftimie_Mantoiu_Purice:commutator_criteria:2008} and \cite[Proposition~8.7]{Dimassi_Sjoestrand:spectral_asymptotics:1999}, and the details can be found in Appendix~\ref{appendix:mag_PsiDOs:extension}. If the symbols are operator-valued and these operators act on an infinite-dimensional Hilbert space $\mathcal{H}$, then $\mathcal{B}(\mathcal{H})$ is not even separable. 

Note that $\pi^{(k)}$ has an important property: its quantization commutes with functions of $\Hd$ up to $\order(\nicefrac{1}{c^{k+1}})$ and thus we obtain 
\begin{lem}\label{non_rel:lem:commutator_chi_B_pi}
	Let $\pi^{(k)}$ be the symbol constructed in Proposition~\ref{non_rel:prop:existence_pi} and $\chi_B \in \Hoermr{-\infty}{1}$ be the symbol associated to a cutoff function $\chi \in \Cont^{\infty}_{\mathrm{c}}(\R,[0,1])$ from Proposition~\ref{appendix:mag_PsiDOs:prop:inversion}. Then 
	\begin{align}
		\bigl [ \chi_B , \pi^{(k)} \bigr ]_{\magWnr} = \order(\nicefrac{1}{c^{k+1}}) \in \Hoermr{-\infty}{1}
		\label{non_rel:eqn:commutator_chi_B_pi}
	\end{align}
	holds. 
\end{lem}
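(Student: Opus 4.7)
The plan is to reduce the commutator $[\chi_B,\pi^{(k)}]_{\magWnr}$ to the already-known commutation defect \eqref{non_rel:eqn:commutation_defect} via a Helffer--Sj\"ostrand representation at the symbol level. Concretely, choose a compactly supported almost-analytic extension $\tilde\chi\in\Cont^{\infty}_{\mathrm{c}}(\C)$ of $\chi$ so that $\bar\partial\tilde\chi$ vanishes to infinite order on $\R$, and write
\begin{align*}
    \chi_B = \frac{1}{\pi}\int_{\C} \bar\partial\tilde\chi(z)\,(\Hnr - z)^{(-1)^B_{\mathrm{nr}}}\, L(\dd z),
\end{align*}
where the Moyal resolvent is the one provided by Proposition~\ref{appendix:mag_PsiDOs:prop:inversion} (applied in the non-relativistic calculus); this is simply the Moyal lift of the operator identity used in Proposition~\ref{appendix:mag_PsiDOs:prop:inversion} to produce $\chi_B$ in the first place.

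Next, I would use the Moyal resolvent identity together with the Moyal Leibniz rule to push the commutator inside the integral:
\begin{align*}
    \bigl[\chi_B,\pi^{(k)}\bigr]_{\magWnr}
    &= \frac{1}{\pi}\int_{\C}\bar\partial\tilde\chi(z)\,\bigl[(\Hnr-z)^{(-1)^B_{\mathrm{nr}}},\pi^{(k)}\bigr]_{\magWnr}\,L(\dd z)\\
    &= -\frac{1}{\pi}\int_{\C}\bar\partial\tilde\chi(z)\,(\Hnr-z)^{(-1)^B_{\mathrm{nr}}}\magWnr\bigl[\Hnr,\pi^{(k)}\bigr]_{\magWnr}\magWnr(\Hnr-z)^{(-1)^B_{\mathrm{nr}}}\,L(\dd z).
\end{align*}
By \eqref{non_rel:eqn:commutation_defect}, the inner Moyal commutator is $\order(\nicefrac{1}{c^{k+1}})$ in $\Hoermr{k+1}{1}$. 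Sandwiching by two Moyal resolvents (each of order $\Hoermr{-1}{1}$ by the non-relativistic analogue of Proposition~\ref{appendix:mag_PsiDOs:prop:inversion}) gives an integrand in $\Hoermr{k-1}{1}$, with seminorms that grow at most polynomially in $\abs{\Im z}^{-1}$ as $z\to\R$. Because $\bar\partial\tilde\chi$ vanishes to infinite order on the real axis and is compactly supported, the $z$-integration absorbs this growth and also every symbol order, producing a symbol in $\Hoermr{-\infty}{1}$ of size $\order(\nicefrac{1}{c^{k+1}})$ in every seminorm; this is the claim.

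The main obstacle is precisely the control of the seminorms of $(\Hnr - z)^{(-1)^B_{\mathrm{nr}}}\magWnr(\cdots)\magWnr(\Hnr - z)^{(-1)^B_{\mathrm{nr}}}$ as $\Im z\to 0$. One needs uniform bounds of Beals/Bony--type for the Moyal resolvent as a symbol depending on the parameter $z$, so that each seminorm of the sandwiched symbol is dominated by a negative power of $\abs{\Im z}$. This is a standard non-trivial but well-known computation in the magnetic Weyl calculus (compare the proofs in \cite{Iftimie_Mantoiu_Purice:commutator_criteria:2008}), and combined with the infinite order of vanishing of $\bar\partial\tilde\chi$ on $\R$ it closes the estimate. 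Everything else --- the fact that the commutator automatically lies in $\Hoermr{-\infty}{1}$ (already immediate from $\chi_B\in\Hoermr{-\infty}{1}$ and the Moyal product structure) and the recovery of the $\nicefrac{1}{c^{k+1}}$ factor from \eqref{non_rel:eqn:commutation_defect} --- is then a routine bookkeeping argument analogous to the one carried out in the proof of Proposition~\ref{non_rel:prop:existence_pi}.
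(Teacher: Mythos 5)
Your proposal follows essentially the same route as the paper: write $\chi_B$ via the Helffer--Sjöstrand formula in terms of the Moyal resolvent, use the Moyal resolvent identity to reduce $\bigl[(\Hnr-z)^{(-1)^B_{\mathrm{nr}}},\pi^{(k)}\bigr]_{\magWnr}$ to a resolvent sandwich of the commutation defect \eqref{non_rel:eqn:commutation_defect}, and note that membership in $\Hoermr{-\infty}{1}$ is automatic from $\chi_B$ being smoothing. The paper's proof is exactly this sketch, so your version (with its extra remarks on the $\abs{\Im z}^{-1}$ seminorm control) is correct and adds only routine detail.
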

\begin{proof}
	A priori, we know from Theorem~\ref{non_rel:thm:properties_magWnr} that $\bigl [ \chi_B , \pi^{(k)} \bigr ]_{\magWnr} \in \Hoermr{-\infty}{1}$ is a smoothing symbol. To see the right-hand side is $\order(\nicefrac{1}{c^{k+1}})$, we combine $\bigl [ \Hnr , \pi^{(k)} \bigr ]_{\magWnr} = \order(\nicefrac{1}{c^{k+1}})$, the fact that $\chi_B$ can be written in terms of the Moyal resolvent $\bigl ( \Hnr - z \bigr )^{(-1)_{\mathrm{nr}}^B}$ via the Helffer-Sjöstrand formula, and 
	\begin{align*}
		\Bigl [ \bigl ( \Hnr - &z \bigr )^{(-1)_{\mathrm{nr}}^B} \; , \; \pi^{(k)} \Bigr ]_{\magWnr} 
		= \\
		&= - \bigl ( \Hnr - z \bigr )^{(-1)_{\mathrm{nr}}^B} \magWnr \Bigl [ \bigl ( \Hnr - z \bigr ) , \pi^{(k)} \Bigr ]_{\magWnr} \magWnr \bigl ( \Hnr - z \bigr )^{(-1)_{\mathrm{nr}}^B} 
		% \\
		% &
		= \order(\nicefrac{1}{c^{k+1}}) 
		. 
	\end{align*}
\end{proof}
The symbol $\chi_B \in \Hoermr{-\infty}{1}$ can be used to regularize any $f \in \Hoermr{m}{1}$, $m \in \R$, and we will systematically use the notation 
\begin{align}
	f_{\chi} := f \magWnr \chi_B \in \Hoermr{-\infty}{1} 
	. 
	\label{non_rel:eqn:regularizing_symbols_cutoff}
\end{align}
%
% Using equation~\eqref{non_rel:eqn:smoothened_cutoff_cutoff_equal_cutoff}
% 
% subsection Energy regularization (end)

\subsection{Approximate quantum dynamics} % (fold)
\label{non_rel:Approximate quantum dynamics}
Now we can proceed as in Section~\ref{semirel:quantum_dynamics}. However, since the non-relativistic limit is more singular than the semi-relativistic limit, we need to focus on states whose energy with respect to $\Hd$ is finite. This can be seen from the terms of the effective hamiltonian (cf.~equation~\eqref{non_rel:eqn:heff_explicit} below): $\Opnr^A(\heff)$ is a fourth-order operator while $\Hd$ is first-order. What is more, the most singular terms of $\Opnr^A(\heff)$ is proportional to $\tfrac{1}{c^4} \, \Opnr^A \bigl ( \abs{\xi}^4 \bigr )$, \ie of fourth order in $\nicefrac{1}{c}$ and should be a »perturbation« of the lower-order terms. Similarly, the symbols $\pi^{(k)}$ and $u^{(k)}$ are not asymptotic expansions associated to some almost projection and almost unitary on $L^2(\R^3,\C^4)$. In fact, for $k > 0$, their quantizations are unbounded and thus cannot be a projection or unitary in the operator sense. Hence, we need to regularize: so let $\chi_B \in \Hoermr{-\infty}{1}$ be as in Proposition~\ref{appendix:mag_PsiDOs:prop:inversion}. Then $\pi_{\chi} = \pi \magWnr \chi_B \in \Hoermr{-\infty}{1}$ and $u_{\chi} = u \magWnr \chi_B \in \Hoermr{-\infty}{1}$ are smoothing symbols and thus the associated non-relativistic $\Psi$DOs define bounded operators on $L^2(\R^3,\C^4)$. 

With the energy cutoff in place, we can derive approximate dynamics up to any order. However, for the sake of simplicity, we content ourselves with $k = 4$ and we will suppress ${\;}^{(4)}$ in the notation. For instance, we set 
\begin{align*}
	\pi &:= \pi^{(4)} \in \Hoermr{4}{1}
	,
	\\
	u &:= u^{(4)} \in \Hoermr{4}{1} 
	. 
\end{align*}
Then our main result reads: 
\begin{thm}[Non-relativistic limit]\label{non_rel:thm:non_rel_limit}
	Let Assumption~\ref{intro:assumption:fields} on $\mathbf{B}$, $A$ and $V$ be satisfied. Then for any compact $\Lambda \subset \R$ and a cutoff function $\chi \in \Cont^{\infty}_{\mathrm{c}}(\R,[0,1])$ with $\chi \vert_{\Lambda} = 1$, the dynamics associated to the non-relativistic quantization of 
	\begin{align}
		\heff :&= \piref \, \bigl ( u \magWnr \Hnr \magWnr u^* \bigr )^{(4)} \, \piref 
		\notag \\
		&= m \, \id_{\C^2} \oplus 0_{\C^2}
		+ \frac{1}{c^2} \biggl ( \frac{1}{2m} \xi^2 + V \biggr ) \oplus 0_{\C^2}
		- \frac{1}{c^3} \frac{\eps}{2 m} \mathbf{B} \cdot \sigma \oplus 0_{\C^2}
		+ \notag \\
		&\qquad 
		+ \frac{1}{c^4} \biggl ( 
			- \frac{1}{8 m^3} \, \sabs{\xi}^4 
			+ \frac{\eps}{4 m^2} \, \bigl ( \nabla_x V \wedge \xi \bigr ) \cdot \sigma 
			+ \frac{\eps^2}{8 m^2} \, \Delta V \biggr ) \oplus 0_{\C^2}
		. 
		\label{non_rel:eqn:heff_explicit}
	\end{align}
	approximate the full dynamics $e^{- i t \Hd}$ for initial states in $\ran 1_{\Lambda}(\Hd)$ in the following sense: 
	\begin{align}
		\norm{\Bigl ( e^{- i t \Hd} - \Opnr^A(u_{\chi}^*) \, e^{- i t \Opnr^A(\heff)} \, \Opnr^A(u) \Bigr ) \, \Opnr^A(\pi_\chi) \, 1_{\Lambda}(\Hd) } = \order \bigl ( \nicefrac{(1 + \abs{t})}{c^5} \bigr )
		\label{non_rel:eqn:non_rel_limit_dynamics}
	\end{align}
\end{thm}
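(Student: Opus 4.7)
The plan is to adapt the Duhamel argument from the proof of Theorem~\ref{semirel:thm:semi-relativistic_limit}, with the essential modification that the energy cutoff $\chi_B$ is used throughout to turn the unbounded symbols $\pi = \pi^{(4)}$ and $u = u^{(4)}$ of Propositions~\ref{non_rel:prop:existence_pi}--\ref{non_rel:prop:existence_u} into the smoothing symbols $\pi_\chi, u_\chi \in \Hoermr{-\infty}{1}$, whose quantizations are bounded by the magnetic Caldéron--Vaillancourt theorem. Before starting the dynamical argument, I would compute $\heff = \piref \, (u \magWnr \Hnr \magWnr u^*)^{(4)} \, \piref$ explicitly by inserting the symbols from equations~\eqref{non_rel:eqn:pi_3_explicit}, \eqref{non_rel:eqn:u_3_explicit} (together with the fourth-order term $u_4$ obtained from the recursion \eqref{non_rel:eqn:u_k_1_comp}) into the asymptotic expansion of $\magWnr$ supplied by Theorem~\ref{non_rel:thm:properties_magWnr}; this yields the Pauli, Zeeman, spin-orbit, Darwin and relativistic mass-correction terms in equation~\eqref{non_rel:eqn:heff_explicit}. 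Selfadjointness of the fourth-order operator $\Opnr^A(\heff)$ on the magnetic Sobolev space $H^4_A(\R^3,\C^2)\oplus\C^2$ follows from the matrix-valued magnetic commutator criteria (Appendix~\ref{appendix:mag_PsiDOs:extension}); the resulting unitary group commutes \emph{exactly} with $\Piref$ since $\piref$ is a constant matrix.

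The central step is the Duhamel identity
\begin{align*}
	D(t) &:= \bigl( e^{-it\Hd} - \Opnr^A(u_\chi^*)\,e^{-it\Opnr^A(\heff)}\,\Opnr^A(u) \bigr) \Opnr^A(\pi_\chi)\,1_\Lambda(\Hd) \\
	&= D(0) - i \int_0^t \negmedspace ds\, e^{-i(t-s)\Hd}\,\Opnr^A\bigl( \Hnr \magWnr u_\chi^* - u_\chi^* \magWnr \heff \bigr)\,e^{-is\Opnr^A(\heff)}\,\Opnr^A(u)\,\Opnr^A(\pi_\chi)\,1_\Lambda(\Hd).
\end{align*}
To estimate the initial term, I would combine $u^* \magWnr u = 1 + \order(\nicefrac{1}{c^5})$ from \eqref{non_rel:eqn:unitarity_defect} with the key identity $\chi(\Hd)\,1_\Lambda(\Hd) = 1_\Lambda(\Hd)$ from~\eqref{non_rel:eqn:smoothened_cutoff_cutoff_equal_cutoff} to show $\Opnr^A(u_\chi^*)\,\Opnr^A(u)\,\Opnr^A(\pi_\chi)\,1_\Lambda(\Hd) = \Opnr^A(\pi_\chi)\,1_\Lambda(\Hd) + \order_{\snorm{\cdot}}(\nicefrac{1}{c^5})$, since the Moyal composition with $\chi_B$ converts the order-$\nicefrac{1}{c^5}$ symbol defect into a smoothing one whose quantization has operator norm of the same order. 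For the integrand, the intertwining defect~\eqref{non_rel:eqn:intertwining_defect}, the very definition of $\heff$ as a truncation of $\piref\,(u \magWnr \Hnr \magWnr u^*)\,\piref$, and Lemma~\ref{non_rel:lem:commutator_chi_B_pi} together imply that $(\Hnr \magWnr u_\chi^* - u_\chi^* \magWnr \heff) \magWnr u \magWnr \pi_\chi$ is a smoothing symbol of order $\nicefrac{1}{c^5}$; unitarity of both flows then bounds the integral by $C(1+\sabs{t})/c^5$.

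The main obstacle is the careful bookkeeping required to commute the cutoff $\chi_B$ past the unbounded Moyal factors $\pi$, $u$ and $\Hnr$ at each stage, preserving both the symbol class $\Hoermr{-\infty}{1}$ and the $\order(\nicefrac{1}{c^5})$ size of the remainder in operator norm. A related subtlety is that $e^{-is\Opnr^A(\heff)}$ preserves only the reference subspace $\ran\Piref$ exactly, so the intertwiner must be applied in the correct order---the dynamics must first be conjugated onto $\ran\Piref$ before being propagated---rather than attempting to propagate on the superadiabatic subspace $\ran\Opnr^A(\pi)$, which is only approximately invariant. This is why the theorem carries the slight asymmetry $\Opnr^A(u_\chi^*)\cdots\Opnr^A(u)$: regularization on the left returns the state to $L^2$ via a bounded operator, while on the right the unboundedness of $u \in \Hoermr{4}{1}$ is absorbed by $\Opnr^A(\pi_\chi)$ into a smoothing composition.
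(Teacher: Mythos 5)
Your overall strategy coincides with the paper's: regularize $\pi^{(4)}$ and $u^{(4)}$ with the cutoff symbol $\chi_B$, compute $\heff$ explicitly, and run a Duhamel argument whose boundary term is controlled by the unitarity defect together with $\chi(\Hd)\,1_{\Lambda}(\Hd) = 1_{\Lambda}(\Hd)$. Two minor deviations: the paper obtains selfadjointness of $\Opnr^A(\heff)$ from the ellipticity result, Lemma~\ref{appendix:mag_PsiDOs:lem:selfadjointness}, not from the commutator criteria (those are used for $\chi_B$), and it deliberately avoids computing $u_4$ because $\piref \, \bigl [ u_4 , H_0 \bigr ] \, \piref = 0$; neither point affects correctness.

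The one step that does not work as written is your estimate of the Duhamel integrand. You claim smallness of the contiguous symbol product $\bigl ( \Hnr \magWnr u_\chi^* - u_\chi^* \magWnr \heff \bigr ) \magWnr u \magWnr \pi_\chi$, but in the integrand the propagator $e^{-i s \Opnr^A(\heff)}$ sits \emph{between} $\Opnr^A \bigl ( \Hnr \magWnr u_\chi^* - u_\chi^* \magWnr \heff \bigr )$ and $\Opnr^A(u) \, \Opnr^A(\pi_\chi)$, so that product never occurs; and the left factor alone is \emph{not} small, since $\Hnr \magWnr u^* - u^* \magWnr \heff$ is $\order(1)$ on the positronic block (already at leading order it equals $m(\beta - \piref)$) -- smallness holds only after composing with $\piref$ on the right. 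The missing mechanism, which is exactly how the paper proceeds, is to rewrite $u \magWnr \pi_\chi = \piref \magWnr u \magWnr \chi_B + \order(\nicefrac{1}{c^5})$ using Lemma~\ref{non_rel:lem:commutator_chi_B_pi} and the intertwining relation, to pull $\Opnr^A(\piref)$ through the propagator via $\bigl [ \heff , \piref \bigr ]_{\magWnr} = 0$ (an ingredient you do note), and then to verify $\chi_B \magWnr \bigl ( \Hnr \magWnr u^* \magWnr \piref - u^* \magWnr \heff \bigr ) = \order(\nicefrac{1}{c^5}) \in \Hoermr{-\infty}{1}$. That last chain needs, besides the unitarity and intertwining defects and the definition of $\heff$, the commutation defect \eqref{non_rel:eqn:commutation_defect}, $\bigl [ \Hnr , \pi \bigr ]_{\magWnr} = \order(\nicefrac{1}{c^5})$, which is absent from your list of ingredients; without it one arrives at $\pi \magWnr \Hnr \magWnr u^* \magWnr \piref$ and cannot exchange $\pi$ and $\Hnr$. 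Your closing paragraph gestures at the correct ordering, so the argument is repairable with what you already have, but as stated the central estimate is unjustified.
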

\begin{remark}
	If we neglect some terms in $\heff$ and only expand $\heff^{(k)}$ to order $k < 4$, then we deduce from a simple Duhamel argument 
	\begin{align*}
		\norm{\Bigl ( e^{- i t \Opnr^A(\heff)} - e^{- i t \Opnr^A(\heff^{(k)})} \Bigr ) \, \Opnr^A(u_{\chi})} = \order(\nicefrac{1}{c^{k+1}}) 
		. 
	\end{align*}
	Combined with equation~\eqref{non_rel:eqn:non_rel_limit_dynamics}, this implies that the dynamics generated by $\Opnr^A(\heff^{(k)})$ approximate the full dynamics up to $\order(\nicefrac{1}{c^{k+1}})$. 
\end{remark}
\begin{proof}
	Let $\Lambda \subset \R$ be compact and $\chi \in \Cont^{\infty}_{\mathrm{c}}(\R,[0,1])$ be such that $\chi \vert_{\Lambda} = 1$. Then by Proposition~\ref{appendix:mag_PsiDOs:prop:inversion}, there exists $\chi_B \in \Hoermr{-\infty}{1}$ such that $\Opnr^A(\chi_B) = \chi(\Hd)$. Thus the quantizations of $\pi_{\chi} \in \Hoermr{-\infty}{1}$ and $u_{\chi} \in \Hoermr{-\infty}{1}$ define bounded operators on $L^2(\R^3,\C^4)$ by Theorem~\ref{non_rel:thm:properties_magWnr} and the magnetic Caldéron-Vaillancourt theorem \cite[Theorem~3.1]{Iftimie_Mantiou_Purice:magnetic_psido:2006}. 
	
	Computing the effective hamiltonian $\heff$ is somewhat involved and we give the details in Appendix~\ref{appendix:non_rel:heff}. Its quantization is self-adjoint on 
	\begin{align*}
		\mathcal{D} \bigl ( \Opnr^A(\heff) \bigr ) = H^4_A(\R^3,\C^2) \otimes L^2(\R^3,\C^2) 
	\end{align*}
	by Lemma~\ref{appendix:mag_PsiDOs:lem:selfadjointness}, keeping in mind that $\Opnr^A(\heff)$ acts trivially on the positronic subspace. Hence, all the objects in equation~\eqref{non_rel:eqn:non_rel_limit_dynamics} are bounded operators on $L^2(\R^3,\C^4)$ and it remains to show the right-hand side is not just finite but $\order(\nicefrac{1}{c^5})$. By equation~\eqref{non_rel:eqn:smoothened_cutoff_cutoff_equal_cutoff} and Proposition~\ref{appendix:mag_PsiDOs:prop:inversion}, the full dynamics on $\ran 1_{\Lambda}(\Hd)$ can be approximated by 
	\begin{align}
		e^{-i t \Hd} \, 1_{\Lambda}(\Hd) &= e^{- i t \Hd} \, \chi(\Hd)^2 \, 1_{\Lambda}(\Hd) 
		\notag \\
		&= e^{- i t \Hd} \, \Opnr^A \Bigl ( \chi_B \magWnr \bigl ( u^* \magWnr u - (u^* \magWnr u - 1) \bigr ) \magWnr \chi_B \Bigr ) \, 1_{\Lambda}(\Hd) 
		\notag \\
		&= e^{- i t \Hd} \, \Opnr^A(u_{\chi}^*) \, \Opnr^A(u_{\chi}) \, 1_{\Lambda}(\Hd) 
		+ \notag \\
		&\qquad \qquad 
		+ e^{- i t \Hd} \, \Opnr^A \bigl ( \chi_B \magWnr (u^* \magWnr u - 1 ) \magWnr \chi_B \bigr ) \, 1_{\Lambda}(\Hd) 
		\notag \\
		&= e^{- i t \Hd} \, \Opnr^A(u_{\chi}^*) \, \Opnr^A(u_{\chi}) \, 1_{\Lambda}(\Hd) + \order_{\norm{\cdot}}(\nicefrac{1}{c^5})
		. 
		\label{non_rel:eqn:Dirac_dynamics_u_chi}
	\end{align}
	The remainder 
	\begin{align*}
		e^{- i t \Hd} \, \Opnr^A \bigl ( \chi_B \magWnr (u^* \magWnr u - 1 ) \magWnr \chi_B \bigr ) \, 1_{\Lambda}(\Hd) \overset{\eqref{non_rel:eqn:unitarity_defect}}{=} \order_{\norm{\cdot}}(\nicefrac{1}{c^5})
	\end{align*}
	is the product of bounded operators as $\chi_B \magWnr ( u^* \magWnr u - 1 ) \magWnr \chi_B \in \Hoermr{-\infty}{1}$ and its quantization is bounded \cite[Theorem~3.1]{Iftimie_Mantiou_Purice:magnetic_psido:2006}. To compare full and effective dynamics, we use equation~\eqref{non_rel:eqn:Dirac_dynamics_u_chi} to make a Duhamel argument: 
	\begin{align*}
		\Bigl ( e^{- i t \Hd} - &\Opnr^A(u_{\chi}^*) \, e^{- i t \Opnr^A(\heff)} \, \Opnr^A(u_{\chi}) \Bigr ) \, \Opnr^A(\pi_{\chi}) \, 1_{\Lambda}(\Hd) = 
		\\
		&\overset{\eqref{non_rel:eqn:Dirac_dynamics_u_chi}}{=} \Bigl ( e^{- i t \Hd} \, \Opnr^A(u_{\chi}^*) \, \Opnr^A(u_{\chi}) - \Opnr^A(u_{\chi}^*) \, e^{- i t \Opnr^A(\heff)} \, \Opnr^A(u_{\chi}) \Bigr ) 
		\cdot \\
		&\qquad \qquad \cdot 
		% \, 
		\Opnr^A(\pi_{\chi}) \, 1_{\Lambda}(\Hd) + \order_{\norm{\cdot}}(\nicefrac{1}{c^5})
		\\
		&= \int_0^t \dd s \, \frac{\dd}{\dd s} \Bigl ( e^{- i s \Hd} \, \Opnr^A({u_{\chi}}^*) \, e^{- i (t-s) \Opnr^A(\heff)} \, \Opnr^A(u_{\chi}) \Bigr ) 
		\cdot \\
		&\qquad \qquad \cdot 
		% \, 
		\Opnr^A(\pi_{\chi}) \, 1_{\Lambda}(\Hd) + \order_{\norm{\cdot}}(\nicefrac{1}{c^5})
	\end{align*}
	The time derivative can be computed explicitly: since $\pi_{\chi} , u_{\chi} \in \Hoermr{-\infty}{1}$ and their quantizations are smoothing operators, and map any magnetic Sobolev space $H^m_A(\R^3,\C^4)$ onto $H^{\infty}_A(\R^3,\C^4)$ \cite[Proposition~3.14]{Iftimie_Mantiou_Purice:magnetic_psido:2006}. Thus, $H^{\infty}_A(\R^3,\C^4) \subseteq \mathcal{D}(\Hd) , \mathcal{D} \bigl ( \Opnr^A(\heff) \bigr )$ ensures  the time derivative exists in the strong operator topology sense and we compute 
	\begin{align}
		\frac{\dd}{\dd s} &\Bigl ( e^{- i s \Hd} \Opnr^A(u_{\chi}^*) \, e^{-i (t-s) \Opnr^A(\heff)} \, \Opnr^A(u_{\chi}) \Bigr ) \, \Opnr^A(\pi_{\chi}) 1_{\Lambda}(\Hd) = 
		\notag \\
		&= e^{- i s \Hd} \, \Opnr^A \Bigl ( \Hnr \magWnr \chi_B \magWnr u^* - \chi_B \magWnr u^* \magWnr \heff \Bigr ) 
		\cdot \notag \\
		&\qquad \qquad \qquad \; \cdot 
		e^{-i (t-s) \Opnr^A(\heff)} \, \Opnr^A \bigl ( u \magWnr \chi_B \magWnr \pi \bigr ) \, \chi(\Hd) \, 1_{\Lambda}(\Hd) 
		+ \order_{\norm{\cdot}}(\nicefrac{1}{c^5})
		\notag \\
		&= e^{- i s \Hd} \, \Opnr^A \Bigl ( \chi_B \magWnr \bigl ( \Hnr \magWnr u^* - u^* \magWnr \heff \bigr ) \Bigr ) 
		\cdot \notag \\
		&\qquad \qquad \qquad \; \cdot 
		% \, 
		e^{-i (t-s) \Opnr^A(\heff)} \, \Opnr^A \bigl ( u \magWnr \pi \magWnr \chi_B \bigr ) \, 1_{\Lambda}(\Hd) + \order_{\norm{\cdot}}(\nicefrac{1}{c^5})
		\notag \\
		&= e^{- i s \Hd} \, \Opnr^A \Bigl ( \chi_B \magWnr \bigl ( \Hnr \magWnr u^* - u^* \magWnr \heff \bigr ) \Bigr ) 
		\cdot \notag \\
		&\qquad \qquad \qquad \; \cdot 
		% \, 
		e^{-i (t-s) \Opnr^A(\heff)} \, \Opnr^A \bigl ( \piref \magWnr u \magWnr \chi_B \bigr ) \, 1_{\Lambda}(\Hd) + \order_{\norm{\cdot}}(\nicefrac{1}{c^5})
		\label{non_rel:eqn:derivative_Duhamel}
	\end{align}
	where we have used Lemma~\ref{non_rel:lem:commutator_chi_B_pi} and $u \magWnr \pi = \piref \magWnr u + \order(\nicefrac{1}{c^5}) \in \Hoermr{16}{1}$ in the second-to-last and last step, respectively. 
	
	The remainder terms in \eqref{non_rel:eqn:derivative_Duhamel} are all bounded operators due to the presence of $\chi_B$. Also, $\bigl [ \heff , \piref \bigr ]_{\magWnr} = 0$ implies $\bigl [ e^{- i t \Opnr^A(\heff)} , \Opnr^A(\piref) \bigr ] = 0$ and hence 
	\begin{align*}
		\eqref{non_rel:eqn:derivative_Duhamel} &= e^{- i s \Hd} \, \Opnr^A \Bigl ( \chi_B \magWnr \bigl ( \Hnr \magWnr u^* - u^* \magWnr \heff \bigr ) \magWnr \piref \Bigr ) 
		\cdot \notag \\
		&\qquad \qquad \qquad \; \cdot 
		% \, 
		e^{-i (t-s) \Opnr^A(\heff)} \, \Opnr^A \bigl ( u_{\chi} \bigr ) \, 1_{\Lambda}(\Hd) + \order_{\norm{\cdot}}(\nicefrac{1}{c^5})
		\\
		&= e^{- i s \Hd} \, \Opnr^A \Bigl ( \chi_B \magWnr \bigl ( \Hnr \magWnr u^* \magWnr \piref - u^* \magWnr \heff \bigr ) \Bigr ) 
		\cdot \notag \\
		&\qquad \qquad \qquad \; \cdot 
		% \, 
		e^{-i (t-s) \Opnr^A(\heff)} \, \Opnr^A \bigl ( u_{\chi} \bigr ) \, \chi(\Hd) + \order_{\norm{\cdot}}(\nicefrac{1}{c^5}) 
		. 
	\end{align*}
	A closer inspection of the difference in hamiltonians yields that it is $\order(\nicefrac{1}{c^5})$: from the defining relations of $\pi$ and $u$ (cf.~equations \eqref{non_rel:eqn:projection_defect}--\eqref{non_rel:eqn:commutation_defect} and \eqref{non_rel:eqn:unitarity_defect}--\eqref{non_rel:eqn:intertwining_defect}), we conclude 
	\begin{align*}
		\Hnr \magWnr u^* \magWnr \piref &= \Hnr \magWnr u^* \magWnr \piref \magWnr \piref 
		\\
		&
		= \Hnr \magWnr \pi \magWnr u^* \magWnr \piref + \order(\nicefrac{1}{c^5})
		\\
		&= u^* \magWnr u \magWnr \pi \magWnr \Hnr \magWnr u^* \magWnr \piref + \order(\nicefrac{1}{c^5})
		\\
		&= u^* \magWnr \piref \magWnr u \magWnr \Hnr \magWnr u^* \magWnr \piref + \order(\nicefrac{1}{c^5}) 
		\\
		&
		= u^* \magWnr \heff + \order(\nicefrac{1}{c^5}) 
		\in \Hoermr{17}{1}
		. 
	\end{align*}
	This means $\chi_B \magWnr \bigl ( \Hnr \magWnr u^* \magWnr \piref - u^* \magWnr \heff \bigr ) = \order(\nicefrac{1}{c^5}) \in \Hoermr{-\infty}{1}$ and its quantization defines a bounded operator on $L^2(\R^3,\C^4)$ by \cite[Theorem~3.1]{Iftimie_Mantiou_Purice:magnetic_psido:2006}. Hence, we have shown equation~\eqref{non_rel:eqn:non_rel_limit_dynamics}. 
\end{proof}
%
% subsection Approximate quantum dynamics (end)

\subsection{Spectral results} % (fold)
\label{non_rel:spectrum}
Analogously to Section~\ref{semirel:spectrum}, the »almost unitary equivalence« of $\Hd$ and $H_{\mathrm{eff}}^{(k)} = \Opnr^A \bigl ( \heff^{(k)} \bigr )$ implies that the spectra are related. 

With $\Pi^{(k)}_{\chi} := \Opnr^A \bigl ( \pi^{(k)}_{\chi} \bigr )$ and $U^{(k)}_{\chi} := \Opnr^A \bigl ( u^{(k)}_{\chi} \bigr )$ as shorthands for the regularized almost projection and unitary, we can formulate 
\begin{thm}\label{non_rel:thm:spectrum}
	Let $E_{\mathrm{max}} > 0$ be finite, and $\chi$ a cutoff function associated to $[0,E_{\mathrm{max}}]$ as in Proposition~\ref{appendix:mag_PsiDOs:prop:inversion}. Then for any $k \in \N_0$, the following statements hold true: 
	\begin{enumerate}[(i)]
		\item Let $E_0 \in \sigma(\Hd) \cap [0,E_{\mathrm{max}}]$. Then for any $\delta > 0$, there exists $\Psi_{\delta} \in L^2(\R^3,\C^4)$, $\snorm{\Psi_{\delta}} = 1$, such that 
		\begin{align*}
			\Bnorm{\bigl ( H_{\mathrm{eff}}^{(k)} - E_0 \bigr ) \, U^{(k)} \, \Pi^{(k)} \, \Psi_{\delta}} < C_k \, \delta + \order(\nicefrac{1}{c^{k+1}})
		\end{align*}
		holds where $C_k$ and the $\order(\nicefrac{1}{c^{k+1}})$ term are independent of $\delta$. 
		\item Similarly, if $E_0 \in \sigma \bigl ( H_{\mathrm{eff}}^{(k)} \bigr ) \cap [0,E_{\mathrm{max}}]$, then for any $\delta > 0$, there exists $\Psi_{\delta} \in L^2(\R^3,\C^4)$ with $\snorm{\Psi_{\delta}} = 1$ such that 
		\begin{align*}
			\Bnorm{\bigl ( \Hd - E_0 \bigr ) \, {U^{(k)}}^* \, \Pi^{(k)} \, \Psi_{\delta}} < C_k \, \delta + \order(\nicefrac{1}{c^{k+1}})
		\end{align*}
		holds where $C_k$ and the $\order(\nicefrac{1}{c^{k+1}})$ term are independent of $\delta$. 
	\end{enumerate}
\end{thm}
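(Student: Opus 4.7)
The proof runs parallel to that of Theorem~\ref{semirel:thm:spectrum}, but we must work with the regularized almost-projection $\Pi^{(k)}_{\chi}$ and almost-unitary $U^{(k)}_{\chi}$ throughout (which I will identify with the objects $\Pi^{(k)}$, $U^{(k)}$ appearing in the statement) since the unregularized symbols belong to $\Hoermr{k}{1}$ and only their cut-off versions quantize to bounded operators on $L^2(\R^3,\C^4)$. The two ingredients are the Weyl criterion and an approximate intertwining between $\Hd$ and $H_{\mathrm{eff}}^{(k)}$ at order $\nicefrac{1}{c^{k+1}}$ which is valid \emph{after} regularization. The confinement of $E_0$ to $[0,E_{\mathrm{max}}]$ and the fact that $\chi \equiv 1$ on this interval let us invoke $\chi(\Hd)\Psi_{\delta} = \Psi_{\delta}$ and so insert cutoffs for free on the states produced by Weyl's criterion.

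For part (i), I would pick, for each sufficiently small $\delta > 0$, a unit vector $\Psi_{\delta} \in \ran 1_{[E_0 - \delta , E_0 + \delta]}(\Hd) \subset \ran 1_{[0,E_{\mathrm{max}}]}(\Hd)$ with $\bnorm{(\Hd - E_0)\Psi_{\delta}} < \delta$, which exists by the Weyl criterion. For $\delta$ small enough we have $\Opnr^A(\chi_B) \Psi_{\delta} = \chi(\Hd)\Psi_{\delta} = \Psi_{\delta}$ by Proposition~\ref{appendix:mag_PsiDOs:prop:inversion}. Combining the construction of $\heff^{(k)} = \piref \magWnr u^{(k)} \magWnr \Hnr \magWnr {u^{(k)}}^* \magWnr \piref$ with the defect equations \eqref{non_rel:eqn:projection_defect}--\eqref{non_rel:eqn:commutation_defect}, \eqref{non_rel:eqn:unitarity_defect}--\eqref{non_rel:eqn:intertwining_defect}, a short Moyal computation (analogous to the Duhamel step in the proof of Theorem~\ref{non_rel:thm:non_rel_limit}) yields the approximate intertwining
\begin{equation*}
    \heff^{(k)} \magWnr u^{(k)}_{\chi} \magWnr \pi^{(k)}_{\chi} = u^{(k)}_{\chi} \magWnr \pi^{(k)}_{\chi} \magWnr \Hnr + \order(\nicefrac{1}{c^{k+1}}) \in \Hoermr{-\infty}{1}.
\end{equation*}
Upon quantization, all the symbols involved are smoothing, hence their quantizations are bounded by the magnetic Caldéron--Vaillancourt theorem, and the resulting operator identity gives
\begin{align*}
    \Bnorm{\bigl(H_{\mathrm{eff}}^{(k)} - E_0\bigr) U^{(k)}_{\chi} \Pi^{(k)}_{\chi} \Psi_{\delta}}
    &\leq \bnorm{U^{(k)}_{\chi} \Pi^{(k)}_{\chi}} \, \bnorm{(\Hd - E_0) \Psi_{\delta}} + \order(\nicefrac{1}{c^{k+1}}) \\
    &< C_k \, \delta + \order(\nicefrac{1}{c^{k+1}}).
\end{align*}

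Part (ii) is strictly analogous with the roles of $\Hd$ and $H_{\mathrm{eff}}^{(k)}$ exchanged: since $\Opnr^A(\heff^{(k)})$ is selfadjoint on $H^4_A(\R^3,\C^2) \oplus L^2(\R^3,\C^2)$ by Lemma~\ref{appendix:mag_PsiDOs:lem:selfadjointness}, its Weyl sequence can again be chosen inside a spectral subspace on which $\chi(H_{\mathrm{eff}}^{(k)})$ acts as the identity, and the reversed intertwining relation---obtained by taking adjoints---yields the desired bound. The main technical point is the careful bookkeeping of cutoff insertions: the individual factors $u^{(k)}$ and $\pi^{(k)}$ lie in $\Hoermr{k}{1}$, so every symbol product appearing in the intertwining identity must be composed with $\chi_B$ on at least one side to land in $\Hoermr{-\infty}{1}$. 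Lemma~\ref{non_rel:lem:commutator_chi_B_pi} supplies the needed $\order(\nicefrac{1}{c^{k+1}})$ commutation estimate between $\chi_B$ and $\pi^{(k)}$, and an analogous Helffer--Sjöstrand argument (using that $u^{(k)}$ almost intertwines $\Hnr$ with $\heff^{(k)}$) handles $u^{(k)}$; once these commutations are controlled, the rest of the argument is formally identical to the semi-relativistic proof.
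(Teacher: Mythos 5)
Your proposal is correct and follows essentially the same route as the paper: regularize with the cutoff symbol $\chi_B$ of Proposition~\ref{appendix:mag_PsiDOs:prop:inversion}, use Lemma~\ref{non_rel:lem:commutator_chi_B_pi} to move $\chi_B$ past $\pi^{(k)}$ so that all relevant symbols become smoothing and their quantizations bounded, and then run the Weyl-criterion argument of Theorem~\ref{semirel:thm:spectrum}. The only difference is one of emphasis -- the paper explicitly verifies that $\Pi^{(k)}_{\chi}$ and $U^{(k)}_{\chi}$ are an almost-projection and almost-unitary on $\mathrm{ran} \, 1_{[0,E_{\mathrm{max}}]}(\Hd)$ and then defers to the semi-relativistic proof, while you spell out the regularized intertwining identity (the step the paper leaves implicit in ``adapt the proof''), and both treatments leave part (ii) at the same ``analogous'' level of detail.
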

\begin{proof}
	Let $\chi_B$ be as in Proposition~\ref{appendix:mag_PsiDOs:prop:inversion} for $\Lambda = [0,E_{\mathrm{max}}]$ and $\chi \in \Cont^{\infty}_{\mathrm{c}}(\R)$. On states of finite energy, $\Pi^{(k)}_{\chi}$ and $U^{(k)}_{\chi}$ are an approximate unitary and approximate projection: in case of $\Pi^{(k)}_{\chi}$, the definitions of $\chi$ and $\chi_B$ imply
	\begin{align*}
		\Opnr^A(\chi_B) \; 1_{[0,E_{\mathrm{max}}]}(\Hd) = \chi(\Hd) \; 1_{[0,E_{\mathrm{max}}]}(\Hd) 
		= 1_{[0,E_{\mathrm{max}}]}(\Hd)
	\end{align*}
	and thus, up to $\order_{\norm{\cdot}}(\nicefrac{1}{c^{k+1}})$, $\Pi^{(k)}_{\chi}$ is a projection, 
	\begin{align*}
		{\Pi^{(k)}}^2 \, 1_{[0,E_{\mathrm{max}}]}(\Hd) &= \Opnr^A \bigl ( \pi^{(k)} \magWnr \chi_B \magWnr \pi^{(k)} \magWnr \chi_B \bigr ) \, 1_{[0,E_{\mathrm{max}}]}(\Hd) 
		\\
		&
		\overset{\eqref{non_rel:eqn:commutator_chi_B_pi}}{=} \Opnr^A \bigl ( \pi^{(k)} \magWnr \pi^{(k)} \magWnr \chi_B \magWnr \chi_B \bigr ) \, 1_{[0,E_{\mathrm{max}}]}(\Hd) + \order_{\norm{\cdot}}(\nicefrac{1}{c^{k+1}})
		\\
		% &= \Opnr^A \bigl ( \pi^{(k)} \magWnr \pi^{(k)} \magWnr \chi_B \bigr ) \, \chi(\Hd) \, 1_{[0,E_{\mathrm{max}}]}(\Hd) + \order_{\norm{\cdot}}(\nicefrac{1}{c^{k+1}})
		% \\
		&= \Opnr^A \bigl ( \pi^{(k)} \magWnr \chi_B \bigr ) \, \chi(\Hd) \, 1_{[0,E_{\mathrm{max}}]}(\Hd) + \order_{\norm{\cdot}}(\nicefrac{1}{c^{k+1}})
		\\
		&
		= \Pi^{(k)} \, 1_{[0,E_{\mathrm{max}}]}(\Hd) + \order_{\norm{\cdot}}(\nicefrac{1}{c^{k+1}}) 
		. 
	\end{align*}
	The arguments showing that $U^{(k)}_{\chi}$ is an almost-unitary on $\mathrm{ran} \, 1_{[0,E_{\mathrm{max}}]}$ are analogous. Then one can adapt the proof of Theorem~\ref{semirel:thm:spectrum} to obtain (i) and (ii). 
\end{proof}
%
% subsection Spectral considerations (end)
% section non_relativistic_limit (end)
%!TEX root = /Users/max/Dropbox/research/non- and semi-relativistic limit of the Dirac equation (FL 2008)/ahp_non_semi_rel_Dirac.tex
\section{Discussion and related literature} % (fold)
\label{discussion}
Roughly speaking, results which connect the Dirac operator to semi- or non-relativistic Pauli-type operators belong to one of three categories: approaches which relate the two dynamics, block diagonalization schemes and purely spectral results. Our work falls into the first category. 

These three categories are not independent, but form a hierarchy: most dynamical results include a systematic block diagonalization scheme. And since the block diagonalized operator is unitarily equivalent to $\widehat{H}_{\mathrm{D}}$, one knows that these two operators are isospectral, and studying the spectra of the operators in the block diagonals yields information on the spectrum of the Dirac operator.

\subsection{Approximation of dynamics} % (fold)
\label{discussion:our_summary}
To facilitate a comparison to other works, let us give a summary of our main results: using \emph{magnetic} pseudodifferential methods, we have shown that if the typical energies are small, \ie $\nicefrac{v}{c} \ll 1$, then Theorem~\ref{semirel:thm:semi-relativistic_limit} ensures there exists a projection $\Pi_{\mathrm{sr}}$, a unitary operator $U_{\mathrm{sr}}$ and an effective hamiltonian $H_{\mathrm{sr \, eff}} = \Opsr^A(h_{\mathrm{sr \, eff}})$ which approximate the dynamics to \emph{any} order in $\nicefrac{1}{c}$ in the sense that 
\begin{align*}
	\Bnorm{\Bigl ( e^{- i t \widehat{H}_{\mathrm{D}}} -  U_{\mathrm{sr}}^* \, e^{- i c^2 t H_{\mathrm{sr \, eff}}} \, U_{\mathrm{sr}} \Bigr ) \Pi_{\mathrm{sr}}} = \order \bigl ( \nicefrac{(1 + \abs{t})}{c^{\infty}} \bigr )
\end{align*}
holds. Note the extra factor $c^2$ which stems from rescaling the Dirac hamiltonian $\widehat{H}_{\mathrm{D}} = c^2 \, \Hd$ in Section~\ref{scalings}. The above equation also tells us that the block structure is preserved by the dynamics up to arbitrarily small error in $\nicefrac{1}{c}$. 

All operators involved, $\Pi_{\mathrm{sr}}$, $U_{\mathrm{sr}}$ and $H_{\mathrm{sr \, eff}}$, are $\order(\nicefrac{1}{c^{\infty}})$-close in norm to pseudodifferential operators which have an \emph{asymptotic} expansion in $\nicefrac{1}{c}$. Thereby, we have disproven a claim by Reiher and Wolf \cite{Reiher_Wolf:exact_decoupling_1:2004} that no unitary block diagonalizing $\widehat{H}_{\mathrm{D}}$ with expansion in $\nicefrac{1}{c}$ exists. Furthermore, we compute the first two and three orders of $\Pi_{\mathrm{sr}} = \Opsr^A(\pi_{\mathrm{sr} \, 0}) + \order_{\norm{\cdot}}(\nicefrac{1}{c^3})$, $U_{\mathrm{sr}} = \Opsr^A(u_{\mathrm{sr} \, 0}) + \order_{\norm{\cdot}}(\nicefrac{1}{c^3})$ and $H_{\mathrm{sr \, eff}}$, respectively, where the latter is the semi-relativistic magnetic pseudodifferential operator associated to 
\begin{align*}
	h_{\mathrm{sr \, eff}} &= E \, \piref + \tfrac{1}{c^2} V \, \piref 
	\, + \\
	&\quad 
	- \frac{1}{c^3} \frac{\eps}{2 E (E+m)} 
		% \Bigl ( 
			\bigl ( (E+m) \, \mathbf{B} - (\nabla_x V \wedge \xi) \bigr ) \cdot \sigma \oplus 0_{\C^2}
		% \Bigr ) 
	+ \order(\nicefrac{1}{c^4}) 
	. 
\end{align*}
Here, $E(\xi) = \sqrt{m^2 + \xi^2}$ is the relativistic kinetic energy. The third-order term is the first spin-dependent contribution and well-known from the description of relativistic spin dynamics via the T-BMT equation \cite{Bargmann_Michel_Telegdi:relativistic_spin_dynamics:1959,Spohn:semiclassics_Dirac:2000,PST:sapt:2002,Teufel:adiabatic_perturbation_theory:2003}. In principle, fourth- and higher-order terms of $H_{\mathrm{sr \, eff}}$ could be obtained with moderate computational effort. 

If the typical velocities and momenta are an order of magnitude smaller, then for electronic states of finite energy, \ie those from $\ran 1_{[0,E_0]} \bigl ( \tfrac{1}{c^2} \, \widehat{H}_{\mathrm{D}} \bigr )$, one can construct an approximate projection $\Pi_{\mathrm{nr}}^{(k)}$ and an approximate unitary $U_{\mathrm{nr}}^{(k)}$ on $\ran 1_{[0,E_0]} \bigl ( \tfrac{1}{c^2} \, \widehat{H}_{\mathrm{D}} \bigr )$ so that the full non-relativistic dynamics can be approximated up to errors of order $\order(\nicefrac{1}{c^{k+1}})$ by $e^{- i c^2 t H^{(k)}_{\mathrm{nr \, eff}}}$ where $H^{(k)}_{\mathrm{nr \, eff}}$ is the non-relativistic magnetic pseudodifferential operator associated to 
\begin{align*}
	h^{(k)}_{\mathrm{nr \, eff}} &= m \, \piref
	+ \frac{1}{c^2} \biggl ( \frac{1}{2m} \xi^2 + V \biggr ) \oplus 0_{\C^2}
	- \frac{1}{c^3} \frac{\eps}{2 m} \mathbf{B} \cdot \sigma \oplus 0_{\C^2}
	+ \notag \\
	&\quad 
	+ \frac{1}{c^4} \biggl ( 
		- \frac{1}{8 m^3} \, \sabs{\xi}^4 
		+ \frac{\eps}{4 m^2} \, \bigl ( \nabla_x V \wedge \xi \bigr ) \cdot \sigma 
		+ \frac{\eps^2}{8 m^2} \, \Delta V \biggr ) \oplus 0_{\C^2}
	+ \order(\nicefrac{1}{c^5})
	. 
\end{align*}
In other words, Theorem~\ref{non_rel:thm:non_rel_limit} tells us that 
\begin{align*}
	\norm{\Bigl ( e^{- i t \widehat{H}_{\mathrm{D}}} - {U_{\mathrm{nr}}^{(k)}}^* \, e^{- i c^2 t H^{(k)}_{\mathrm{nr \, eff}}} \, U_{\mathrm{nr}}^{(k)} \Bigr ) \, \Pi_{\mathrm{nr}}^{(k)} \, 1_{[0,E_0]} \bigl ( \tfrac{1}{c^2} \, \widehat{H}_{\mathrm{D}} \bigr ) } = \order \bigl ( \nicefrac{(1 + \abs{t})}{c^{k-1}} \bigr )
\end{align*}
holds for any integer $k \geq 2$. Even though it is in principle possible to compute fifth- and higher-order terms of $H^{(k)}_{\mathrm{nr \, eff}}$, the computational effort increases considerably. 

The fact that we obtain semi- and non-relativistic limit using \emph{precisely the same method} in different scalings substantiates the claim found in physics text books \cite{Bjorken_Drell:relativistic_qm:1998}, namely that if one were to compute the Foldy-Wouthuysen transform of $\Hd$, the terms are related to the Taylor expansion of $\sqrt{m^2 + \xi^2}$ for small momenta. We can make that claim much more precise: if we Taylor-expand $h_{\mathrm{sr \, eff}}(x,\nicefrac{\xi}{c})$ around $\nicefrac{1}{c} = 0$, we obtain 
\begin{align*}
	h_{\mathrm{sr \, eff}}(x,\nicefrac{\xi}{c}) &= m \, \piref + \frac{1}{c^2} \left ( \frac{1}{2m} \xi^2 + V \right ) \, \piref 
	- \frac{1}{c^3} \frac{\eps}{2 m} B \cdot \sigma \oplus 0_{\C^2} 
	+ \\
	&\quad 
	+ \frac{1}{c^4} \left ( - \frac{1}{8 m^3} \, \abs{\xi}^4 + \frac{\eps}{4 m^2} \, (\nabla_x V \wedge \xi) \cdot \sigma \right ) \oplus 0_{\C^2}
	% + \\
	% &\qquad 
	+ \order(\nicefrac{1}{c^5}) 
	\\
	&= h_{\mathrm{nr \, eff}}(x,\xi) + \order(\nicefrac{1}{c^4})
	. 
\end{align*}
Perhaps surprisingly, we recover two out of three terms of $h_{\mathrm{nr \, eff \, 4}}$ right away, the Darwin term is a »genuine« fourth-order term stemming from a semi-relativistic fourth-order correction. Certainly, this suggestive computation needs to be taken with a grain of salt: the expansion of $\sqrt{m^2 + \xi^2}$ has a finite radius of convergence and thus, the above only holds if $\abs{\xi}$ is small enough. Hence, the pathological nature of the non-relativistic approximation is present even in the classical system. It is nevertheless revealing and satisfactory to connect semi- and non-relativistic hamiltonians in such a simple way.  

The assumption that $B$ and $V$ are of class $\Cont^{\infty}_{\mathrm{b}}$ stem from our use of pseudodifferential methods. Admitting suitable matrix-valued potentials of class $\Cont^{\infty}_{\mathrm{b}}$ is straightforward. If one wants to generalize to less regular potentials, the $\Psi$DO techniques will need to be augmented by operator theoretic methods, and some arguments which follow from results on pseudodifferential operators need to be completed »by hand«. On the positive side, we make no decay assumptions on $B$, $A$ and $V$, and our assumptions include the case of constant magnetic field. 

The crucial ingredient in semi- and non-relativistic limit was choosing an \emph{adapted} \emph{magnetic} pseudodifferential calculus where each of the attributes is crucial: since powers of $\nicefrac{1}{c}$ enter as prefactors in front of the gradient \emph{and} the magnetic vector potential, attempting to derive our results using \emph{non}-magnetic pseudodifferential theory is doomed to fail. In addition, our results extend naturally to magnetic fields of class $\Cont^{\infty}_{\mathrm{b}}$ whereas the standard pseudodifferential approach to describe magnetic systems, standard Weyl quantization combined with minimal substitution, assume that the components of the \emph{vector potential} are $\Cont^{\infty}_{\mathrm{b}}$. This then excludes the physically important case of constant magnetic fields. 

Lastly, let us mention our spectral results, Theorems~\ref{semirel:thm:spectrum} and \ref{non_rel:thm:spectrum}. These are by no means particularly deep, but we felt it necessary to include them given the breadth of spectral results (cf.~Section~\ref{discussion:spectral} below). In essence, they are just saying that if $\Hd$ has spectrum in the vicinity of $E_0 > 0$, then $H_{\mathrm{eff}}$ has spectrum in a possibly larger neighborhood of $E_0$ and vice versa. 
\medskip

\noindent
From a physical perspective, approximating the \emph{dynamics} is \emph{the} crucial step in the justification of why one can use the semi- or non-relativistic Pauli equation to describe a quantum spin-$\nicefrac{1}{2}$ particle for small energies. These models are (conceptually and numerically) simpler than treating the full Dirac equation \eqref{intro:eqn:Dirac_equation}. 

Our results establish a hierarchy of approximations in the following sense: if the typical energies are small compared to the rest energy of the particle, then $\nicefrac{1}{c} \ll 1$, and electronic and positronic degrees decouple to any order in $\nicefrac{1}{c}$. The dynamics for electronic states are then generated by the semi-relativistic hamiltonian for positive energy initial states, $H_{\mathrm{sr \, eff}}$. For even smaller energies, the dynamics generated by the non-relativistic effective hamiltonian $H_{\mathrm{nr \, eff}}$ approximate $e^{- i t \widehat{H}_{\mathrm{D}}}$ for finite-energy states. 

Even though our work is not the first one to show how to approximate the dynamics of a Dirac particle, to the best of our knowledge, it is the first to show that the dynamics can be approximated by semi- or non-relativistic dynamics to \emph{any} order in $\nicefrac{1}{c}$. The works of Bechouche et al \cite{Bechouche_Mauser_Poupaud:non_relativistic_limit_dynamics_Dirac:1998} and Mauser \cite{Mauser:non_relativistic_limit_dynamics_Dirac:1999} are the first to derive a non-relativistic limit of the Dirac \emph{dynamics} using the same small parameter, the ratio of a characteristic velocity to the speed of light. Their main result in this context, \cite[Corollary~5.1]{Bechouche_Mauser_Poupaud:non_relativistic_limit_dynamics_Dirac:1998}, shows that the dynamics generated by the Pauli hamiltonian
\begin{align*}
	H_{\mathrm{P}} = \frac{1}{2 m} \bigl ( - i \eps \nabla_x - \tfrac{1}{c} A(\hat{x}) \bigr )^2 + V(\hat{x}) - \frac{1}{c} \, \frac{\eps}{2m} \mathbf{B}(\hat{x}) \cdot \sigma 
\end{align*}
approximate the full Dirac dynamics up to errors of $\order(\nicefrac{1}{c^2})$ for times of $\order(1)$ and initial states that are in some sense $\order(\nicefrac{1}{c^2})$-close to $\ran \pi_{\mathrm{sr} \, 0} \bigl (- i \tfrac{\eps}{c} \nabla_x \bigr )$. Bechouche et al's result is stronger than Theorem~\ref{non_rel:thm:non_rel_limit} in two ways: they include time-dependent fields and require less regularity from the components of $A$ and $V$ -- and thus from $\mathbf{E}$ and $\mathbf{B}$. On the other hand, they provide no systematic perturbation scheme, and it is not clear how to extend their ideas to include higher-order corrections to $H_{\mathrm{P}}$. Furthermore, Theorem~\ref{non_rel:thm:non_rel_limit} holds for long times and magnetic fields which do not decay at infinity: the difference between Dirac and non-relativistic dynamics goes to $0$ \emph{in norm} as $\nicefrac{1}{c} \rightarrow 0$ even for times of order $\order(c^{\alpha})$, $\alpha < 5$. 

Other previous results in this direction, \eg the works by Brummelhuis and Nourigat \cite{Brummelhuis_Nourrigat:scattering_Dirac:1999}, Spohn \cite{Spohn:semiclassics_Dirac:2000} and Panati, Spohn and Teufel \cite{PST:sapt:2002}, all use the semiclassical parameter $\eps$ as expansion parameter. Hence, these authors do not recover semi- or non-relativistic limiting dynamics, but effective dynamics for slowly varying external fields. Physically, this distinction is indeed significant: while the aforementioned three publications obtain the T-BMT equation for spin, a »semiclassical limit of $H_{\mathrm{sr \, eff}}$ in $\nicefrac{1}{c}$« (itself a straightforward consequence of \cite[Theorem~3.6.2]{Lein:progress_magWQ:2010}) yields ballistic motion, and we are unable to say anything about spin dynamics since the first two terms of $H_{\mathrm{sr \, eff}}$ are scalar in the spin degrees of freedom. 
% subsection Approximation of dynamics (end)

\subsection{Block diagonalization methods} % (fold)
\label{discussion:block_diagonalization}
Block-diagonalization techniques are recurrent schemes which successively construct a unitary operator $U_k$ such that 
\begin{align*}
	U_k \, H_k \, U_k^* = U_k \left (
	\begin{matrix}
		h_{+ \, k} & \order(\epsilon^{k+1}) \\
		\order(\epsilon^{k+1}) & h_{- \, k} \\
	\end{matrix}
	\right ) U_k^* = \left (
	\begin{matrix}
		h_{+ \, k+1} & \order(\epsilon^{k+2}) \\
		\order(\epsilon^{k+2}) & h_{- \, k+1} \\
	\end{matrix}
	\right ) =: H_{k+1}
\end{align*}
where $H_0 := \widehat{H}_{\mathrm{D}}$ is the original Dirac hamiltonian and $\epsilon$ some small parameter (cf.~Section~\ref{scalings:small_parameter}), \eg the mass $m$ or the charge $\mathrm{e}$. Then the Dirac hamiltonian and the approximately block diagonalized hamiltonian $H_{k+1}$ is related to the Dirac hamiltonian via 
\begin{align*}
	H_{k+1} = \bigl ( U_k \cdots U_1 \bigr ) \, \widehat{H}_{\mathrm{D}} \, \bigl ( U_k \cdots U_1 \bigr )^* 
	. 
\end{align*}
The first such scheme was proposed by Foldy and Wouthuysen in their famous paper \cite{Foldy_Wouthuysen:Dirac_theory:1950}; although many subsequent publications claim they expand in $\epsilon = \nicefrac{1}{c}$, they in fact set $c = 1$ and use $\nicefrac{1}{m}$ as small parameter. In the presence of external fields, they arrive at the non-relativistic hamiltonian $H_{\mathrm{nr \, eff}}^{(4)}$ which is thus plagued by technical problems of higher-order terms being increasingly singular. This not only poses problems for mathematicians, but also for physicists and theoretical chemists doing numerical calculations of relativistic systems \cite{Reiher_Wolf:exact_decoupling_1:2004}. 

Hence, the search for alternative block diagonalization schemes is the subject of active research. One such method is the so-called Douglas--Kroll--Heß method where the small parameter is essentially the charge $\mathrm{e}$. 
The idea for the DKH transformation  \cite{Reiher_Wolf:exact_decoupling_1:2004,Reiher:relativistic_DK_theory:2012} originated in a paper by Douglas and Kroll \cite{Douglas_Kroll:QED_corrections_helium:1974} whose ideas were expanded on by Heß and co-workers \cite{Hess:revision_DK_transform:1986,Hess_Jansen:revision_DK_transform:1989}. Meanwhile, Siedentop and Stockmeyer have analyzed the works of Reiher et al.~from a mathematical point of view \cite{Siedentop_Stockmeyer:DKH_method_convergence:2006}. For the Coulomb potential, they show two things: first of all, they prove that the spectra of the upper-left and lower-right block operators of $H_k$ converge to positive and negative part of the spectrum of $\widehat{H}_{\mathrm{D}}$ as $k \rightarrow \infty$ (Theorem~2). Secondly, they show that the  projection onto the electronic states $\Pi(\mathrm{e})$, the block diagonalizing unitary $U(\mathrm{e})$ and in some sense $H_k$ are not just asymptotic, but \emph{analytic} in $\mathrm{e}$ (Lemma~1, Theorem~1 and Lemma~6, respectively). To the best of our knowledge, there are no mathematical works on the DKH method which include magnetic fields. Even from a non-rigorous perspective, it seems less clear how to incorporate magnetic fields (cf.~the discussion on Magnetic Properties in \cite[p.~144]{Reiher:relativistic_DK_theory:2012}). 

Lastly, let us mention the works of Cordes \cite{Cordes:pseudodifferential_FW_transform:1983,Cordes:pseudodifferential_FW_transform:2004} who was the first to formulate the problem in the language of pseudodifferential theory: he uses no small parameter, but instead classifies the terms according to their symbol class (\ie decay in momentum). One glance at Definition~\ref{semirel:defn:semi-relativistic_symbol} reveals that ordering according to decay is not the same as ordering with respect to a small parameter. As he uses the standard expansion of the non-magnetic Moyal product, his result is a precursor to \cite{Brummelhuis_Nourrigat:scattering_Dirac:1999} and \cite[Theorem~4.4]{Teufel:adiabatic_perturbation_theory:2003}. 
% subsection Block diagonalization schemes (end)

\subsection{Spectral results} % (fold)
\label{discussion:spectral}
Most publications on the $\nicefrac{1}{c} \rightarrow 0$ limit focus on the spectral aspects, in particular on bound states (\eg \cite{Hunziker:non_rel_limit_Dirac:1975,Gesztesy_Grosse_Thaller:relativist_corrections_bound_state_energies:1984,Grigore_Nenciu_Purice:non_rel_limit_Dirac:1989}, see also \cite{Thaller:Dirac_equation:1992} and references therein). Grigore, Nenciu and Purice \cite{Grigore_Nenciu_Purice:non_rel_limit_Dirac:1989}, for instance, combine pseudoresolvents with analytic perturbation theory in the sense of Kato to prove that to lowest order, the electronic half of the spectrum $\sigma(\widehat{H}_{\mathrm{D}}) \cap [0,+\infty)$ becomes arbitrarily close to the spectrum of the Pauli operator as $\nicefrac{1}{c} \rightarrow 0$ (Theorem~I.4). They show that if $E_0 \in \sigma(\widehat{H}_{\mathrm{D}}) \cap [0,+\infty)$ is an isolated eigenvalue of finite degeneracy, then it can be computed in terms of the eigenfunction and the Pauli operator with higher-order corrections (Theorem~III.1). 

Note that most of the time (\eg \cite[Chapter~6]{Thaller:Dirac_equation:1992}, \cite{Hunziker:non_rel_limit_Dirac:1975,Gesztesy_Grosse_Thaller:relativist_corrections_bound_state_energies:1984,Grigore_Nenciu_Purice:non_rel_limit_Dirac:1989}), the magnetic field is scaled \emph{differently} than in the physics literature: $B$ is replaced with $c B$ and $A$ with $c A$, \ie the magnetic field goes up the spout as $\nicefrac{1}{c} \rightarrow 0$. In Thaller's words, the rationale behind this choice of scaling is to avoid \emph{»turning the light off«.} For otherwise magnetic terms were higher-order effects and the leading-order hamiltonian would be the ordinary \emph{non}-magnetic Schrödinger operator rather than the Pauli operator. 

Looking at the non-relativistic effective hamiltonian, equation~\eqref{non_rel:eqn:heff_explicit}, we see that this is not necessary, effects which stem from the presence of the magnetic field simply \emph{are} higher-order effects and appear at \emph{third} order in $\nicefrac{1}{c}$. 
% subsection Spectral results (end)

% section Discussion and literature (end)
%!TEX root = /Users/max/Dropbox/research/non- and semi-relativistic limit of the Dirac equation (FL 2008)/non_semi_rel_Dirac.tex
%
\begin{appendix}
	\section{Magnetic $\Psi$DOs} % (fold)
	\label{appendix:mag_PsiDOs}
	This section provides supplementary material concerning the magnetic pseudodifferential calculi of matrix-valued symbols.

	\subsection{Extension to matrix-valued symbols} % (fold)
	\label{appendix:mag_PsiDOs:extension}
	Even though extending results concerning $\Psi$DOs to matrix-valued functions is straightforward and standard, we discuss some central results for completeness and convenience. Let 
		\begin{align*}
			\Hoermr{m}{\rho}(T^* \R^d) := \Bigl \{ f \in \Cont^{\infty}(T^* \R^d) \; &\big \vert \; \forall a , \alpha \in \N_0^d \; \exists \, C_{a\alpha} > 0 : 
			\Bigr . \\
			&\qquad \qquad \Bigl . 
			\babs{\partial_x^a \partial_{\xi}^{\alpha} f(x,\xi)} \leq C_{a\alpha} \, \sqrt{1 + \xi^2}^{\, m - \abs{\alpha} \rho} \Bigr \}
		\end{align*}
		be the space of scalar-valued Hörmander symbols of order $m$ and type $\rho \in [0,1]$. The space of $\mathcal{B}(\C^N)$-valued Hörmander symbols $\Hoermr{m}{\rho} \bigl ( T^* \R^d , \mathcal{B}(\C^N) \bigr )$ is defined analogously to equation~\eqref{semirel:eqn:Hoermander_symbols}. Then 
		\begin{align}
			\Hoermr{m}{\rho} \bigl ( T^* \R^d , \mathcal{B}(\C^N) \bigr ) = \Hoermr{m}{\rho}(T^* \R^d) \otimes \mathcal{B}(\C^N)
			\label{appendix:non_rel:eqn:Hoermander_classes_tensor_product}
		\end{align}
		agree as Fréchet spaces by \cite[Proposition~40.2]{Treves:topological_vector_spaces:1967} and the finite-dimension\-ality of $\mathcal{B}(\C^N)$. First, we extend \cite[Theorem~4.1]{Iftimie_Mantiou_Purice:magnetic_psido:2006} since we need to know the domains of selfadjointness explicitly for the proofs of Theorems~\ref{semirel:thm:semi-relativistic_limit} and \ref{non_rel:thm:non_rel_limit}. 
	\begin{lem}\label{appendix:mag_PsiDOs:lem:selfadjointness}
		Assume $\mathbf{B} = \nabla_x \wedge A$ is of class $\Cont^{\infty}_{\mathrm{b}}$ and the components of $A$ are $\Cont^{\infty}_{\mathrm{pol}}$ functions. Let $H \in \Hoermr{m}{\rho} \bigl ( T^* \R^3 , \mathcal{B}(\C^N) \bigr )$, $\rho \in (0,1]$, be a symmetric $N \times N$ matrix-valued symbol such that 
		\begin{align*}
			(x,\xi) \mapsto \inf \abs{\sigma \bigl ( H(x,\xi) \bigr )} 
		\end{align*}
		is elliptic in the usual sense. Then $\Opnr^A(H)$ and $\Opsr^A(H)$ define selfadjoint operators on $H^m_A(\R^3) \otimes \C^N$ where $H^m_A(\R^3)$ is the $m$th magnetic Sobolev space (cf.~\cite[Definition~3.4]{Iftimie_Mantiou_Purice:magnetic_psido:2006}). 
	\end{lem}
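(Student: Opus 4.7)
The plan is to reduce the matrix-valued statement to the scalar version \cite[Theorem~4.1]{Iftimie_Mantiou_Purice:magnetic_psido:2006}, exploiting the tensor product identification \eqref{appendix:non_rel:eqn:Hoermander_classes_tensor_product} together with the matrix-valued Moyal inversion supplied by Proposition~\ref{appendix:mag_PsiDOs:prop:inversion}. Since the argument is structurally identical for $\Opnr^A(H)$ and $\Opsr^A(H)$---only the specific Moyal product differs---I will carry it out for $\Opnr^A$ and note that the semi-relativistic case follows by replacing $\magWnr$ with $\magWsr$ and invoking Lemma~\ref{semirel:lem:asymp_expansion_magW} in place of Theorem~\ref{non_rel:thm:properties_magWnr}.

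First, I would observe that the magnetic quantization satisfies $\Opnr^A(H)^{\ast} = \Opnr^A(H^{\ast})$ on the dense subspace $\Schwartz(\R^3) \otimes \C^N$, so the pointwise symmetry $H = H^{\ast}$ immediately gives that $\Opnr^A(H)$ is symmetric there. Second, for $\lambda \in \R$ with $\abs{\lambda}$ sufficiently large, I would show that $(H(x,\xi) - i \lambda)^{-1}$ defines a symbol in $\Hoermr{-m}{\rho}(T^{\ast} \R^3 , \mathcal{B}(\C^N))$: Hermiticity of $H(x,\xi)$ places $\sigma(H(x,\xi)) \subset \R$, so $H(x,\xi) - i\lambda$ is pointwise invertible, and the ellipticity assumption combined with symbol-class bounds on the derivatives of $H$ (propagated through the identity $\partial (H - i\lambda)^{-1} = - (H - i\lambda)^{-1} \, (\partial H) \, (H - i\lambda)^{-1}$) yields the required seminorm estimates. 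Proposition~\ref{appendix:mag_PsiDOs:prop:inversion} applied to $H - i\lambda$ then produces a Moyal inverse $(H - i\lambda)^{(-1)^B_{\mathrm{nr}}} \in \Hoermr{-m}{\rho}$, whose quantization is a two-sided inverse of $\Opnr^A(H) - i\lambda$ on $L^2(\R^3,\C^N)$ and maps $L^2(\R^3,\C^N)$ into $H^m_A(\R^3) \otimes \C^N$ by the matrix extension of \cite[Proposition~3.14]{Iftimie_Mantiou_Purice:magnetic_psido:2006}.

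Combining these two facts, $\Opnr^A(H) - i \lambda$ restricted to $H^m_A(\R^3) \otimes \C^N$ is a bijection onto $L^2(\R^3,\C^N)$; together with symmetry this gives $\ran \bigl ( \Opnr^A(H) \pm i \lambda \bigr ) = L^2(\R^3,\C^N)$, hence selfadjointness on precisely the domain $H^m_A(\R^3) \otimes \C^N$ by the basic range criterion. The main obstacle is not the argument itself, which is textbook once the symbolic pieces are in place, but the verification that Proposition~\ref{appendix:mag_PsiDOs:prop:inversion} genuinely extends from scalar to $\mathcal{B}(\C^N)$-valued symbols; this in turn rests on pushing the Beals-- and Bony-type commutator criteria of \cite{Iftimie_Mantoiu_Purice:commutator_criteria:2008} through the tensor product \eqref{appendix:non_rel:eqn:Hoermander_classes_tensor_product}. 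As emphasized in the discussion preceding Lemma~\ref{non_rel:lem:commutator_chi_B_pi}, the finite-dimensionality of $\mathcal{B}(\C^N)$ eliminates the sole nontrivial issue (the algebraic tensor product requires no completion), so the extension amounts to straightforward bookkeeping of seminorms and is the only point that needs to be spelled out in the proof.
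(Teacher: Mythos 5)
There is a genuine gap, and it is the central step: you obtain the exact Moyal inverse of $H - i\lambda$ by ``applying Proposition~\ref{appendix:mag_PsiDOs:prop:inversion}'', but in this paper that proposition cannot serve as an input to Lemma~\ref{appendix:mag_PsiDOs:lem:selfadjointness}. Its proof begins by invoking Corollary~\ref{appendix:mag_PsiDOs:cor:selfadjointness_H_D} to know that $\Hd$ is already selfadjoint on $H^1_A(\R^3,\C^4)$, and that corollary is itself deduced from the very lemma you are proving -- so your argument is circular. Moreover, Proposition~\ref{appendix:mag_PsiDOs:prop:inversion} is stated only for the specific Dirac symbols $\Hsr, \Hnr \in \Hoermr{1}{1}$; its mechanism is to start from the operator resolvent $\bigl( \Hd - z \bigr)^{-1}$, whose existence presupposes selfadjointness, and then run the Beals criterion to identify its symbol. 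It contains no machinery for manufacturing a two-sided Moyal inverse of a general symmetric elliptic $H - i\lambda$ of order $m$. The pointwise facts you do establish (invertibility of $H(x,\xi) - i\lambda$ and the seminorm bounds propagated through $\partial (H - i\lambda)^{-1} = -(H - i\lambda)^{-1} (\partial H) (H - i\lambda)^{-1}$) only give the \emph{leading term of a parametrix}: a pointwise inverse is not a $\magWnr$- or $\magWsr$-inverse, and bridging that gap is exactly the nontrivial content of the lemma. Your symmetry observation and the final range-criterion step would be fine \emph{if} the exact inverse were available, but that availability is what remains unproven.

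The paper's route avoids this entirely: ellipticity of $(x,\xi) \mapsto \inf \abs{\sigma(H(x,\xi))}$ makes the matrix $H(x,\xi)$ invertible for $\abs{\xi} \geq R$, which allows the construction of a parametrix for $H$ with respect to $\magWsr$ or $\magWnr$ as in \cite[Theorem~2.4]{Iftimie_Mantiou_Purice:magnetic_psido:2006}, and one then retraces the proof of \cite[Theorem~4.1]{Iftimie_Mantiou_Purice:magnetic_psido:2006} with obvious (matrix-valued) modifications. Concretely, a left parametrix $q$ with $q \magWnr H = 1 + r$, $r$ smoothing, shows that any $u \in L^2(\R^3,\C^N)$ with $\Opnr^A(H) u \in L^2(\R^3,\C^N)$ already lies in $H^m_A(\R^3) \otimes \C^N$, so the domain of the adjoint collapses onto $H^m_A(\R^3) \otimes \C^N$, where the operator is symmetric; no exact Moyal inverse is ever needed. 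If you want to keep your resolvent-style argument, you must replace the appeal to Proposition~\ref{appendix:mag_PsiDOs:prop:inversion} by this parametrix construction (possibly upgraded to an exact operator inverse via a Neumann series for $\abs{\lambda}$ large), and only then invoke the range criterion. The tensor-product remark \eqref{appendix:non_rel:eqn:Hoermander_classes_tensor_product} you cite is indeed the reason the matrix-valued extension is harmless, but it is not where the difficulty of this lemma lies.
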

	\begin{proof}
		The proof does not rely on the presence of a small parameter. The ellipticity of $(x,\xi) \mapsto \inf \abs{\sigma \bigl ( H(x,\xi) \bigr )}$ ensures there exists $R \geq 0$ such that the matrix $H(x,\xi)$ is invertible for all $x$ and $\abs{\xi} \geq R$. Hence, we can construct a parametrix for $H$ with respect to either $\magWsr$ or $\magWnr$ as in \cite[Theorem~2.4]{Iftimie_Mantiou_Purice:magnetic_psido:2006} and thus retrace the steps of the proof \cite[Theorem~4.1]{Iftimie_Mantiou_Purice:magnetic_psido:2006} with obvious modifications. 
	\end{proof}
	\begin{cor}\label{appendix:mag_PsiDOs:cor:selfadjointness_H_D}
		Under the assumptions of Lemma~\ref{appendix:mag_PsiDOs:lem:selfadjointness}, $\widehat{H}_{\mathrm{D}}$ and $\Hd$ define selfadjoint operators on $H^1_A(\R^3) \otimes \C^4$. 
	\end{cor}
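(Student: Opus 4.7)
The plan is to apply Lemma~\ref{appendix:mag_PsiDOs:lem:selfadjointness} directly, taking for $H$ the Dirac symbol in semi-relativistic scaling
\begin{align*}
	\Hsr(x,\xi) = m \beta + \xi \cdot \alpha + \tfrac{1}{c^2} V(x) \, \id_{\C^4}
\end{align*}
with $m = 1$, $\rho = 1$ and $N = 4$, so that $\Hd = \Opsr^A(\Hsr)$ by construction in Section~\ref{scalings}. Since $\widehat{H}_{\mathrm{D}} = c^2 \, \Hd$ differs from $\Hd$ only by multiplication by the positive scalar $c^2$, once $\Hd$ is shown to be selfadjoint on $H^1_A(\R^3) \otimes \C^4$, the same holds for $\widehat{H}_{\mathrm{D}}$ trivially.

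It remains to verify the three hypotheses of Lemma~\ref{appendix:mag_PsiDOs:lem:selfadjointness}. Symmetry of the matrix $\Hsr(x,\xi)$ is immediate: $\beta$ and the $\alpha_j$ are hermitian, $V$ is real by Assumption~\ref{intro:assumption:fields}, and $m$ is real. Membership in $\Hoermr{1}{1} \bigl ( T^* \R^3 , \mathcal{B}(\C^4) \bigr )$ follows term-by-term: $m \beta$ is constant, $\xi \cdot \alpha$ is affine in $\xi$ (its $\xi$-derivatives are constant matrices, gaining the required factor $\sqrt{1+\xi^2}^{\, -1}$ relative to $m = 1$), and $\tfrac{1}{c^2} V(x) \id_{\C^4}$ is of class $\Cont^{\infty}_{\mathrm{b}}$ by Assumption~\ref{intro:assumption:fields}, hence belongs to $\Hoermr{0}{1} \subset \Hoermr{1}{1}$.

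For the ellipticity of $(x,\xi) \mapsto \inf \abs{\sigma ( \Hsr(x,\xi) )}$: since $V(x) \id_{\C^4}$ is a scalar multiple of the identity, it commutes with $m \beta + \xi \cdot \alpha$, so the spectrum of $\Hsr(x,\xi)$ is the set of doubly-degenerate eigenvalues
\begin{align*}
	\spec \bigl ( \Hsr(x,\xi) \bigr ) = \Bigl \{ \pm \sqrt{m^2 + \xi^2} + \tfrac{1}{c^2} V(x) \Bigr \}
	.
\end{align*}
Because $V$ is bounded, for $\sabs{\xi}$ sufficiently large (uniformly in $x$) one has
\begin{align*}
	\inf \babs{\sigma \bigl ( \Hsr(x,\xi) \bigr )} \geq \sqrt{m^2 + \xi^2} - \tfrac{1}{c^2} \snorm{V}_{\infty} \geq \tfrac{1}{2} \sqrt{1 + \xi^2}
	,
\end{align*}
which is the required elliptic lower bound of order $1$.

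Only the verification of ellipticity required any thought, and even that is essentially trivial because $V$ enters only as a scalar shift. Applying Lemma~\ref{appendix:mag_PsiDOs:lem:selfadjointness} then yields that $\Hd$ is selfadjoint on $H^1_A(\R^3) \otimes \C^4$; rescaling by $c^2$ gives the same conclusion for $\widehat{H}_{\mathrm{D}}$.
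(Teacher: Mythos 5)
Your proposal is correct and follows essentially the same route as the paper: identify the symbol $\Hsr(x,\xi) = m\beta + \xi\cdot\alpha + \tfrac{1}{c^2}V(x)$, note that its eigenvalues are $\pm\sqrt{m^2+\xi^2} + \tfrac{1}{c^2}V(x)$ (the paper cites the diagonalization by $u_{\mathrm{sr}\,0}$, you use the equivalent observation that the scalar potential merely shifts the free spectrum), conclude ellipticity from the boundedness of $V$, and apply Lemma~\ref{appendix:mag_PsiDOs:lem:selfadjointness}. Your explicit verification of the symbol class, symmetry, and the trivial $c^2$ rescaling for $\widehat{H}_{\mathrm{D}}$ only spells out details the paper leaves implicit.
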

	\begin{proof}
		As we can diagonalize the matrix $\Hsr(x,\xi)$ using $u_{\mathrm{sr} \, 0}(\xi)$, the function 
		\begin{align*}
			(x,\xi) \mapsto \inf \, \babs{\sigma \bigl ( \Hsr(x,\xi) \bigr )}
			= \inf \abs{\pm \sqrt{m^2 + \xi^2} + \tfrac{1}{c^2} V(x)}
		\end{align*}
		is obviously elliptic. Hence, Lemma~\ref{appendix:mag_PsiDOs:lem:selfadjointness} yields that $\Opsr^A(\Hsr) = \Hd$ is selfadjoint on $H^1_A(\R^3) \otimes \C^4$. 
	\end{proof}
	Generalizing commutator criteria is rather easy in our context, because the symbols take values in a \emph{finite-dimensional} Banach space $\mathcal{B}(\mathcal{H}) = \mathcal{B}(\C^N)$ (as opposed to non-separable if $\dim \mathcal{H} = \infty$). 
	Equation~\eqref{appendix:non_rel:eqn:Hoermander_classes_tensor_product} immediately implies the Beals criterion for matrix-valued symbols which identifies operators which are the magnetic quantization of a $\Cont^{\infty}_{\mathrm{b}}$ function. 
	\begin{cor}[Beals criterion]
		Using the notation of \cite{Iftimie_Mantoiu_Purice:commutator_criteria:2008}, we have 
		\begin{align*}
			\Cont^{\infty}_{\mathrm{b}} \bigl ( T^* \R^d , \mathcal{B}(\C^N) \bigr ) \cong \Cont^{\infty}_{\mathrm{b}}(T^* \R^d) \otimes \mathcal{B}(\C^N) \cong \Cont^{\infty} \bigl ( \mathfrak{T}^B , \mathfrak{C}^B \bigr ) \otimes \mathcal{B}(\C^N)
			. 
		\end{align*}
	\end{cor}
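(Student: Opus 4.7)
The plan is to establish each isomorphism separately, exploiting the finite-dimensionality of $\mathcal{B}(\C^N)$ to bypass the subtle issues that would otherwise appear in tensor products of Fréchet spaces.

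For the first isomorphism, the strategy is to invoke \cite[Proposition~40.2]{Treves:topological_vector_spaces:1967} directly. Concretely, fix a basis $\{E_{ij}\}_{i,j = 1}^N$ of $\mathcal{B}(\C^N)$. Then any $f : T^* \R^d \to \mathcal{B}(\C^N)$ decomposes uniquely as $f = \sum_{i,j} f_{ij} \, E_{ij}$ with scalar components $f_{ij} : T^* \R^d \to \C$. Since $\mathcal{B}(\C^N)$ is finite-dimensional, all norms on it are equivalent, and hence $f \in \Cont^{\infty}_{\mathrm{b}}$ if and only if each $f_{ij} \in \Cont^{\infty}_{\mathrm{b}}(T^* \R^d)$. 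This identification is continuous in both directions, so the Fréchet structures agree. That gives the first $\cong$.

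For the second isomorphism, I would simply tensor the scalar Beals criterion of Iftimie, M\u{a}ntoiu and Purice with the fixed finite-dimensional algebra $\mathcal{B}(\C^N)$. The scalar statement from \cite{Iftimie_Mantoiu_Purice:commutator_criteria:2008} yields $\Cont^{\infty}_{\mathrm{b}}(T^* \R^d) \cong \Cont^{\infty}(\mathfrak{T}^B, \mathfrak{C}^B)$, characterizing when a bounded operator is the magnetic Weyl quantization of a $\Cont^{\infty}_{\mathrm{b}}$ symbol by smoothness under the twisted translations $\mathfrak{T}^B$ acting on $\mathfrak{C}^B$. Tensoring both sides with $\mathcal{B}(\C^N)$ and applying the already-established first isomorphism on the left-hand side produces the claim. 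Because $\mathcal{B}(\C^N)$ is finite-dimensional, the algebraic and projective tensor products coincide and no completion issues arise.

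The only genuine check is that the matrix-valued commutator conditions defining $\Cont^{\infty}(\mathfrak{T}^B, \mathfrak{C}^B) \otimes \mathcal{B}(\C^N)$ really do reduce to the scalar ones componentwise. This is the step that would need to be spelled out, but it is straightforward: each iterated commutator with the generators of $\mathfrak{T}^B$ acts entry-by-entry on the matrix of coefficients, so smoothness and boundedness of all such commutators is equivalent to the corresponding property of each scalar component $f_{ij}$. I expect no real obstacle beyond this bookkeeping, precisely because the extension to matrix-valued symbols is only non-trivial when the fibre algebra is infinite-dimensional; here the finite-dimensionality of $\mathcal{B}(\C^N)$ trivializes every topological question that could arise.
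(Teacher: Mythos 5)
Your proposal is correct and follows essentially the paper's own route: the first isomorphism rests on the finite-dimensionality of $\mathcal{B}(\C^N)$ (the paper phrases it via $\Cont^{\infty}_{\mathrm{b}}(T^* \R^d) = \Hoermr{0}{0}(T^* \R^d)$ and the tensor-product identity \eqref{appendix:non_rel:eqn:Hoermander_classes_tensor_product}, itself a consequence of the same Treves proposition you cite, while you argue componentwise in a basis of $\mathcal{B}(\C^N)$), and the second isomorphism is obtained exactly as in the paper by tensoring the scalar Beals criterion of Iftimie--M\u{a}ntoiu--Purice with the fixed finite-dimensional algebra. Your final ``componentwise commutator'' check is not even needed for the statement as written, since the right-hand side is by definition the tensor product of the scalar space with $\mathcal{B}(\C^N)$.
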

	\begin{proof}
		The first equivalence follows from noting $\Cont^{\infty}_{\mathrm{b}}(T^* \R^d) = \Hoermr{0}{0}(T^* \R^d)$ and  \eqref{appendix:non_rel:eqn:Hoermander_classes_tensor_product}. The second equivalence is the content of \cite[Theorem~2.5]{Iftimie_Mantoiu_Purice:commutator_criteria:2008}. 
	\end{proof}
	Similarly, one can extend the Beals- and Bony-type commutator criteria to deduce when a tempered distribution $F \in \mathcal{S}'(T^* \R^d,\mathcal{B}(\C^N))$ is really a Hörmander symbol of order $m$ and type $\rho$ (cf.~Theorems~5.2 and 5.5 in \cite{Iftimie_Mantoiu_Purice:commutator_criteria:2008}). Hence, the proof of the next Proposition follows from the arguments in \cite[Section~6]{Iftimie_Mantoiu_Purice:commutator_criteria:2008}.  
	\begin{prop}\label{appendix:mag_PsiDOs:prop:inversion}
		Let $H$, $\Op^A$, $\magW$ and ${\;}^{(-1)^B}$ denote either semi- or non-relativistic hamiltonian, quantization, Moyal product and Moyal inverse, respectively, \ie $H = \Hsr , \Hnr \in \Hoermr{1}{1}$, $\Op^A = \Opsr^A , \Opnr^A$, $\magW = \magWsr , \magWnr$ and ${\;}^{(-1)^B} = {\;}^{(-1)^B_{\mathrm{sr}}} , {\;}^{(-1)^B_{\mathrm{nr}}}$. Then the following holds: 
		\begin{enumerate}[(i)]
			\item For $z \in \C \setminus \R$, the Moyal resolvent $\bigl ( H - z \bigr )^{(-1)^B}$, \ie the distribution which satisfies 
		\begin{align*}
			\bigl ( H - z \bigr ) \magW \bigl ( H - z \bigr )^{(-1)^B} = 1 
			= \bigl ( H - z \bigr )^{(-1)^B} \magW \bigl ( H - z \bigr )
			, 
		\end{align*}
		exists as a Hörmander symbol of class $\Hoermr{-1}{1}$. 
			\item For any smooth cutoff function $\chi \in \Cont^{\infty}_{\mathrm{c}}(\R,[0,1])$ associated to a compact set $\Lambda \subset \R$, there exists a symbol $\chi_B \in \Hoermr{-\infty}{1}$ such that $\Op^A(\chi_B) = \chi(\Hd)$. 
		\end{enumerate}
	\end{prop}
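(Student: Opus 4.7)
The plan is to deduce both statements from the matrix-valued Beals--Bony commutator criteria---the extension to $\mathcal{B}(\C^4)$-valued symbols of Theorems~5.2 and 5.5 in \cite{Iftimie_Mantoiu_Purice:commutator_criteria:2008}---together with the Helffer--Sj\"ostrand formula. This follows the template of \cite[Propositions~6.3 and 6.7]{Iftimie_Mantoiu_Purice:commutator_criteria:2008} and \cite[Proposition~8.7]{Dimassi_Sjoestrand:spectral_asymptotics:1999}; the passage to matrix-valued symbols is harmless thanks to the tensor product identification \eqref{appendix:non_rel:eqn:Hoermander_classes_tensor_product}, so the scalar arguments transfer verbatim with operator norms on $\C^4$ replacing absolute values.

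For part (i), Corollary~\ref{appendix:mag_PsiDOs:cor:selfadjointness_H_D} gives selfadjointness of $\Op^A(H)$ on $H^1_A(\R^3) \otimes \C^4$, so the operator resolvent $\bigl ( \Op^A(H) - z \bigr )^{-1}$ exists and is bounded by $\abs{\Im z}^{-1}$ for every $z \in \C \setminus \R$. To identify its distributional symbol as an element of $\Hoermr{-1}{1}$, I would first construct a parametrix $Q_z$ by taking the pointwise matrix inverse $(H(x,\xi) - z)^{-1}$: since $(x,\xi) \mapsto \inf \abs{\sigma(H(x,\xi) - z)}$ grows like $\sqrt{1 + \xi^2}$ at infinity, a straightforward calculation using $\partial (A^{-1}) = - A^{-1} (\partial A) A^{-1}$ puts $Q_z$ into $\Hoermr{-1}{1}$. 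Second, by Lemma~\ref{semirel:lem:asymp_expansion_magW} (respectively Theorem~\ref{non_rel:thm:properties_magWnr}) one obtains $Q_z \magW (H - z) = 1 - s_z$ with $s_z$ of strictly smaller order. Third, invert $1 - s_z$ by a Moyal--Neumann series in the smoothing ideal to produce $(H - z)^{(-1)^B} \in \Hoermr{-1}{1}$. Tracking constants throughout yields the polynomial estimate
\begin{align*}
\snorm{(H-z)^{(-1)^B}}_{-1, a \alpha} \leq C_{a \alpha} \, \abs{\Im z}^{-N(a,\alpha)}
\end{align*}
for some $N(a,\alpha) \in \N_0$, uniformly on compact subsets of $\C \setminus \R$.

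For part (ii), choose an almost-analytic extension $\tilde{\chi} \in \Cont^{\infty}_{\mathrm{c}}(\C)$ of $\chi$, \ie $\tilde{\chi} \vert_{\R} = \chi$ and $\bar{\partial} \tilde{\chi}(z) = \order(\abs{\Im z}^N)$ for every $N$, with support in a fixed complex neighbourhood of $\supp \chi$. The Helffer--Sj\"ostrand formula then reads
\begin{align*}
\chi \bigl ( \Op^A(H) \bigr ) = \tfrac{1}{\pi} \int_{\C} \bar{\partial} \tilde{\chi}(z) \, \bigl ( \Op^A(H) - z \bigr )^{-1} \, \dd z \wedge \dd \bar{z},
\end{align*}
and the continuity of $\Op^A$ from Hörmander classes into bounded operators justifies interchanging integral and quantization, giving $\chi \bigl ( \Op^A(H) \bigr ) = \Op^A(\chi_B)$ with
\begin{align*}
\chi_B := \tfrac{1}{\pi} \int_{\C} \bar{\partial} \tilde{\chi}(z) \, (H - z)^{(-1)^B} \, \dd z \wedge \dd \bar{z}.
\end{align*}
The rapid vanishing of $\bar{\partial} \tilde{\chi}$ near the real axis compensates the polynomial blow-up from part (i), yielding absolute convergence of the integral in every seminorm $\snorm{\cdot}_{m, a \alpha}$ for arbitrary $m \in \R$. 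Hence $\chi_B \in \bigcap_{m \in \R} \Hoermr{m}{1} = \Hoermr{-\infty}{1}$.

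The main obstacle is the explicit bookkeeping of the $\abs{\Im z}^{-N(a,\alpha)}$ dependence of the Hörmander seminorms of $(H-z)^{(-1)^B}$, uniform in $z$ on compact subsets of $\C \setminus \R$, which is what makes the Helffer--Sj\"ostrand integral converge against the powers of $\abs{\Im z}$ supplied by $\bar{\partial} \tilde{\chi}$. A cleaner alternative to the Neumann-series route is to apply the matrix-valued Beals criterion directly to $\bigl ( \Op^A(H) - z \bigr )^{-1}$: one controls each iterated commutator against position and magnetic momentum by combining $\snorm{\bigl ( \Op^A(H) - z \bigr )^{-1}} \leq \abs{\Im z}^{-1}$ with the magnetic Sobolev mapping properties of $\Op^A(H)$, and each commutation introduces at most one extra resolvent factor. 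Either route produces the desired polynomial bound, and the extension from scalar to $\mathcal{B}(\C^4)$-valued symbols is painless precisely because the target algebra is finite-dimensional.
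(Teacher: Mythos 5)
Your part~(i) splits into two routes, and only the second one closes. The first route stalls at the inversion of $1 - s_z$: a Moyal--Neumann series $\sum_{n} s_z^{\magW n}$ does not converge in the Fréchet topology of any Hörmander class (each magnetic Moyal product is estimated by ever higher seminorms of the factors, and $s_z$ is not ``small'' in any Banach-algebra norm compatible with the symbol seminorms; it is also not smoothing, only of slightly negative order). What the parametrix construction actually yields is an inverse of $H - z$ modulo $\Hoermr{-\infty}{1}$ and modulo $\order(\nicefrac{1}{c^{\infty}})$, and upgrading such a parametrix to the statement that the \emph{exact} Moyal resolvent is a symbol of class $\Hoermr{-1}{1}$ is precisely the point where one needs the commutator-criterion/spectral-invariance machinery. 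Your alternative route --- apply the matrix-valued Beals--Bony criteria directly to $\bigl ( \Op^A(H) - z \bigr )^{-1}$, using selfadjointness (Corollary~\ref{appendix:mag_PsiDOs:cor:selfadjointness_H_D}), the bound $\abs{\Im z}^{-1}$ and the magnetic Sobolev mapping properties --- is essentially the paper's proof, which composes the resolvent with the weight $\mathfrak{s}_1 \otimes \id_{\C^4}$ and invokes \cite[Proposition~6.3]{Iftimie_Mantoiu_Purice:commutator_criteria:2008}; so drop route one or demote it to a motivation.

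In part~(ii) the Helffer--Sjöstrand strategy is the right one (it is what the cited proofs do), but your justification of the order $-\infty$ is not correct. For each fixed $z$ the Moyal resolvent is \emph{exactly} of order $-1$: its $\Hoermr{m}{1}$-seminorms are infinite for every $m < -1$, so ``absolute convergence of the integral in every seminorm $\snorm{\cdot}_{m, a\alpha}$'' cannot hold --- the rapid vanishing of $\bar{\partial} \tilde{\chi}$ near $\R$ only compensates the $\abs{\Im z}^{-N}$ blow-up and delivers $\chi_B \in \Hoermr{-1}{1}$ (or $\Hoermr{0}{1}$), not $\Hoermr{-\infty}{1}$. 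The extra decay in $\xi$ comes from a different mechanism: iterate the resolvent identity $\bigl ( H - z \bigr )^{(-1)^B} = \bigl ( H - i \bigr )^{(-1)^B} + (z - i) \, \bigl ( H - z \bigr )^{(-1)^B} \magW \bigl ( H - i \bigr )^{(-1)^B}$, so that the polynomial-in-$z$ contributions are killed by the integration against $\bar{\partial} \tilde{\chi}$ while the remainder carries $N$ Moyal factors of order $-1$; equivalently, observe that the leading pointwise term $\chi \bigl ( H(x,\xi) \bigr )$ vanishes identically for large $\abs{\xi}$ because $\sigma \bigl ( H(x,\xi) \bigr )$ leaves the compact support of $\chi$. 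This is exactly the ``last part of the proof of \cite[Proposition~8.7]{Dimassi_Sjoestrand:spectral_asymptotics:1999}'' that the paper instructs one to adapt; with that step supplied your argument coincides with the paper's.
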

	\begin{proof}
		The proofs do not rely on the presence of small parameters and thus hold for both, semi- and non-relativistic pseudodifferential operators. 
		\begin{enumerate}[(i)]
			\item Let $z \in \C \setminus \R$. Due to Assumption~\ref{intro:assumption:fields} on $\mathbf{B}$, $V$ and $A$ and Corollary~\ref{appendix:mag_PsiDOs:cor:selfadjointness_H_D}, we know that $\Op^A(H) = \Hd = {\Hd}^*$ defines a selfadjoint operator on $H^1_A(\R^3,\C^4)$, the first magnetic Sobolev space, and hence $z \not \in \sigma(\Hd) \subseteq \R$. Hence, the resolvent $\bigl ( \Hd - z \bigr )^{-1}$ exists as a bounded operator on $L^2(\R^3,\C^4)$ and maps $L^2(\R^3,\C^4)$ onto $H^1_A(\R^3,\C^4) = H^1_A(\R^3) \otimes \C^4$. The symbol of the resolvent 
			\begin{align*}
				\bigl ( H - z \bigr )^{(-1)^B} := {\Op^A}^{-1} \Bigl ( \bigl ( \Hd - z \bigr )^{-1} \Bigr ) 
			\end{align*}
			exists in $\mathcal{S}' \bigl ( T^* \R^3 , \mathcal{B}(\C^4) \bigr )$ by the Schwartz kernel theorem and is also an element of the magnetic Moyal algebra $\mathfrak{M}^B$ (cf.~\cite[Definition~2.1]{Iftimie_Mantoiu_Purice:commutator_criteria:2008}). 
			By \cite[Theorem~3.14]{Iftimie_Mantiou_Purice:magnetic_psido:2006}, $\mathfrak{s}_1 \otimes \id_{\C^4}$ defined as in \cite[equation~(5.3)]{Iftimie_Mantoiu_Purice:commutator_criteria:2008} is a continuous map from $H^1_A(\R^3,\C^4)$ to $L^2(\R^3,\C^4)$. Thus, the quantization of the product  
			\begin{align*}
				\bigl ( \mathfrak{s}_1 \otimes \id_{\C^4} \bigr ) \magW \bigl ( H - z \bigr )^{(-1)^B} 
			\end{align*}
			is an element of $\mathfrak{C}^B \otimes \mathcal{B}(\C^4)$ where $\mathfrak{C}^B := {\Op^A}^{-1} \Bigl ( \mathcal{B} \bigl ( L^2(\R^3) \bigr ) \Bigr )$. This means, the assumptions of \cite[Proposition~6.3]{Iftimie_Mantoiu_Purice:commutator_criteria:2008} are satisfied and we obtain $\bigl ( H - z \bigr )^{(-1)^B} \in \Hoermr{-1}{1} \bigl ( T^* \R^3 , \mathcal{B}(\C^4) \bigr )$. 
			\item Repeating the arguments in \cite[Proposition~6.7]{Iftimie_Mantoiu_Purice:commutator_criteria:2008} yields $\chi_B \in \Hoermr{0}{1}$. To see that $\chi_B$ is really a Hörmander symbol of order $-\infty$, one has to adapt the last part of the proof of \cite[Proposition~8.7]{Dimassi_Sjoestrand:spectral_asymptotics:1999}. 
		\end{enumerate}
	\end{proof}
	%
	% subsection Extension to matrix-valued symbols (end)

	\subsection{Asymptotic expansions of the magnetic Moyal product} % (fold)
	\label{appendix:mag_PsiDOs:asymptotic_expansions}
	As explained in the previous section, the results of \cite{Lein:two_parameter_asymptotics:2008} extend straightforwardly to matrix-valued symbols: there, the magnetic Weyl calculus associated to the building block operators 
	\begin{align}
		\mathsf{P}^A &= - i \eps \nabla_x - \lambda A(\hat{x}) 
		\label{appendix:mag_PsiDOs:eqn:building_block_ops_eps_lambda}
		\\
		\mathsf{Q} &= \hat{x} 
		\notag 
	\end{align}
	was considered. The associated Weyl product $\magW$ can then be expanded in $\eps$, $\lambda$ or $\eps$ \emph{and} $\lambda$ simultaneously. If $f$ and $g$ are two Hörmander symbols of order $m_1$ and $m_2$, and type $\rho \in (0,1]$, the expansion of $f \magW g$ in $\eps$ and $\lambda$ yields 
	\begin{align*}
		f \magW g \asymp \sum_{n = 0}^{\infty} \sum_{k = 0}^n \eps^n \lambda^k \, \bigl ( f \magW g \bigr )_{(n,k)}
	\end{align*}
	where the $\bigl ( f \magW g \bigr )_{(n,k)} \in \Hoermr{m_1 + m_2 - (n+k) \rho}{\rho}$ are known explicitly (cf.~\cite[Theorem~1.1]{Lein:two_parameter_asymptotics:2008}). 
	
	We will only need the first few terms of the expansion: the leading-order contribution is the point-wise product, 
	\begin{align*}
		(f \magW g)_{(0,0)} &= f \, g 
		,
	\end{align*}
	while the first-order correction in $\eps$ combine to the \emph{magnetic} Poisson bracket, 
	\begin{align*}
		(f \magW g)_{(1,0)} &= - \tfrac{i}{2} \bigl \{ f , g \bigr \} 
		:= - \frac{i}{2} \sum_{j = 1}^d \bigl ( \partial_{\xi_l} f \, \partial_{x_j} g - \partial_{x_j} f \, \partial_{\xi_j} g \bigr ) = - \tfrac{i}{2} \bigl \{ f , g \bigr \}  \\
		(f \magW g)_{(1,1)} &= + \frac{i}{2} \sum_{l , j = 1}^d B_{lj} \, \partial_{\xi_l} f \,  \partial_{\xi_j} g 
		. 
	\end{align*}
	Here, we have identified the magnetic field vector $\mathbf{B} = (\mathbf{B}_1,\mathbf{B}_2,\mathbf{B}_3)$ with the antisymmetric matrix 
	\begin{align*}
		B = \bigl ( B_{jk} \bigr )_{1 \leq j , k \leq 3} 
		= \left (
		\begin{matrix}
			0 & -\mathbf{B}_3 & \mathbf{B}_2 \\
			\mathbf{B}_3 & 0 & -\mathbf{B}_1 \\
			-\mathbf{B}_2 & \mathbf{B}_1 & 0 \\
		\end{matrix}
		\right )
		. 
	\end{align*}
	The second-order terms in $\eps$ contain at least three derivatives of $f$ and $g$ in momentum, 
	\begin{align*}
		(f \magW g)_{(2,0)} &= + \frac{1}{4} \sum_{l , j = 1}^d \Bigl ( 
		\partial_{\xi_l} \partial_{\xi_j} f \,  \partial_{x_l} \partial_{x_j} g + 
		\partial_{x_l} \partial_{x_j} f \,  \partial_{\xi_l} \partial_{\xi_j} g + 
		\Bigr . \\
		\Bigl . &\qquad \qquad \qquad - 
		\partial_{\xi_l} \partial_{x_j} f \,  \partial_{x_l} \partial_{\xi_j} g - 
		\partial_{x_l} \partial_{\xi_j} f \,  \partial_{\xi_l} \partial_{x_j} g 
		\Bigr ) \\
		(f \magW g)_{(2,1)} &= + \frac{1}{4} \sum_{j , k , l = 1}^d \Bigl ( \tfrac{1}{3} \partial_{x_j} B_{lk} \, \bigl ( 
		\partial_{\xi_l} \partial_{\xi_j} f \, \partial_{\xi_k} g -
		\partial_{\xi_l} f \,                  \partial_{\xi_j} \partial_{\xi_k} g \bigr ) + \Bigr . \\
		\Bigl . &\qquad \qquad \qquad - 
		B_{lk} \, \bigl ( 
		\partial_{\xi_l} \partial_{\xi_j} f \, \partial_{\xi_k} \partial_{x_j} g - 
		\partial_{\xi_l} \partial_{x_j} f \, \partial_{\xi_k} \partial_{\xi_j} g  \bigr ) 
		\Bigr ) \\
		(f \magW g)_{(2,2)} &= - \frac{1}{8} \sum_{j_1 , j_2 , l_1 , l_2 = 1}^d B_{l_1 j_1} \, B_{l_2 j_2} \, \partial_{\xi_{l_1}} \partial_{\xi_{l_2}} f \, \partial_{\xi_{j_1}} \partial_{\xi_{j_2}} g 
		. 
	\end{align*}
	For the computations, we will not need the explicit form of the second-order terms in $\eps$, just that for the $(2,1)$ and $(2,2)$ terms, either $f$ or $g$ are derived twice with respect to momentum.

	\subsubsection{Asymptotic expansion of $\magWsr$} % (fold)
	\label{appendix:mag_PsiDOs:expansion_magWsr}
	The two-parameter expansion in $\eps$ and $\lambda$ can be used to find an expansion of $\magWsr$ in $\nicefrac{1}{c}$: comparing the semi-relativistic building block operators \eqref{semirel:eqn:building_block_ops} to \eqref{appendix:mag_PsiDOs:eqn:building_block_ops_eps_lambda}, we see that $\eps$ is replaced by $\tfrac{\eps}{c}$ and $\lambda$ by $\tfrac{1}{c^2}$. This and a little book-keeping of Hörmander classes make up the proof of Lemma~\ref{semirel:lem:asymp_expansion_magW}: 
	\begin{proof}[Proof of Lemma~\ref{semirel:lem:asymp_expansion_magW}]
		Even though the proofs in \cite{Lein:two_parameter_asymptotics:2008} pertain to scalar-valued symbols, they can be adapted to the case where $f$ and $g$ are operator-valued symbols with obvious modifications. 
		\begin{enumerate}[(i)]
			\item The $n$th term of the expansion
			\begin{align*}
				\bigl ( f \magWsr g \bigr )_{(n)} = \sum_{3k \leq n} \eps^{n - 2k} \, \bigl ( f \magW g \bigr )_{(n-2k,k)}
			\end{align*}
			consists of a sum of terms where the $(n-2k,k)$th term is an element of $\Hoermr{m_1 + m_2 - (n - k) \rho}{\rho}$. Then since $k \leq \nicefrac{n}{3}$ and $\Hoermr{m'}{\rho} \subseteq \Hoermr{m}{\rho}$ for all $m' \leq m$ holds, we have shown $\bigl ( f \magWsr g \bigr )_{(n)} \in \Hoermr{m_1 + m_2 - (n - \nicefrac{n}{3}) \rho}{\rho} = \Hoermr{m_1 + m_2 - \frac{2}{3} n \rho}{\rho}$. 

			By \cite[Theorem~1.1]{Lein:two_parameter_asymptotics:2008}, the remainder of the two-parameter expansion is an element of $\Hoermr{m_1 + m_2 - (N+1) \rho}{\rho}$. Furthermore, the difference $\tilde{R}_N(f,g) - \eps^{N+1} R_N(f,g)$ consists of extraneous terms of the original asymptotic expansion such that $n \geq N+1$ and $3k \leq n$. Thus, we conclude that for all $n$ and $k$, the terms $\bigl ( f \magW g \bigr )_{(n - 2k,k)}$ that make up the difference are all contained in 
			\begin{align*}
				\Hoermr{m_1 + m_2 - (n - k) \rho}{\rho} \subseteq \Hoermr{m_1 + m_2 - (n - \nicefrac{n}{3}) \rho}{\rho} \subseteq \Hoermr{m_1 + m_2 - \frac{2}{3} (N+1) \rho}{\rho} 
			\end{align*}
			and the remainder is in the correct symbol class.
			\item This follows directly from (i) and remarking that for fixed $k + l + j = n$, the term $\bigl ( f_k \magW g_l \bigr )_{(j)}$ has $\tfrac{1}{c^{k + l + j}} = \tfrac{1}{c^n}$ as prefactor and is of symbol class $\Hoermr{m_1 + m_2 - \frac{2}{3}(k + l + j) \rho}{\rho} = \Hoermr{m_1 + m_2 - \frac{2}{3} n \rho}{\rho}$. 
			\item This is a straightforward consequence of the form of the remainder as given in (i) and \cite[Proof of Theorem~1.1]{Lein:two_parameter_asymptotics:2008}. 
		\end{enumerate}
	\end{proof}
	%
	% subsubsection Asymptotic expansion of $\magWsr$ (end)

	\subsubsection{Asymptotic expansion of $\magWnr$} % (fold)
	\label{appendix:mag_PsiDOs:expansion_magWnr}
	In the semi-relativistic scaling, $\eps$ in \eqref{appendix:mag_PsiDOs:eqn:building_block_ops_eps_lambda} remains $\eps$ and $\lambda$ becomes $\nicefrac{1}{c}$. The expansion of the product $\magWnr$ is then the expansion of $\magW$ in $\lambda$ (cf.~\cite[Theorem~2.12]{Lein:two_parameter_asymptotics:2008}). The following Corollary will be useful for the calculations in Appendix~\ref{appendix:non_rel}: 
	\begin{cor}\label{appendix:mag_PsiDOs:cor:asymp_expansion_magWnr}
		Let $f \in \Hoerr{m_1}$, $g \in \Hoerr{m_2}$, $m_1 , m_2 \in \R$, $\rho \in (0,1]$, and assume the components of $\mathbf{B}$ are $\Cont^{\infty}_{\mathrm{b}}$ functions. Then the following statements hold true: 
		\begin{enumerate}[(i)]
			\item If one of the factors, \eg $f \in \Cont^{\infty}_{\mathrm{b}}(\R^3)$, depends on position only, then $f \magWnr g = \bigl ( f \magWnr g \bigr )_{(0)} = f \WeylProd_{\mathrm{nr}}^{B = 0} g$ reduces to the non-magnetic Weyl product. 
			\item If $f , g \in \Cont^{\infty}_{\mathrm{b}}(\R^3)$ are functions of position only, then $f \magWnr g = f \, g$ reduces to the pointwise product of functions. 
			\item If $f$ or $g$ is a polynomial in $\xi$ of finite order, \eg if 
			\begin{align}
				f(x,\xi) = \sum_{\abs{\alpha} \leq m_1} b_{\alpha} \, \xi^{\alpha}
				, 
				\label{non_rel:eqn:polynomial_function}
			\end{align}
			then the expansion of $f \magWnr g$ terminates after at most $m_1$ terms and each $(f \magWnr g)_{(j)}$ can be written as a finite linear combination of the $(f \magW g)_{(n,j)}$ from \cite[Theorem~1.1]{Lein:two_parameter_asymptotics:2008} with $j \leq n \leq m_1$. 
		\end{enumerate}
	\end{cor}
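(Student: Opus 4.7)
The plan is to extract all three assertions from the explicit formulas for the two-parameter Moyal terms $(f \magW g)_{(n,k)}$ collected in Appendix~\ref{appendix:mag_PsiDOs:asymptotic_expansions}, combined with the observation recorded in Appendix~\ref{appendix:mag_PsiDOs:expansion_magWnr} that $\magWnr$ is obtained from $\magW$ by substituting $\lambda = \nicefrac{1}{c}$ (keeping $\eps$ unchanged). Consequently the $j$-th coefficient in the $\nicefrac{1}{c}$-expansion reads
\begin{align*}
	(f \magWnr g)_{(j)} = \sum_{n \geq j} \eps^n \, \bigl ( f \magW g \bigr )_{(n,j)},
\end{align*}
and each of (i)--(iii) follows from identifying which $(n,k)$-terms on the right-hand side vanish for the class of $f$ under consideration.

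For (i), I would inspect each explicit formula for $(f \magW g)_{(n,k)}$ with $k \geq 1$ and observe that every summand carries at least one $\partial_{\xi_l} f$: in the magnetic Moyal expansion each factor $B_{lj}$ is contracted with a $\partial_{\xi_l}$ on $f$ and a $\partial_{\xi_j}$ on $g$. When $f$ depends on position only these factors vanish identically, killing the entire $k \geq 1$ portion of the expansion. What remains is the $k = 0$ series, which is by construction the asymptotic expansion of the field-free non-relativistic Weyl product $f \WeylProd_{\mathrm{nr}}^{B = 0} g$, proving (i). For (ii) the same observation is then applied symmetrically to $g$, and in addition the non-magnetic terms $(f \magW g)_{(n,0)}$ with $n \geq 1$ have to be dealt with: each of them is an iterated symplectic Poisson bracket in which every summand contains a $\partial_\xi f$ or a $\partial_\xi g$, so all such summands vanish when $f$ and $g$ depend on position only, and only the leading pointwise term $(f \magW g)_{(0,0)} = fg$ survives.

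For (iii), the strategy is to exploit that the $x$-independent polynomial hypothesis on $f$ annihilates every $\partial_x$-derivative of $f$ and caps the order of surviving $\partial_\xi$-derivatives of $f$ at $m_1$. The structural input extracted from \cite[Theorem~1.1]{Lein:two_parameter_asymptotics:2008} is a combinatorial lower bound on the number of $\xi$-derivatives on $f$ appearing in each summand of $(f \magW g)_{(n,k)}$, growing with $n$ and $k$; once this lower bound is forced above $m_1$ whole families of summands drop out. Combined with the constraint $k \leq n$ intrinsic to the two-parameter expansion, this leaves only $(n,j)$ with $j \leq n \leq m_1$ contributing to $(f \magWnr g)_{(j)}$, yielding the asserted finite linear combination and the termination of the $\nicefrac{1}{c}$-expansion after at most $m_1$ orders.

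The main obstacle lies in (iii): the precise derivative bookkeeping needed to show which summands of $(f \magW g)_{(n,k)}$ survive when $f$ is a constant-coefficient polynomial in $\xi$ of finite order requires reading the combinatorial structure off either the oscillatory-integral form of the magnetic Moyal product or the explicit formula in \cite{Lein:two_parameter_asymptotics:2008}. Items (i) and (ii), by contrast, reduce to direct inspection of the formulas collected in Appendix~\ref{appendix:mag_PsiDOs:asymptotic_expansions} and are immediate.
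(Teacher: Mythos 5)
Your handling of (i) and (ii) follows the same derivative--inspection idea as the paper, but you run it entirely on the two-parameter asymptotic terms $(f \magW g)_{(n,k)}$ together with the formal resummation $(f \magWnr g)_{(j)} = \sum_{n \geq j} \eps^n \, (f \magW g)_{(n,j)}$. This only shows that the \emph{asymptotic expansions} of $f \magWnr g$ and of $f \WeylProd^{B=0}_{\mathrm{nr}} g$ (resp.\ $f\,g$) agree term by term; statements (i) and (ii) are exact identities for the product itself, so the remainder must be dealt with as well, and your argument leaves a possible $\order(\nicefrac{1}{c^{\infty}})$ discrepancy unaccounted for. The paper avoids this by inspecting the closed one-parameter expansion \emph{with remainder}, equation~(2.11) of \cite{Lein:two_parameter_asymptotics:2008}, in which every contribution of order $\geq 1$ in $\nicefrac{1}{c}$, remainder included, visibly carries momentum derivatives of both factors; for (ii) one can also simply note that the non-magnetic Weyl product of two $x$-only symbols is exactly the pointwise product, since both quantize to multiplication operators.

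The genuine gap is in (iii). Your plan rests on a claimed lower bound on the number of $\xi$-derivatives hitting the polynomial factor $f$ in each summand of $(f \magW g)_{(n,k)}$, growing with $n$ and $k$. No such bound holds: already $(f \magW g)_{(2,1)}$ contains the summand proportional to $\partial_{x_j} B_{lk} \, \partial_{\xi_l} f \, \partial_{\xi_j} \partial_{\xi_k} g$, in which the additional order in $\eps$ is spent on derivatives of $\mathbf{B}$ and of $g$ while $f$ receives only a \emph{single} momentum derivative; so for a first-order polynomial $f$ and a generic second factor these terms do not drop out for $n > m_1$, and your route to the restriction $j \leq n \leq m_1$ collapses. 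What derivative counting does give --- and what the paper actually uses --- is that $(f \magW g)_{(n,j)}$ carries at least $j$ momentum derivatives on \emph{each} factor, which yields the vanishing of $(f \magWnr g)_{(j)}$ for $j > m_1$, i.e.\ the termination of the $\nicefrac{1}{c}$-expansion; the finiteness and exactness in $n$ is precisely the point at which the paper invokes Theorem~2.13 of \cite{Lein:two_parameter_asymptotics:2008} on products involving polynomial symbols, an input your proposal neither cites nor replaces by a valid argument. (A minor further mismatch: you impose $x$-independence of the coefficients $b_{\alpha}$, which the corollary does not assume and which is not needed for the termination part.)
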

	\begin{proof}
		\begin{enumerate}[(i)]
			\item An inspection of equation~(2.11) of \cite{Lein:two_parameter_asymptotics:2008} yields that except for $n = 0$, all other terms contain derivatives of $f$ and $g$ with respect to momentum. Hence, only the term for $n = 0$ survives. On the other hand, $f \magWnr g = \bigl ( f \magWnr g \bigr )_{(0)}$ coincides with the non-magnetic Weyl product $\WeylProd_{\mathrm{nr}}^{B = 0}$. 
			\item By (i), we have $f \magWnr g = f \WeylProd_{\mathrm{nr}}^{B = 0} g$. Then upon expanding the non-magnetic Moyal product (\eg by setting $\lambda = 0$, thus keeping only terms of the type $(f \magW g)_{(n,0)}$ in \cite[Theorem~1.1]{Lein:two_parameter_asymptotics:2008}), we see that $f \WeylProd_{\mathrm{nr}}^{B = 0} g = f \, g$. 
			\item The claim follows from Theorem 2.13 of \cite{Lein:two_parameter_asymptotics:2008} and the fact that all terms $(f \magW g)_{(n,j)}$ contain at least $j$ derivatives of $f$ and $g$ with respect to momentum. 
		\end{enumerate}
	\end{proof}
	%
	% subsubsection Asymptotic expansion of $\magWnr$ (end)
	% subsection Asymptotic expansions of the magnetic Moyal product (end)
	% section Magnetic $\Psi$DOs (end)

	\section{Calculations for the non-relativistic limit} % (fold)
	\label{appendix:non_rel}
	Calculating $\pi^{(3)}$, $u^{(3)}$ and $\heff$ up to fourth order is only possible because at least one of the functions in $(f \magWnr g)_{(n)}$ which occur in the various perturbation expansions is a polynomial in $\xi$ and independent of $x$. We have collected some of the necessary calculation rules in Corollary~\ref{appendix:mag_PsiDOs:cor:asymp_expansion_magWnr}. Furthermore, we will use the following notation to streamline presentation: the $n$th order term of the expansion of the Moyal commutator is abbreviated by 
	\begin{align*}
		\bigl [ f , g \bigr ]_{(n)} := \bigl ( f \magWnr g \bigr )_{(n)} - \bigl ( g \magWnr f \bigr )_{(n)} 
		. 
	\end{align*}
	Likewise, the pointwise commutator and anti-commutator of matrices are denoted by 
	\begin{align*}
		\bigl [ f , g \bigr ] &:= f \, g - g \, f 
		\\
		\bigl [ f , g \bigr ]_+ &:= f \, g + g \, f 
		. 
	\end{align*}

	\subsection{Almost Moyal projection $\pi^{(k)}$} % (fold)
	\label{appendix:non_rel:projection}
	Given $\pi^{(k)}$, we can compute the next-order term $\pi_{k+1}$ from the »projection defect« $G_{k+1}$ (see~\eqref{non_rel:eqn:projection_defect_comp}) and the »commutation defect« $F_{k+1}$ (see~\eqref{non_rel:eqn:commutation_defect_comp}) using equations~\eqref{non_rel:eqn:proj_diag_comp} and \eqref{non_rel:eqn:proj_offdiag_comp}, respectively. 
		
	Since $\pi_0$ is a constant matrix-valued function, the Moyal product $\magWnr$ reduces to the pointwise product of matrices and thus the projection defect of $\pi^{(0)} = \pi_0$ and consequently the diagonal part of $\pi_1$ both vanish identically. By direct computation, one obtains $F_1 = (\xi \cdot \alpha) \, \beta$ and thus $\pi_1$ is purely off-diagonal, 
	\begin{align*}
		\pi_1 = \pi_1^{\mathrm{od}} = \frac{1}{2m} \, \xi \cdot \alpha 
		. 
	\end{align*}
	Only one of the terms that make up the second-order projection defect survive, 
	\begin{align*}
		G_2 &= \bigl ( \pi_1 \magWnr \pi_1 \bigr )_{(0)} + \bigl ( \pi_0 \magWnr \pi_1 \bigr )_{(1)} + \bigl ( \pi_1 \magWnr \pi_0 \bigr )_{(1)} + \bigl ( \pi_0 \magWnr \pi_0 \bigr )_{(2)} 
		\\
		&= \pi_1^2 = \frac{1}{4 m^2} \, \xi^2 \, \id_{\C^4} 
		, 
	\end{align*}
	and seeing as $\pi_1$ depends only on momentum, $\bigl ( \pi_1 \magWnr \pi_1 \bigr )_{(0)} = \pi_1 \WeylProd^{B = 0}_{\mathrm{nr}} \pi_1 = \pi_1^2$ reduces to the pointwise product of matrices. Hence, the diagonal part of $\pi_2$ turns out to be 
	\begin{align*}
		\pi_2^{\mathrm{d}} &= - \frac{1}{4 m^2} \, \xi^2 \, \beta
		. 
	\end{align*}
	Exploiting the explicit form of the $H_j$ and $\pi_k$ as well as the properties of the asymptotic expansion of $\magWnr$ (Corollary~\ref{appendix:mag_PsiDOs:cor:asymp_expansion_magWnr}) yields 
	\begin{align*}
		F_2 &= \bigl [ H_1 , \pi_1 \bigr ]_{(0)} + \bigl [ H_2 , \pi_0 \bigr ]_{(0)} 
		+ \\
		&\qquad 
		+ \bigl [ H_0 , \pi_1 \bigr ]_{(1)} + \bigl [ H_1 , \pi_0 \bigr ]_{(1)} + \bigl [ H_0 , \pi_0 \bigr ]_{(2)} 
		% \\
		% &
		= 0 
	\end{align*}
	and thus the off-diagonal part of $\pi_2$ vanishes, 
	\begin{align*}
		\pi_2 = \pi_2^{\mathrm{d}} = - \frac{1}{4 m^2} \, \xi^2 \, \beta 
		. 
	\end{align*}
	In a similar vein, the third-order projection projection defect simplifies to 
	\begin{align*}
		G_3 &= \bigl [ \pi_1 , \pi_2 \bigr ]_+ + \bigl ( \pi_1 \magWnr \pi_1 \bigr )_{(1)}
		= \eps \bigl ( \pi_1 \magW \pi_1 \bigr )_{(1,1)} 
		\\
		&
		= \eps \frac{i}{8 m^2} \, \sum_{j , l = 1}^3 B_{jl} \; \partial_{\xi_j} (\xi \cdot \alpha) \; \partial_{\xi_l} (\xi \cdot \alpha)
		= - \frac{\eps}{4 m^2} \, \mathbf{B} \cdot \Sigma 
		, 
	\end{align*}
	and we obtain 
	\begin{align*}
		\pi_3^{\mathrm{d}} &= \frac{\eps}{4 m^2} \, (\mathbf{B} \cdot \Sigma) \beta
	\end{align*}
	for the diagonal part of $\pi_3$. Similarly, one can check that 
	\begin{align*}
		F_3 &= \bigl [ H_1 , \pi_2 \bigr ]_{(0)} + \bigl [ H_2 , \pi_1 \bigr ]_{(0)} 
		\\
		&= - \frac{1}{2m^2} \, \xi^2 \, (\xi \cdot \alpha) \beta  + \frac{\eps}{2m} \, i \nabla_x V \cdot \alpha 
	\end{align*}
	and therefore the off-diagonal part computes to be 
	\begin{align*}
		\pi_3^{\mathrm{od}} &= - \frac{1}{4 m^3} \, \xi^2 \, (\xi \cdot \alpha)  + \frac{\eps}{2 m^2} \, (i \nabla_x V \cdot \alpha) \beta 
		. 
	\end{align*}
	Combining diagonal and off-diagonal contributions yields 
	\begin{align*}
		\pi_3 &= \frac{\eps}{4 m^2} \, (\mathbf{B} \cdot \Sigma) \beta - \frac{1}{4 m^3} \, \xi^2 \, (\xi \cdot \alpha)  + \frac{\eps}{2 m^2} \, (i \nabla_x V \cdot \alpha) \beta 
		, 
	\end{align*}
	and we have shown equation~\eqref{non_rel:eqn:pi_3_explicit}. 
	% subsection »Projection« (end)

	\subsection{Almost Moyal unitary $u^{(k)}$} % (fold)
	\label{appendix:non_rel:unitary}
	From the unitarity defect $A_k$ (see~\eqref{non_rel:eqn:unitarity_defect_comp}) and intertwining defect $B_k$ (see~\eqref{non_rel:eqn:intertwining_defect_comp}), we can compute symmetric and anti-symmetric part of $u_k = a_k + b_k$ (corresponding to first and second term in \eqref{non_rel:eqn:u_k_1_comp}). 
	
	From $u_0 = \id_{\C^4}$, we easily compute 
	\begin{align*}
		u_1 &= 0 - \frac{1}{2m} \, (\xi \cdot \alpha) \beta 
		= - \frac{1}{2m} \, (\xi \cdot \alpha) \beta 
		. 
	\end{align*}
	The second-order term is purely symmetric, 
	\begin{align*}
		u_2 = - \tfrac{1}{2} \bigl ( u_1 \magWnr u_1^* \bigr )_{(0)} = - \tfrac{1}{2} {u_1}^2 = - \frac{1}{8 m^2} \, \xi^2 \, \id_{\C^4} 
		, 
	\end{align*}
	since the intertwining defect happens to vanish identically, 
	\begin{align*}
		B_2 &= \bigl ( \pi_0 \magWnr a_2 \bigr )_{(0)} + \bigl ( a_2 \magWnr \pi_0 \bigr )_{(0)} + \bigl ( u_1 \magWnr \pi_1 \bigr )_{(0)} + \bigl ( \pi_1 \magWnr u_1 \bigr )_{(0)} 
		+ \\
		&\qquad 
		+ \Bigl ( \bigl ( u_0 \magWnr \pi_2 \bigr )_{(0)} \magWnr u_0^* \Bigr )_{(0)} + \Bigl ( \bigl ( u_1 \magWnr \pi_0 \bigr )_{(0)} \magWnr u_1^* \Bigr )_{(0)} 
		\\
		&= \bigl [ \pi_0 , a_2 \bigr ]_+ + \bigl [ u_1 , \pi_1 \bigr ] + \pi_2 - u_1 \, \pi_0 \, u_1 
		\\
		&= \frac{1}{4 m^2} \xi^2 \Bigl ( - \piref + 2 \beta - \beta + \bigl ( \id_{\C^4} - \piref \bigr ) \Bigr )
		= 0 
		. 
	\end{align*}
	Obtaining the third-order projection defect is still relatively easy: of 8 terms, only 3 are non-zero, and we get 
	\begin{align*}
		G_3 &= \bigl ( u_1 \magWnr u_2^* \bigr )_{(0)} + \bigl ( u_2 \magWnr u_1^* \bigr )_{(0)} + \bigl ( u_1 \magWnr u_1^* \bigr )_{(1)} 
		\\
		&= \bigl [ u_1 , u_2 \bigr ] - \bigl ( u_1 \magW u_1 \bigr )_{(1,1)}
		= 0 - \frac{\eps}{8 m^2} \, \sum_{j , k = 1}^3 B_{jk} \, \partial_{\xi_j} \bigl ( (\xi \cdot \alpha) \beta \bigr ) \, \partial_{\xi_k} \bigl ( (\xi \cdot \alpha) \beta \bigr ) 
		\\
		&= - \frac{\eps}{4 m^2} \, \mathbf{B} \cdot \Sigma 
		. 
	\end{align*}
	This implies the symmetric contribution to $u_3$ is $a_3 = + \frac{\eps}{8 m^2} \, \mathbf{B} \cdot \Sigma$. Computing the intertwining defect is a bit more involved: of 44 terms, only twelve are potentially not equal to $0$. Pairing some of these remaining terms in commutators and anticommutators allows us to calculate $B_3$ more easily: 
	\begin{align*}
		B_3 &= \bigl ( \pi_0 \magWnr a_3 \bigr )_{(0)} + \bigl ( a_3 \magWnr \pi_0 \bigr )_{(0)} 
		+ \bigl ( u_1 \magWnr \pi_2 \bigr )_{(0)} + \bigl ( \pi_2 \magWnr u_1^* \bigr )_{(0)} 
		+ \\
		&\qquad 
		+ \bigl ( \pi_1 \magWnr u_2^* \bigr )_{(0)} + \bigl ( u_2 \magWnr \pi_1 \bigr )_{(0)} 
		+ \\
		&\qquad 
		+ \Bigl ( \bigl ( u_1 \magWnr \pi_0 \bigr )_{(0)} \magWnr u_2^* \Bigr )_{(0)} + \Bigl ( \bigl ( u_2 \magWnr \pi_0 \bigr ) \magWnr u_1^* \Bigr )_{(0)}
		+ \\
		&\qquad 
		+ \Bigl ( \bigl ( u_0 \magWnr \pi_3 \bigr )_{(0)} \magWnr u_0^* \Bigr )_{(0)}
		+ \Bigl ( \bigl ( u_1 \magWnr \pi_1 \bigr )_{(0)} \magWnr u_1^* \Bigr )_{(0)}
		+ \\
		&\qquad 
		+ \bigl ( u_1 \magWnr \pi_1 \bigr )_{(1)} + \bigl ( \pi_1 \magWnr u_1^* \bigr )_{(1)} 
		\\
		&= \bigl [ \pi_0 , a_3 \bigr ]_+ + \bigl [ u_1 , \pi_2 \bigr ] + \bigl [ u_2 , \pi_1 \bigr ]_+ 
		+ \\
		&\qquad 
		+ \bigl [ u_1 , \pi_0 \bigr ] \, u_2 + \pi_3 - u_1 \, \pi_1 \, u_1 + \bigl [ u_1 , \pi_1 \bigr ]_{(1)}
		\\
		&= - \frac{3}{16 m^3} \, \xi^2 \, (\xi \cdot \alpha) + \frac{\eps}{4 m^2} \, (i \nabla_x V \cdot \alpha) \beta - \frac{\eps}{4 m^2} \, (\mathbf{B} \cdot \Sigma) ( \pi_0 + \beta )
	\end{align*}
	Since $\beta$ and $\pi_0 = \piref$ commute with $\piref$, the last term does not contribute, and we obtain 
	\begin{align*}
		b_3 &= \bigl [ B_3 , \piref \bigr ] 
		\\
		&= + \frac{3}{16 m^3} \, \xi^2 \, (\xi \cdot \alpha) \beta + \frac{\eps}{4 m^2} \, (i \nabla_x V \cdot \alpha) 
	\end{align*}
	for the antisymmetric part of $u_3$. Put together, we get 
	\begin{align*}
		u_3 = \frac{\eps}{8 m^2} \, \mathbf{B} \cdot \Sigma + \frac{3}{16 m^3} \, \xi^2 \, (\xi \cdot \alpha) \beta + \frac{\eps}{4 m^2} \, (i \nabla_x V \cdot \alpha) 
	\end{align*}
	for the third-order correction. This proves equation~\eqref{non_rel:eqn:u_3_explicit}. 
	% subsection »Projection« (end)

	\subsection{Effective hamiltonian $\heff$} % (fold)
	\label{appendix:non_rel:heff}
	The diagonalized hamiltonian $h := \bigl ( u \magWnr \Hnr \magWnr u^* \bigr )^{(4)}$ can be computed recursively from 
	\begin{align}
		h_{k} &= \bigl ( h_{k} \magWnr u_0 \bigr )_{(0)} 
		% \notag \\
		% &
		= c^k \bigl ( u \magWnr \Hnr - h^{(k-1)} \magWnr u \bigr ) + \order(\nicefrac{1}{c}) 
		\notag \\
		&
		= H_k + \sum_{\substack{j + l + n = k \\ l \leq k-1}} \Bigl ( \bigl ( u_j \magWnr H_l \bigr )_{(n)} - \bigl ( h_l \magWnr u_j \bigr )_{(n)} \Bigr )
		+ \order(\nicefrac{1}{c})
		\label{appendix:non_rel:recursion_relation_h_k}
	\end{align}
	which has the benefit that one only needs to compute a \emph{single} Moyal product rather than a \emph{double} Moyal product. Due to the specific structure of the terms and the non-relativistic pseudodifferential calculus, many of the terms vanish identically. 
	
	Since $u_0 = \id_{\C^4}$, the leading-order terms equals $h_0 = H_0 = m \beta$. The recursion for $h_1$ yields 
	\begin{align*}
		h_1 &= H_1 + \bigl ( u_1 \magWnr H_0 \bigr )_{(0)} - \bigl ( h_0 \magWnr u_1 \bigr )_{(0)} + \bigl ( u_0 \magWnr H_0 \bigr )_{(1)} - \bigl ( h_0 \magWnr u_0 \bigr )_{(1)} 
		\\
		&
		= \xi \cdot \alpha - \frac{1}{2m} (\xi \cdot \alpha) \beta \, m \beta - m \beta \, \left ( - \frac{1}{2 m} \right ) (\xi \cdot \alpha) \beta 
		= 0 
		. 
	\end{align*}
	The second-order term is also easily obtained, 
	\begin{align*}
		h_2 &= H_2 + \bigl ( u_1 \magWnr H_1 \bigr )_{(0)} + \bigl [ u_2 , H_0 \bigr ]_{(0)} 
		\\
		&= V - \frac{1}{2 m} (\xi \cdot \alpha) \beta (\xi \cdot \alpha)
		= \frac{1}{2m} \, \xi^2 \, \beta + V 
		. 
	\end{align*}
	Of 18 terms which comprise the third-order term, all but 6 vanish since they contain $u_0$, $h_0$, $h_1 = 0$ or $H_0$ which are constant functions of $x$ and $\xi$. Computing the remaining terms after pairing them whenever appropriate yields 
	\begin{align*}
		h_3 &= \bigl ( u_1 \magWnr H_2 \bigr )_{(0)} - \bigl ( h_2 \magWnr u_1 \bigr )_{(0)} + \bigl ( u_2 \magWnr H_1 \bigr )_{(0)} + \bigl [ u_3 , H_0 \bigr ]_{(0)} + \bigl ( u_1 \magWnr H_1 \bigr )_{(1)} 
		\\
		&= \bigl [ u_1 , V \bigr ]_{(0)} - \frac{1}{2m} \, \xi^2 \, \beta \, u_1 + u_2 \, H_1 + \bigl [ u_3 , H_0 \bigr ] + \bigl ( u_1 \magWnr H_1 \bigr )_{(1)} 
		\\
		&= \frac{\eps}{2m} \, (i \nabla_x V \cdot \alpha) \beta 
		- \frac{1}{4 m^2} \, \xi^2 \, (\xi \cdot \alpha)
		- \frac{1}{8 m^2} \, \xi^2 \, (\xi \cdot \alpha) 
		+ \\
		&\qquad 
		+ \frac{3}{8 m^2} \, \xi^2 \, (\xi \cdot \alpha) - \frac{\eps}{2m} \, (i \nabla_x V \cdot \alpha) \beta 
		- \frac{\eps}{2 m} \, (\mathbf{B} \cdot \Sigma) \beta 
		\\
		&= - \frac{\eps}{2 m} \, (\mathbf{B} \cdot \Sigma) \beta 
		. 
	\end{align*}
	We do not need to calculate $h_4$, but only $h_{\mathrm{eff} \, 4}$. This simplifies the problem tremendously, because (i) we may leave out terms which are off-diagonal with respect to the splitting induced by $\piref$ and (ii) we do not need to \emph{compute} $\pi_4$ and $u_4$. Note that the \emph{existence} of $\pi_4$ and $u_4$ are guaranteed by Propositions~\ref{non_rel:prop:existence_pi} and \ref{non_rel:prop:existence_u}, respectively. Again, using that $H_0 = h_0$ and $u_0$ are constant symbols as well as $h_1 = 0$ cuts down the number of terms for $h_{\mathrm{eff} \, 4}$ from 26 to 9, 
	\begin{align*}
		h_{\mathrm{eff} \, 4} &= \piref \, h_4 \, \piref 
		\\
		&= \piref \, \Bigl ( 
		\bigl ( u_4 \magWnr H_0 \bigr )_{(0)} - \bigl ( h_0 \magWnr u_4 \bigr )_{(0)} 
		+ \bigl ( u_3 \magWnr H_1 \bigr )_{(0)} 
		+ \Bigr . 
		\\
		&\qquad \qquad \Bigl . 
		+ \bigl ( u_2 \magWnr H_2 \bigr )_{(0)} - \bigl ( h_2 \magWnr u_2 \bigr )_{(0)} 
		- \bigl ( h_3 \magWnr u_1 \bigr )_{(0)} 
		+ \Bigr . 
		\\
		&\qquad \qquad \Bigl . 
		+ \bigl ( u_2 \magWnr H_1 \bigr )_{(1)} 
		+ \bigl ( u_1 \magWnr H_2 \bigr )_{(1)} 
		+ \bigl ( u_1 \magWnr H_1 \bigr )_{(2)} 
		\Bigr ) \, \piref 
		\\
		&= \piref \, \Bigl ( \bigl [ u_4 , H_0 \bigr ] + \bigl ( u_3 \magWnr H_1 \bigr )_{(0)} + \bigl [ u_2 , V \bigr ]_{(0)} - \tfrac{1}{2 m} \xi^2 \, \beta \, u_2 
		+ \Bigr . 
		\\
		&\qquad \qquad \Bigl . 
		- \bigl ( h_3 \magWnr u_1 \bigr )_{(0)} 
		+ \bigl ( u_2 \magWnr H_1 \bigr )_{(1)} 
		% + (u_1 \magWnr H_2)_{(1)} + (u_1 \magWnr H_1)_{(2)} 
		\Bigr ) \, \piref 
		. 
	\end{align*}
	The reason we do not need to compute $u_4$ is related to the fact that $h_{\mathrm{eff} \, 0} = m \, \piref = H_0 \piref$ is scalar-valued and thus the commutator 
	\begin{align*}
		\piref \, \bigl [ u_4 , H_0 \bigr ] \, \piref = \piref \, \bigl [ u_4 , h_{\mathrm{eff} \, 0} \bigr ] \, \piref 
		= 0
	\end{align*}
	vanishes identically. The next two terms 
	\begin{align*}
		\piref \, \bigl ( h_3 \magWnr u_1 \bigr )_{(0)} \, \piref &= 0 
		\\
		\piref \, \bigl ( u_2 \magWnr H_1 \bigr )_{(1)} \, \piref &= 0 
	\end{align*}
	also do not contribute since these products yield completely off-diagonal-matrix-valued functions. Lastly, the term 
	\begin{align*}
		\bigl ( u_1 \magWnr H_1 \bigr )_{(2)} = 0 
	\end{align*}
	is identically equal to $0$ even before projecting onto the upper-left block matrix. This is because $u_1$ and $H_1$ are linear in $\xi$ and independent of $x$ while second-order term of the expansion of $\magWnr$ contains only second-order derivatives of $u_1$ and $H_1$. 
	
	Now to the computation of the non-trivial terms: keeping only block diago\-nal terms, we obtain 
	\begin{align*}
		\piref \, \bigl ( u_3 &\magWnr H_1 \bigr )_{(0)} \, \piref 
		= \\
		&= \biggl ( - \frac{3}{16} \, \abs{\xi}^4 - \frac{\eps}{4 m^2} (i \nabla_x V \cdot \xi) + \frac{\eps}{4 m^2} (\nabla_x V \wedge \xi) \cdot \Sigma + \frac{\eps^2}{8 m^2} \Delta V 
		\biggr ) \, \piref 
		. 
	\end{align*}
	Using that $u_2$ is a second-order polynomial in $\xi$ and independent of $x$ as well as that $V$ depends only on position and is proportional to $\id_{\C^4}$, we conclude only one term of the $\eps$ expansion of 
	\begin{align*}
		\piref \, \bigl [ u_2 , V \bigr ]_{(0)} \, \piref &= - \eps \, i \, \piref \, \bigl \{ u_2 , V \bigr \} \, \piref 
		\\
		&= \frac{\eps}{4 m^2} \, (i \nabla_x V \cdot \xi) \, \piref 
	\end{align*}
	contributes. Last, but not least, keeping only block diagonal terms, we get 
	\begin{align*}
		\piref \, \beta \, u_2 \, \piref &= \piref \, u_2 \, \piref 
		= - \frac{1}{8 m^2} \, \xi^2 \, \piref 
		. 
	\end{align*}
	Combining all these terms, we obtain the fourth-order term of $\heff$, 
	\begin{align*}
		h_{\mathrm{eff} \, 4} &= \biggl ( \frac{-3+1}{16 m^3} \abs{\xi}^4 
		+ \frac{\eps}{4 m^2} \, (\nabla_x V \wedge \xi) \cdot \Sigma 
		\, + \biggr . \\
		&\qquad \qquad \biggl . 
		+ \frac{\eps \, (1 - 1)}{4 m^2} \, (i \nabla_x V \cdot \xi) + \frac{\eps^2}{8 m^2} \, \Delta V
		\biggr ) \, \piref 
		\\
		&= \biggl ( - \frac{1}{8 m^3} \abs{\xi}^4 + \frac{\eps}{4 m^2} \, (\nabla_x V \wedge \xi) \cdot \Sigma + \frac{\eps^2}{8 m^2} \, \Delta V
		\biggr ) \, \piref 
		, 
	\end{align*}
	thereby showing equation~\eqref{non_rel:eqn:heff_explicit}. 
	% subsection Effective hamiltonian (end)
	% section Non-relativistic case (end)
\end{appendix}

\printbibliography

\end{document}